\documentclass[12pt]{article}

\usepackage[margin=1in,
  includefoot,
  headsep=15pt 
]{geometry}
\usepackage{graphicx,psfrag,verbatim}
\usepackage{amsmath,amssymb,amsthm}
\usepackage{amsfonts,dsfont,bm}
\usepackage{algorithm,algorithmic}
\usepackage[small,bf]{caption}
\usepackage{subcaption} 
\usepackage[dvipsnames]{xcolor}
\usepackage{hyperref}
\hypersetup{
    colorlinks=true,
    linkcolor=Red,     
    urlcolor=Blue,
    citecolor={Green},
}
\usepackage{url}
\usepackage[toc,page]{appendix}
\usepackage{cases}
\usepackage[shortlabels]{enumitem}
\usepackage{fancyhdr}
\usepackage{multirow}
\usepackage{booktabs}  
\usepackage{adjustbox}
\usepackage{secdot}
\allowdisplaybreaks

\def\suda{{\bfseries \footnotesize SUDA}}
\newtheorem{theorem}{{Theorem}}
\newtheorem{lemma}{{Lemma}}
\newtheorem{corollary}{{Corollary}}
\newtheorem{assumption}{{ Assumption}}
\newtheorem{remark}{{Remark}}

\usepackage{cite}
\bibliographystyle{ieeetr}


\DeclareMathOperator*{\minimize}{minimize}

\newcommand{\grad}{{\nabla}}   
\newcommand{\zero}{\mathbf{0}}  
\newcommand{\one}{\mathbf{1}}   
\newcommand{\real}{\mathbb{R}}  


\newcommand{\diag}{\mathrm{diag}}  
\newcommand{\bdiag}{\mathrm{blkdiag}}  
\newcommand{\col}{\mathrm{col}}     

\newcommand{\Ex}{\mathds{E\hspace{0.05cm}}}  



\def\tran{^{\mathsf{T}}}  
\newcommand{\qd}{\hfill{$\square$}}
\newcommand{\define}{\;\stackrel{\Delta}{=}\;} 

\newcommand{\eg}{{\it e.g.}}
\newcommand{\ie}{{\it i.e.}}


\def\bxi        {{\boldsymbol \xi}}

\def\A{{\mathbf{A}}}
\def\B{{\mathbf{B}}}
\def\C{{\mathbf{C}}}
\def\D{{\mathbf{D}}}

\def\F{{\mathbf{F}}}
\def\G{{\mathbf{G}}}

\def\I{{\mathbf{I}}}

\def\P{{\mathbf{P}}}

\def\U{{\mathbf{U}}}
\def\V{{\mathbf{V}}}
\def\W{{\mathbf{W}}}

\def\a{{\mathbf{a}}}
\def\b{{\mathbf{b}}}

\def\e{{\mathbf{e}}}
\def\f{{\mathbf{f}}}
\def\g{{\mathbf{g}}}

\def\s{{\mathbf{s}}}

\def\u{{\mathbf{u}}}

\def\w{{\mathbf{w}}}
\def\x{{\mathbf{x}}}
\def\y{{\mathbf{y}}}
\def\z{{\mathbf{z}}}


\newcommand{\cF}{{\mathcal{F}}}

\newcommand{\cN}{{\mathcal{N}}}

\newcommand{\cU}{{\mathcal{U}}}


\pdfoutput=1

\title{\bfseries A Unified and Refined Convergence Analysis for Non-Convex Decentralized Learning}
\author{Sulaiman A. Alghunaim \\
\small Kuwait University \\
\texttt{\small sulaiman.alghunaim@ku.edu.kw}
\and Kun Yuan \\
\small DAMO Academy, Alibaba Group \\
\texttt{\small kun.yuan@alibaba-inc.com} \vspace{5mm}\\
}
\begin{document}

\maketitle
                                                        
\begin{abstract}
We study the  consensus decentralized optimization problem where the objective function is the average  of $n$ agents private non-convex cost functions; moreover, the agents can only communicate to their neighbors on a given network topology.  The  stochastic learning setting is considered in this paper where each agent can only access a noisy estimate of its gradient.  Many decentralized methods can solve such problem including EXTRA, Exact-Diffusion/D$^2$, and gradient-tracking. Unlike the famed \textsc{Dsgd} algorithm, these methods have been shown to be  robust to the  heterogeneity across the local cost functions.  However, the established convergence rates for these methods  indicate that their sensitivity to the network topology is worse than \textsc{Dsgd}. Such theoretical results imply that these methods can perform much worse than \textsc{Dsgd} over sparse networks, which, however, contradicts empirical experiments where \textsc{Dsgd} is observed to be more sensitive to the network topology.

In this work, we study a general \underline{s}tochastic \underline{u}nified \underline{d}ecentralized \underline{a}lgorithm (\suda) that includes the above  methods as special cases. We establish the convergence of \suda~under both non-convex and the Polyak-\L{}ojasiewicz condition settings. Our results provide improved network topology dependent bounds for these methods (such as Exact-Diffusion/D$^2$ and gradient-tracking) compared with existing literature.  Moreover, our results show that these methods are often less sensitive to the network topology compared to \textsc{Dsgd}, which agrees with numerical experiments.

\end{abstract}


\section{Introduction}
In a distributed multi-agent optimization problem, the inputs  (\eg, functions, variables, data) are spread over multiple computing agents (\eg, nodes, processors) that are connected over some network, and the agents are required to communicate with each other to solve this problem. Distributed  optimization have attracted a lot of attention due to the need of developing efficient methods to solve large-scale optimization problems \cite{boyd2011admm} such as in deep neural networks applications \cite{li2014scaling}. Decentralized optimization methods are algorithms where the agents seek to find a solution through local interactions  (dictated by the network connection) with their neighboring agents.     Decentralized methods have several advantages over centralized methods, which require all the agents to communicate with a central coordinator, such as their robustness to failure and privacy. Moreover, decentralized methods have been  shown to enjoy lower communication cost  compared to centralized methods under certain practical scenarios \cite{lian2017can,assran2019stochastic,chen2021accelerating}.

 In this work, we consider a network (graph) of  $n$ collaborative agents (nodes) that are interested in solving the following distributed stochastic optimization problem: 
\begin{equation} \label{min_learning_prob}
 \begin{aligned} 
 \minimize_{x \in \real^d} \quad  f(x)=\frac{1}{n} \sum_{i=1}^n f_i(x), \quad f_i(x)\define\Ex_{\xi_i} [F_i(x;\xi_i)]. 
\end{aligned}
\end{equation}
In the above formulation, $F_i:\real^d \rightarrow \real$ is a smooth non-convex function privately known by agent $i$. The notation $\Ex_{\xi_i}$ is the expected value of the random variable $\xi_i$ (\eg, random data samples) taken with respect to some local distribution. The above formulation is known as the consensus formulation  since the agents share a common variable, which they need to agree upon \cite{boyd2011admm}. We consider {\em decentralized methods} where the agents aim to find a solution of \eqref{min_learning_prob} through local interactions  (each agent can only send and receive information to its immediate neighbors).

Two important measures of the performance of distributed (or decentralized) methods are the {linear speedup} and { transient time}. A decentralized  method is said to achieve {\em linear speedup} if the gradient computational complexity needed to reach certain accuracy reduces linearly with the network size $n$. The {\em transient time} of a distributed method is the number of iterations needed  to achieve linear speedup.   A  common method  for solving problem \eqref{min_learning_prob} is the decentralized/distributed stochastic gradient descent (\textsc{Dsgd}) method \cite{ram2010distributed,cattivelli2010diffusion}, where each agent employs a local stochastic gradient descent update and a local gossip step (there are several variations  based on the order of the gossip step such as diffusion or consensus methods \cite{cattivelli2010diffusion
,chen2013distributed,nedic2009distributed}).  \textsc{Dsgd} is simple to implement; moreover, it has been shown to achieve linear speedup asymptotically \cite{lian2017can}. This implies that the convergence rate of \textsc{Dsgd} asymptotically achieves the same network independent rate as  the centralized (also known as parallel) stochastic gradient descent (\textsc{Psgd}) with a central coordinator.   While being attractive, \textsc{Dsgd} suffers from an error or bias term caused by the heterogeneity between the local cost functions minimizers (\eg, heterogeneous data distributions across the agents) \cite{chen2013distributed,yuan2016convergence}. The existence of such  bias term will slow down the convergence of \textsc{Dsgd}, and hence enlarge its transient time.

    Several bias-correction methods have been proposed to remove the bias of \textsc{Dsgd}  such as EXTRA \cite{shi2015extra}, Exact-Diffusion (ED) (a.k.a D$^2$ or NIDS)  \cite{yuan2019exactdiffI,li2017nids,yuan2020influence,tang2018d}, and gradient-tracking (GT) methods  \cite{xu2015augmented,di2016next,qu2017harnessing,nedic2017achieving}. Although these methods have been extensively studied, their convergence properties have not been fully understood as we now explain. Under convex stochastic settings,  ED/D$^2$ is theoretically shown to improve upon the transient time of \textsc{Dsgd}   \cite{yuan2021removing,huang2021improving}, especially under sparse topologies.  However,  existing non-convex results imply that  ED/D$^2$ has worse transient time compared to \textsc{Dsgd} for sparse networks  \cite{tang2018d}.  Moreover,  the transient time of GT-methods are theoretically worse than \textsc{Dsgd} under sparse networks even under convex settings  \cite{pu2021distributed}. These existing theoretical results imply that under non-convex settings, bias-correction methods can suffer from worse transient time compared to \textsc{Dsgd}. However, empirical results suggest that both ED/D$^2$ and GT methods outperform \textsc{Dsgd} under sparse topologies (without any acceleration) \cite{lu2021optimal,xin2021improved,tang2018d,xin2021fast}.  This phenomenon is yet to be explained.

In this work, we provide a novel unified convergence analysis  of several  decentralized  bias-correction methods  including both ED/D$^2$ and GT methods under {\em non-convex} settings. We establish refined and improved convergence rate bounds over existing results. Moreover, our results show that bias-correction methods such as  Exact-Diffusion/D$^2$ and GT methods have better network topology dependent bounds compared to \textsc{Dsgd}. We also study these  methods under the  Polyak-\L{}ojasiewicz (PL) condition \cite{Pol63,karimi2016linear} and provide refined bounds over existing literature.  Before we state our main contributions, we will go over the related works.

\subsection{Related Works}
There exists many works that study decentralized optimization methods under deterministic settings (full knowledge of gradients) -- see  \cite{nedic2009distributed,shi2015extra,
nedic2017achieving,scutari2019distributed,scaman2019optimal
,alghunaim2019linearly,arjevani2020ideal,alghunaim2019decentralized,xu2021distributed}  and references therein.   For example,  the works \cite{alghunaim2019decentralized,xu2021distributed,sundararajan2018canonical,sundararajan2020analysis,jakovetic2019unification}  propose unified frameworks that cover several state-of-the-art-methods and study their convergence, albeit under deterministic and convex settings. This work considers {\em nonconvex} costs and  focuses on the   {\em stochastic learning} setting where each agent has access to a random estimate of its gradient at each iteration. For this setting, \textsc{Dsgd}  is the most widely studied and understood method  \cite{bianchi2012convergence,chen2015onthepart1,chen2015onthepart2
,sayed2014nowbook,tatarenko2017non,swenson2020distributed
,pu2019sharp,vlaski2019distributedII,jiang2017collaborative
,lian2017can,koloskova2019decentralized
,assran2019stochastic,koloskova2020unified,wang2021cooperative}. Under non-convex settings, the transient time of \textsc{Dsgd}  is on the order of $O(n^3/(1-\lambda)^4)$  \cite{assran2019stochastic,lian2017can,koloskova2020unified}, where  $1 - \lambda \in (0,1)$ denotes the 
the network spectral gap 
that measures the connectivity of the network topology (\eg, it goes to zero for sparse networks).  As a result, improving the  convergence rate dependence  on the network topology quantity $1-\lambda$  is 
crucial to enhance the transient time of decentralized methods.

The severe dependence on the network topology in \textsc{Dsgd} is  caused by the data heterogeneity between different agents \cite{yuan2020influence,koloskova2020unified}. Consequently, the dependence on the network topology can be ameliorated by removing the bias caused by data heterogeneity. For example, the transient time of ED/D$^2$ has been shown to have enhanced dependence on the network topology compared to \textsc{Dsgd} \cite{yuan2021removing,huang2021improving} under convex settings. However, it is unclear whether bias-correction methods can achieve the same results for non-convex settings \cite{tang2018d,zhang2019decentralized,lu2019gnsd,
lu2020decentralized,xin2021improved,yi2020primal}. In fact, the established transient time of bias-correction methods such as ED/D$^2$ and GT in literature are even worse than that of \textsc{Dsgd}. For instance, the best known transient time for both ED/D$^2$ and GT is on the order of $O(n^3/(1-\lambda)^6)$ \cite{tang2018d,xin2021improved}, which is  worse than \textsc{Dsgd} with transient time $O(n^3/(1-\lambda)^4)$. These counter-intuitive results naturally motivates us to study whether ED/D$^2$ and GT can enjoy an enhanced dependence on the network topology in the non-convex setting. It is also worth noting that the  dependence on network topology established in existing GT references are worse than \textsc{Dsgd} even for convex scenarios \cite{pu2021distributed}.  This work provides  refined and enhanced  convergence rates for both ED/D$^2$ and GT (as well as other methods such as EXTRA) under the non-convex setting.


   In this work, we also study the convergence properties of decentralized methods  under the Polyak-\L{}ojasiewicz (PL) condition \cite{Pol63}.  The PL condition can hold for non-convex costs, yet it can be used to establish similar convergence  rates  to strong-convexity rates \cite{karimi2016linear}. For strongly-convex settings, the works \cite{huang2021improving,yuan2021removing} showed that the transient time of ED/D$^2$ is on the order of  $O(n/1-\lambda)$. These are the best available network bounds for decentralized methods so far for strongly-convex settings. It is still unclear, whether bias-correction methods can achieve similar bounds to \textsc{Dsgd} under the PL condition. For example, the work \cite{xin2021improved}  shows that under the PL condition, GT methods have transient time on the order $O(n/(1-\lambda)^3)$.

We remark that this work only considers non-accelerated decentralized methods with a {\em single} gossip round per iteration. It has been shown that combining GT methods with  {\em multiple} gossip rounds can further improve the dependence on network topology \cite{lu2021optimal,xin2021stochastic}, and this technique can also be incorporated into our studied algorithm and its analysis.
However, it is worth noting that the utilization of multiple gossip rounds in decentralized stochastic methods might suffer from several limitations. First, it requires the knowledge of the quantity $\lambda$ to decide the number of gossip rounds per iteration, which, however, might not be available in practice. Second, the multiple gossip rounds  update can take even more time than a global average operation. For example,  the experiments provided in \cite[Table 17]{chen2021accelerating} indicate that, under a certain practical scenario, one gossip step requires half or third the communication overhead of a centralized \textsc{Ring-Allreduce} operation \cite{patarasuk2009bandwidth}, which conducts global averaging. This implies that decentralized methods with as much as two or three gossip rounds per iteration can be more costly than global averaging.  Third, the theoretical improvements brought by multiple gossip rounds rely heavily on gradient accumulation. Such gradient accumulation can easily cause large batch-size which are empirically  and theoretically  found to be harmful for generalization performance on unseen dataset \cite{you2017large,gurbuzbalaban2021heavy}.

\begin{table}[t]  
\renewcommand{\arraystretch}{2}
\begin{center}
\caption{\small Comparison with existing {\em non-convex} convergence  rates highlighting the  network quantities. Here, $\varsigma_0^2 = \tfrac{1}{n} \sum_{i=1}^n \big\| \grad f_i(0)-\grad f(0) \big\|^2$ and $\varsigma^2$ satisfies $\tfrac{1}{n} \sum_{i=1}^n \big\| \grad f_i(x)-\grad f(x) \big\|^2 \leq \varsigma^2$ for all $x \in \real^{d}$ for \textsc{Dsgd}. The quantity $\lambda=\rho(W-\tfrac{1}{n} \one \one\tran)$ is the mixing rate of the network where $W$ is the network combination matrix.  Compared with GT methods our result assumes that $W$ is symmetric and positive-semidefinite.  }
\begin{adjustbox}{max width=\textwidth}
\begin{tabular}{cccc} \toprule
 {\sc method}  & {\sc Work} & {\sc Convergence rate}  & {\sc Transient time} \\ \midrule
 \textsc{Dsgd}  & \cite{koloskova2020unified}
        & $
	  	 O\left(\frac{1}{\sqrt{nK}}+\frac{\lambda^{2/3}}{(1-\lambda)^{1/3} K^{2/3}}+\frac{\lambda^{2/3}\varsigma^{2/3}}{(1-\lambda)^{2/3} K^{2/3}}\right)$
	  	 & $O\left(\frac{n^3}{(1-\lambda)^4}\right)$ \vspace{2mm}
	  	 \\ \hline 
  \multirow{2}{*}{ED/D$^2$} & \cite{tang2018d}  &  $O \left(  \frac{ 1 }{ \sqrt{n K}}	  
	 	   	 	  	 +  \frac{n  \lambda^2     }{ (1-\lambda)^3  K} 
	 	  	  +   \frac{     n  \varsigma_0^2}{  (1-\lambda)^2 K^2} \right)$ & $O\left(\frac{n^3}{(1-\lambda)^6}\right)$     \\ 
       &    \textbf{This work}       & $O \left(  \frac{ 1 }{ \sqrt{n K}}		 + \frac{n  \lambda^2     }{ (1-\lambda)  K} 
      	 	  	  +    \frac{     n \lambda^2  \varsigma_0^2}{  (1-\lambda)^2 K^2} \right)$
	 	  	  & $O\left( \frac{n^3}{(1-\lambda)^2} \right)$
	 	  	     \vspace{2mm}  \\ \hline
  \multirow{2}{*}{ATC-GT}                 &   \cite{xin2021improved}     &    
$	  	O \left(  \frac{1}{ \sqrt{n K}}	  
	 	    + \frac{n  \lambda^2     }{(1-\lambda)^3 K}
	 	  	  +    \frac{     \lambda^4  \sum_{i=1}^n \| \grad f_i(0) \|^2}{  (1-\lambda)^3 K^2} \right)$  
	 	  	  & 
	 	  	 $O\left(\frac{n^3}{(1-\lambda)^6}\right)$ \\
         & \textbf{This work}   & $O \left(  \frac{1}{ \sqrt{n K}}	  
	 	  	 + \frac{n  \lambda^4     }{ (1-\lambda)K} 
	 	  	  +  \frac{n  \lambda^4    }{ (1-\lambda)^4 K^2} 
	 	  	  
	 	  	  +    \frac{   n  \lambda^4  \varsigma_0^2}{  (1-\lambda)^3 K^2} \right)$
	 	  	  & $O\left( \max \left\{\frac{n^3}{(1-\lambda)^2} ,~\frac{n}{(1-\lambda)^{8/3}} \right\}\right)$
	 	  	\vspace{2mm}  \\ \bottomrule
\end{tabular}
\end{adjustbox}
\label{table_non_convex}
\end{center}
\end{table}


\subsection{Main Contributions}
 Our main contributions are formally listed below.
\begin{itemize}
\item  We unify the analysis of several  well-known  decentralized methods under {\em non-convex} and {\em stochastic} settings. In particular,   we study the convergence properties of a general primal-dual algorithmic framework, called stochastic unified decentralized algorithm (\suda), which includes several existing methods  such as EXTRA, ED/D$^2$,  and GT methods as special cases. 

\item We provide a novel analysis technique for these type of methods. In particular,  we employ several novel transformations to \suda~that are key to establish our refined convergence rates bounds (see Remark \ref{remark:gt_difference}).  Our analysis provides improved network dependent bounds for the special cases of \suda~such as  \textsc{ED/D$^2$} and \textsc{GT} methods compared to existing best known results. In addition, the established transient time of ED/D$^2$ and ATC-GT have improved  network topology dependence compared to \textsc{Dsgd} -- see Table \ref{table_non_convex}.

\item  We also study the convergence properties of  \suda~under the PL condition. When specifying \suda~to  ED/D$^2$, we achieve network dependent bound matching the best known bounds established under strongly-convex settings. When specifying \suda~to GT methods, we achieve an improved network dependent bound compared to current results even under strong-convexity. Table \ref{table_pl} compares the transient times network dependent bounds under the PL or strongly-convex setting. 

\end{itemize}

\begin{table}[t]  
\renewcommand{\arraystretch}{1.5}
\begin{center}
\caption{ \small  Comparison with existing network dependent transient times under both strongly-convex and PL condition settings. Here, the quantity $\lambda=\rho(W-\tfrac{1}{n} \one \one\tran)$ is the mixing rate of the network where $W$ is the network combination matrix.   Compared with GT methods our result assumes that $W$ is symmetric and positive-semidefinite.  }
\begin{tabular}{cccc} \toprule
 {\sc method}  & {\sc Work} & {\sc Assumption}  & {\sc Transient time} \\ \midrule
 \textsc{Dsgd}  & \cite{koloskova2020unified}
        & Strongly-convex 
        & $O\left(\frac{n}{(1-\lambda)^2}\right)$
	  \vspace{1mm}	 \\ \hline 
  \multirow{2}{*}{ED/D$^2$} & \cite{yuan2021removing,huang2021improving}   &  Strongly-convex  
  & $O\left(\frac{n}{1-\lambda}\right)$ \\ 
       &    \textbf{This work}       & PL condition
	 	  	  & $O\left(\frac{n}{1-\lambda}\right)$
	 	  	     \vspace{1mm}  \\ \hline
  \multirow{3}{*}{GT}         &   \cite{pu2021distributed}     &    
Strongly-convex 
	 	  	  & 
	 	  	 $O\left(\frac{n}{(1-\lambda)^3}\right)$\\        &   \cite{xin2021improved}     &    
PL condition 
	 	  	  & 
	 	  	 $O\left(\frac{n}{(1-\lambda)^3}\right)$ 
	 	  	  \\
         & \textbf{This work}   & PL condition
	 	  	  & $O\left( \max \left\{\frac{n}{1-\lambda} ,~\frac{1}{(1-\lambda)^{4/3}} \right\}\right)$
	 	   \\  \bottomrule
\end{tabular}
\label{table_pl}
\end{center}
\end{table}



{\bf Notation.} Vectors and scalars are denoted by lowercase letters. Matrices are denoted using uppercase letters. We use $\col\{a_1,\ldots,a_n\}$ (or $\col\{a_i\}_{i=1}^n$) to denote the vector that stacks the vectors (or scalars) $a_i$ on top of each other. We use $\diag\{d_1,\ldots,d_n\}$ (or $\diag\{d_i\}_{i=1}^n$) to denote a diagonal matrix with diagonal elements $d_i$. We also use $\bdiag\{D_1,\ldots,D_n\}$ (or $\bdiag\{D_i\}_{i=1}^n$) to denote a block diagonal matrix with diagonal blocks $D_i$. The vector of all ones with size $n$ is denoted by $\one_n$ (or $\one$ and size is known from context). The inner product of two vectors $a$ and $b$ is denoted by $\langle a,b \rangle$. The Kronecker product operation is denoted by $\otimes$. For a square matrix $A$, we let $\rho(A)$ denote the spectral radius of $A$, which is the largest absolute value of its eignevalues. Upright bold symbols (\eg, $\x,\f,\W$) are used to denote augmented network quantities.

\section{General Algorithm Description}
In this section, we describe the deterministic form of the studied algorithm and list several specific instances of interest to us.
\subsection{General Algorithm} \label{sec:uda}
 To describe the algorithm, we introduce the network quantities:
\begin{subequations}
\begin{align}
\x& \define \col\{x_1,\dots,x_n\} \in \real^{dn}, \\
\f(\x)& \define \sum_{i=1}^n f_i(x_i).
\end{align}
\end{subequations}
We also  introduce the matrix  $\B \in \real^{dn \times dn}$ that satisfies
\begin{align} \label{null_B}
\B \x&=\zero \iff x_1=x_2=\dots=x_n.
\end{align}
Using the previous definitions, the general algorithmic framework can be described as follows. Set an arbitrary initial estimate $\mathbf{x}^{0} \in \real^{dn}$ and set $\mathbf{y}^{0}=\zero$. Repeat for $k=0,1,\ldots$
\begin{subequations} \label{UDA_alg}
\begin{align} 
\mathbf{x}^{k+1} &=  \A \big(\C \mathbf{x}^{k}-\alpha \grad \mathbf{f}(\mathbf{x}^{k}) \big)  -  \mathbf{B} \mathbf{y}^{k}, \label{x_UDA}  \\
\mathbf{y}^{k+1} &= \mathbf{y}^{k}+ \mathbf{B}  \mathbf{x}^{k+1}. \label{dual_UDA} 
\end{align}
\end{subequations} 
Here, $\alpha>0$ is the step size (learning rate), and the matrices $\A \in \real^{dn \times dn}$ and $\C \in \real^{dn \times dn}$ are  doubly stochastic matrices that are chosen according to the network combination matrix introduced next. 

\subsection{Network Combination matrix}
 We let $W=[w_{ij}] \in \real^{n \times n}$ denote the network combination (weighting) matrix assumed to be symmetric. Here, the $(i,j)$th entry $w_{ij} \geq 0$ is used by agent $i$ to scale information received from agent $j$. We consider a  decentralized setup where $w_{ij}=0$ if $j \notin \cN_i$ where $\cN_i$ is the neighborhood of agent $i$. If we introduce the augmented combination matrix $
\W=W \otimes I_d \in \real^{dn \times dn}$, then
the matrices $\A,\B,\C$ can be chosen as a function of $\W$ to recover several existing decentralized methods. Note that if $\u=\col\{u_i\}_{i=1}^n$ where $u_i \in \real^d$ is local to agent $i$, then, the $i$th block of $\W\u=\col\{\sum_{j \in \cN_i} w_{ij} u_j\}_{i=1}^n$ can be computed by agent $i$ through local interactions with its neighbors.

\subsection{Relation to Existing Decentralized Methods} 
  Below, we list several important well-known decentralized  algorithms that are covered in our framework. Please see Appendix \ref{app:relation_to_other_methods}  for more details.

 \noindent \textbf{Exact-Diffusion/D$^2$ and EXTRA.} If  we select $\A=\W$, $\B=(\I-\W)^{1/2}$, and $\C=\I$. Then, algorithm \eqref{UDA_alg} becomes equivalent to   ED/D$^{2}$ \cite{yuan2019exactdiffI,tang2018d}:\begin{align} \label{exact_diff}
x_i^{k+2}=\sum_{j \in \cN_i} w_{ij} \Big(2 x_j^{k+1} - x_j^{k}-\alpha  \big( \grad f_j(x_j^{k+1})-\grad f_j(x_j^{k})\big) \Big),
\end{align}
with $x_i^1=\sum_{j \in \cN_i} w_{ij} \big( x_j^0-\alpha \grad f_j(x_j^0)\big)$.   If we instead  select  $\A=\I$, $\B=(\I-\W)^{1/2}$, and $\C=\W$, then algorithm \eqref{UDA_alg} is equivalent to  EXTRA \cite{shi2015extra}:
\begin{align} \label{EXTRA}
x_i^{k+2}=\sum_{j \in \cN_i} w_{ij} \big(2 x_j^{k+1} - x_j^{k} \big)-\alpha  \big( \grad f(x_i^{k+1})-\grad f(x_i^{k})\big), 
\end{align}
with $x_i^1=\sum_{j \in \cN_i} w_{ij}  x_j^0-\alpha \grad f_i(x_i^0)$. 

 \noindent \textbf{Gradient-Tracking (GT) methods.}  Consider  the adapt-then-combine gradient-tracking (ATC-GT) method \cite{xu2015augmented}:
\begin{subequations} \label{GT_atc_alg}
\begin{align} 
x_i^{k+1}&=\sum_{j \in \cN_i} w_{ij} (x_j^{k} - \alpha g_j^{k}) \\
g_i^{k+1} &= \sum_{j \in \cN_i} w_{ij} \big(g_j^{k} + \grad f_j(x_j^{k+1})-\grad f_j(x_j^{k}) \big).
\end{align}
\end{subequations}
    With proper initialization, the above is equivalent to  \eqref{UDA_alg} when $
  \A=\W^2$, $\B=\I-\W$, and $\C= \I$. We can also recover other gradient-tracking variants. For example, if we select $
\A=\I$, $\B=\I-\W$, and  $\C= \W^2$, then \eqref{UDA_single_update} becomes equivalent to the non-ATC-GT method \cite{qu2017harnessing}:
\begin{subequations} \label{GT_nonatc_alg}
\begin{align}
x_i^{k+1}&=\sum_{j \in \cN_i} w_{ij} x_j^{k} - \alpha g_i^{k} \\
g_i^{k+1} &= \sum_{j \in \cN_i} w_{ij} g_j^{k} + \grad f_i(x_i^{k+1})-\grad f_i(x_i^{k}).
\end{align}
\end{subequations}
Notice that in \eqref{GT_nonatc_alg} the communication (gossip) step only involves the terms $x_j^{k}$ and $g_j^{k}$ in the update of each vector. This is in contrast to the ATC structure \eqref{GT_atc_alg} where the  communication (gossip) step involves all terms. Similarly, we can also cover the semi-ATC-GT variations \cite{di2016next} where only the update of $x_i^{k}$ or $g_i^{k}$  uses the ATC structure. Please see Appendix \ref{app:relation_to_other_methods} for details.

	\begin{remark}[\sc  Relation with other frameworks] \rm 
		The unified decentralized algorithm (UDA) from \cite{alghunaim2019decentralized} is equivalent to \eqref{UDA_alg} if $\A$ and $\B^2$ commute (\ie, $\A \B^2=\B^2 \A$).  Therefore, all the methods covered in \cite{alghunaim2019decentralized} are also covered by our framework (such as DLM \cite{ling2015dlm}). Moreover, under certain conditions, the frameworks from \cite{xu2021distributed} and \cite{sundararajan2018canonical} can also be related with \eqref{UDA_alg} -- see Appendix \ref{app:relation_to_other_methods}.  The works \cite{alghunaim2019decentralized,xu2021distributed,sundararajan2018canonical} only studied convergence under deterministic and convex settings.  In contrast, we study  the {\em non-convex and stochastic} case, and more importantly, we establish tighter rates for the above special bias-correction methods, which is the main focus of this work.
	\end{remark}

\section{Stochastic UDA and Assumptions} \label{sec:suda}
In this section, we describe the stochastic version of algorithm \eqref{UDA_alg} and list the assumptions used to analyze it.

 As stated in problem \eqref{min_learning_prob}, we consider stochastic settings where each agent may only have access to a stochastic gradient $\grad F_i (x_i,\xi_i^k)$ at each iteration $k$ instead of the true gradient. This scenario arises in online learning settings, where the data are not known in advance; hence, we do not have access to the actual gradient. Moreover, even if all the data is available, the true gradient might be expensive to compute for large datasets and can be replaced by a gradient at one sample or a mini-batch.

Replacing the actual gradient by its stochastic approximation in \eqref{UDA_alg}, we get the \textbf{S}tochastic \textbf{U}nified \textbf{D}ecentralized \textbf{A}lgorithm (\suda):
\begin{subequations} \label{SUDA_alg}
\begin{align} 
\mathbf{x}^{k+1} &=  \A \big(\C \mathbf{x}^{k}-\alpha \grad \F (\x^k,\bxi^k) \big)  -  \mathbf{B} \mathbf{y}^{k}, \label{x_SUDA}  \\
\mathbf{y}^{k+1} &= \mathbf{y}^{k}+ \mathbf{B}  \mathbf{x}^{k+1}, \label{dual_SUDA} 
\end{align}
\end{subequations}
where
\begin{align*}
\grad \F (\x,\bxi^k) \define \col\{ \grad F_1 (x_1,\xi_1^k),\dots,\grad F_n (x_n,\xi_n^k) \}.
\end{align*}

We next  list the assumptions used in our analyses. Our first assumption is about the network combination matrix given next.  
\begin{assumption}[\bfseries \small Combination matrix] \label{assump:network} 
The combination matrix $W$ is assumed to be doubly stochastic, symmetric, and primitive. Moreover, we assume that the matrices $\A,\B^2,\C$ are chosen as a polynomial function of $\W$:
\begin{align} \label{polynomials_matrices}
\A=\sum_{l=0}^p a_l \W^l, \quad \B^2=\sum_{l=0}^p b_l \W^l, \quad
\C=\sum_{l=0}^p c_l \W^l, 
\end{align}
where $p \geq 1$.  The constants $\{a_l,b_l,c_l\}_{l=0}^p$ are chosen such that $\A$ and $\C$ are doubly stochastic and the matrix $\B$ satisfies equation \eqref{null_B}. \qd
\end{assumption}
\noindent   Under Assumption \ref{assump:network}, the combination matrix $W$ has a single eigenvalue at one, denoted by $\lambda_1=1$. Moreover, all other eigenvalues, denoted by $\{\lambda_i\}_{i=2}^n$, are strictly less than one in magnitude \cite{sayed2014nowbook},  and the mixing rate of the network is:
\begin{align} \label{graph_mixing_rate}
\lambda \define \rho \big(W-\tfrac{1}{n} \one \one\tran\big) =\max_{i \in \{2,\ldots,n\}} |\lambda_i| <1.
\end{align}
Note that the assumptions on $\A,\B^2,\C$ are mild and hold for all the special cases described before.

We now introduce the main assumption on the objective function. 
\begin{assumption}[\bfseries \small Objective function] \label{assump:smoothness} Each function $f_i: \real^d \rightarrow \real$ is $L$-smooth:
\begin{align} \label{smooth_f_eq}
\|\grad f_i(z) -\grad f_i(y)\| \leq L \|z -y\|, \quad \forall~z,y \in \real^d, 
\end{align}
for some $L>0$. We also assume that the aggregate function $f(x)=\frac{1}{n} \sum_{i=1}^n f_i(x)$ is bounded below, \ie, $f(x) \geq  f^\star > -\infty$ $\forall~ x \in \real^d$ where $f^\star$ denote the optimal value of $f$.  \qd
\end{assumption}
\noindent  The above assumption is standard to establish convergence under non-convex settings. Note that we do not impose the strong assumption of bounded gradient dissimilarity, which is required to establish convergence of \textsc{Dsgd} -- see  \cite{assran2019stochastic,lian2017can,koloskova2020unified}.

We next list our assumption on the stochastic gradient. To do that, we define the filtration generated by the random process \eqref{SUDA_alg}:
\begin{align}
\bm{\cF}^k \define \{\x^0,\x^2,\ldots,\x^k\}.
\end{align}
The filtration $\bm{\cF}^k$ can be interpreted as the collection of all information available on the past iterates up to time $k$. 
\begin{assumption}[\bfseries \small Gradient noise] \label{assump:noise} 
For all $\{i\}_{i=1}^n$ and $k=0,1,\ldots$, we assume that the following holds
\begin{subequations} \label{noise_bound_eq}
\begin{align}
\Ex \big[\grad F_i(x_i^k;\xi_i^k)-\grad f_i(x_i^k) ~|~ \bm{\cF}^{k}\big] &=0, \label{noise_bound_eq_mean} \\
\Ex \big[\|\grad F_i(x_i^k;\xi_i^k)-\grad f_i(x_i^k)\|^2 ~|~ \bm{\cF}^{k} \big] &\leq \sigma^2, \label{noise_bound_eq_variance}
\end{align}
\end{subequations}
 for some $\sigma^2 \geq 0$. We also assume that conditioned on $\bm{\cF}^{k}$, the random data  $\{\xi_i^t\}$ are independent of each other for all $\{i\}_{i=1}^n$ and $\{t \}_{t \leq k}$.   \qd
\end{assumption}

 The previous assumptions will be used to analyze \suda~\eqref{SUDA_alg} for general non-convex costs. In the sequel, we  will also study \suda~under the following additional assumption. 
\begin{assumption}[\bfseries \small PL condition] \label{assump:PL} The aggregate function $f(x)=\frac{1}{n} \sum_{i=1}^n f_i(x)$ satisfies the  PL inequality:
\begin{align} \label{PL_cond}
2\mu \big(f(x)-f^\star\big) \leq \|\grad f(x)\|^2, \quad \forall~ x \in \real^d,
\end{align}
for some $\mu >0$ where $f^\star$ denote the optimal value of $f$.  \qd
\end{assumption}
\noindent The above condition implies that every stationary point is a global minimizer, which is weaker than many assumptions used to establish linear convergence without strong-convexity \cite{karimi2016linear}. Note that the PL condition is also referred to as the gradient dominated condition \cite{tang2020distributed}.

\section{Fundamental Transformations}
The updates of \suda~\eqref{SUDA_alg},  while useful for implementation purposes, are not helpful for analysis purposes. In this section,  we will transform  \suda~\eqref{SUDA_alg} into  an equivalent recursion that is fundamental to arrive at our results.  

\subsection{Transformation I}
  Using the change of variable 
\begin{align}
\z^{k}\define \y^{k}-\B \x^{k}
\end{align} 
  in \eqref{x_SUDA}--\eqref{dual_SUDA}, we can describe \eqref{SUDA_alg} by the following equivalent non-incremental form: 
\begin{subequations} \label{alg_uda_noninc}
\begin{align}
\mathbf{x}^{k+1} &=  (\A\C-\B^2) \mathbf{x}^{k}-\alpha \A (\grad \mathbf{f}(\x^{k}) +\w^k)   -  \mathbf{B} \mathbf{z}^{k},  \\
\mathbf{z}^{k+1} &= \mathbf{z}^{k}+ \mathbf{B}  \mathbf{x}^{k},
\end{align}
\end{subequations}
where $\w^k$ is the gradient noise defined as:
\begin{align} \label{noise_gradient}
\w^k \define \grad \F (\x^k,\bxi^k) -\grad \mathbf{f}(\x^{k}).
\end{align}
 If we introduce the quantities
 \begin{subequations}
 \begin{align}
 \bar{x}^k & \define \tfrac{1}{n} (\one_n\tran \otimes I_d) \x^{k}=\frac{1}{n} \sum_{i=1}^n x_i^k, \label{bar_x_def}  \\
  \bar{\x}^k& \define \one_n \otimes  \bar{x}^k, \label{bar_x_augmented_def} \\
  \s^{k}&\define\B \z^{k} +\alpha \A \grad \mathbf{f}(\bar{\x}^{k}), \label{s_definition}
\end{align}  
\end{subequations}
then recursion \eqref{alg_uda_noninc} can be rewritten as:
\begin{subequations} \label{error0}
\begin{align} 
\x^{k+1} &=  (\A\C-\B^2) \x^{k} -  \s^{k} -\alpha \A \big( \grad \mathbf{f}(\x^{k}) - \grad \mathbf{f}(\bar{\x}^{k}) +\w^k\big)  ,  \label{x_error0} \\
\s^{k+1} &= \s^{k}+ \B^2  \x^{k}+ \alpha \A \big( \grad \mathbf{f}(\bar{\x}^{k+1})- \grad \mathbf{f}(\bar{\x}^{k})\big). \label{s_error0} 
\end{align}
\end{subequations}
	\begin{remark}[\sc Motivation behind  $\s^k$] \rm
		Suppose that $(\x,\s)$ is a fixed point  of the deterministic form of \eqref{error0} (gradient noise $\w^k=\zero$) where $\s \in \real^{dn}$ and  $\x=(x_1,\dots,x_n) \in \real^{dn}$. Then, from \eqref{s_error0} it holds that 
		$\zero =  \B^2  \x \iff x_1=\dots=x_n=x$ and from \eqref{x_error0}, we have $
		\s=\zero$. 
		Therefore, using the definition of $\s^k$ in  \eqref{s_definition} and $\bar{\x}=\one \otimes x$, it follows that there exists some $\z \in \real^{dn}$ such that  
		\begin{align*}
			&\s= \alpha \A  \grad \mathbf{f}(\bar{\x})   + \B \z=\zero  \\
			& \Rightarrow  \tfrac{1}{n}(\one\tran \otimes I_n) \left(\alpha \A  \grad \mathbf{f}(\bar{\x})   + \B \z \right) =\frac{\alpha}{n} \sum_{i=1}^n \grad f_i(x)=0.
		\end{align*}
		Hence, $x$ is a stationary point of problem \eqref{min_learning_prob}.   \qd
	\end{remark}
\subsection{Transformation II} 
 We next  exploit the structure of the matrices $\A$, $\B$, and $\C$ to further transform recursion \eqref{error0} into a more useful form.  Under Assumption \ref{assump:network}, the combination matrix $W$  can be decomposed as
\begin{align*}
W=U \Lambda U\tran = \begin{bmatrix}
\frac{1}{\sqrt{n}}  \one & \hat{U}
\end{bmatrix} \begin{bmatrix}
1 & 0 \\
0 & \hat{\Lambda}
\end{bmatrix} \begin{bmatrix}
\frac{1}{\sqrt{n}} \one\tran \vspace{0.6mm} \\  \hat{U}\tran
\end{bmatrix},
\end{align*}
where $\hat{\Lambda}=\diag\{\lambda_i\}_{i=2}^n$. The matrix $U$ is an orthogonal matrix ($UU\tran=U\tran U=I$) and $\hat{U}$ is an ${n \times (n-1)}$ matrix that satisfies $\hat{U}\hat{U}\tran=I_n-\tfrac{1}{n} \one \one\tran$  and  $\one\tran \hat{U}=0$. It follows that
\begin{align*}
\W&=\U\mathbf{\Lambda} \U\tran = \begin{bmatrix}
\frac{1}{\sqrt{n}}   \one \otimes I_d & \hat{\U}
\end{bmatrix} \begin{bmatrix}
I_d & 0 \\
0 & \hat{\mathbf{\Lambda}}
\end{bmatrix} \begin{bmatrix}
\frac{1}{\sqrt{n}}  \one\tran \otimes I_d \\ \hat{\U}\tran
\end{bmatrix},
\end{align*}
where $\hat{\mathbf{\Lambda}} \define \hat{\Lambda} \otimes I_d \in \real^{d(n-1)\times d(n-1)}$,  $\U \in \real^{dn \times dn}$ is an orthogonal matrix, and $\hat{\U}\define \hat{U} \otimes I_d \in \real^{dn \times d(n-1)}$ satisfies:
\begin{align} \label{UUtran}
\hat{\U}\tran \hat{\U}=\I, \quad \hat{\U}\hat{\U}\tran=\I-\tfrac{1}{n} \one \one\tran \otimes I_d, \quad (\one\tran \otimes I_d) \hat{\U}=0.
\end{align}
Now, since $\A,\B^2,\C$ are chosen as polynomial function of $\W$ as described in Assumption \ref{assump:network}, it holds that
\begin{subequations} \label{ABC_decompositon}
\begin{align}
\A&=\U\mathbf{\Lambda}_a \U\tran = \begin{bmatrix}
\frac{1}{\sqrt{n}}   \one \otimes I_d & \hat{\U}
\end{bmatrix} \begin{bmatrix}
I_d & 0 \\
0 & \hat{\mathbf{\Lambda}}_a
\end{bmatrix} \begin{bmatrix}
\frac{1}{\sqrt{n}} \one\tran \otimes I_d \\ \hat{\U}\tran
\end{bmatrix}, \\
\C&=\U\mathbf{\Lambda}_c \U\tran = \begin{bmatrix}
\frac{1}{\sqrt{n}}  \one \otimes I_d & \hat{\U}
\end{bmatrix} \begin{bmatrix}
I_d & 0 \\
0 & \hat{\mathbf{\Lambda}}_c
\end{bmatrix} \begin{bmatrix}
\frac{1}{\sqrt{n}}  \one\tran \otimes I_d \\ \hat{\U}\tran
\end{bmatrix}, \\
\B^2&=\U \mathbf{\Lambda}_b^2 \U\tran = \begin{bmatrix}
\frac{1}{\sqrt{n}}  \one \otimes I_d & \hat{\U}
\end{bmatrix} \begin{bmatrix}
0 & 0 \\
0 & \hat{\mathbf{\Lambda}}_b^2
\end{bmatrix} \begin{bmatrix}
\frac{1}{\sqrt{n}}  \one\tran \otimes I_d \\ \hat{\U}\tran
\end{bmatrix}, 
\end{align}
\end{subequations}
where 
\begin{align}
\hat{\mathbf{\Lambda}}_a=\diag\{\lambda_{a,i}\}_{i=2}^n \otimes I_d, \quad \hat{\mathbf{\Lambda}}_b^2=\diag\{\lambda_{b,i}^2\}_{i=2}^n \otimes I_d, \quad \hat{\mathbf{\Lambda}}_c=\diag\{\lambda_{i,c}\}_{i=2}^n \otimes I_d,\label{znznas82}
\end{align}
 with $\lambda_{a,i} \define \sum_{l=0}^p a_l \lambda_i^l$, $\lambda_{b,i}^2 \define \sum_{l=0}^p b_l \lambda_{i}^l$, and $
\lambda_{c,i} \define \sum_{l=0}^p c_l \lambda_{i}^l$. Moreover,  $\hat{\mathbf{\Lambda}}_b$ is positive definite due to the null space condition \eqref{null_B}.   Multiplying both sides of \eqref{error0} by $\U\tran$ and using the structure \eqref{ABC_decompositon}, we get
\begin{subequations} \label{error1}
\begin{align} 
\U\tran \x^{k+1} &=   (\mathbf{\Lambda}_a \mathbf{\Lambda}_c-\mathbf{\Lambda}_b^2 ) \U\tran \x^{k} 
- \U\tran \s^{k} 
-\alpha  \mathbf{\Lambda}_a \U\tran \big( \grad \mathbf{f}(\mathbf{x}^{k}) - \grad \mathbf{f}(\bar{\x}^{k})+\w^k \big)  ,  \label{x_error1} \\
\U\tran \s^{k+1} &= \U\tran \s^{k}+    \mathbf{\Lambda}_b^2 \U\tran  \x^{k}+ \alpha  \mathbf{\Lambda}_a \U\tran \big( \grad \mathbf{f}(\bar{\x}^{k+1})- \grad \mathbf{f}(\bar{\x}^{k}) \big). \label{s_error1} 
\end{align}
\end{subequations}
 Note that  
\begin{align}
( \one\tran \otimes I_d)\s^k \overset{\eqref{s_definition}}{=}(\one\tran \otimes I_d) \big(\B \z^{k} +\alpha \A \grad \mathbf{f}(\bar{\x}^{k}) \big)=  \alpha \sum_{i=1}^n \grad f_i(\bar{x}^k).
\end{align}
Moreover, utilizing the structure of $\U$, we have
\begin{subequations}
\begin{align}
\U\tran  \x^{k}&=\begin{bmatrix}
\sqrt{n}\, \bar{x}^k  \vspace{0.5mm} \\
\hat{\U}\tran \x^{k}
\end{bmatrix}, 
\quad 
\U\tran  \s^{k} =\begin{bmatrix}
\sqrt{n}\, \alpha \overline{\grad \f}(\bar{\x}^k)  \vspace{0.5mm}  \\
\hat{\U}\tran \s^{k}
\end{bmatrix},
\\ 
\U\tran  \grad \mathbf{f}(\mathbf{x})&=\begin{bmatrix}
 \sqrt{n}\, \overline{\grad \f}(\x^k)  \vspace{0.5mm} \\
\hat{\U}\tran \grad \mathbf{f}(\mathbf{x})
\end{bmatrix},  
\quad
\U\tran  \w^{k} =\begin{bmatrix}
\sqrt{n}\, \bar{\w}^k  \vspace{0.5mm}  \\
\hat{\U}\tran \w^{k}
\end{bmatrix},
\end{align} 
\end{subequations}
where
\begin{subequations}
\begin{align}
 \overline{\grad \f}(\x^k)& \define\tfrac{1}{n} (\one_n\tran \otimes I_d) \grad \f(\x^k)=\frac{1}{n} \sum_{i=1}^n \grad f_i(x_i^k), \\
 \bar{\w}^k& \define \tfrac{1}{n} (\one_n\tran \otimes I_d) \w^k=\frac{1}{n} \sum_{i=1}^n \big(\grad F_i(x_i^k,\xi^k_i)-\grad f_i(x_i^k) \big).
\end{align}
\end{subequations}
Hence, using the previous quantities and the structure of $\mathbf{\Lambda}_a,\mathbf{\Lambda}_b^2,\mathbf{\Lambda}_c$ given in \eqref{ABC_decompositon}, we find that
\begin{subequations} 
\begin{align}
\bar{x}^{k+1} &=\bar{x}^{k} - \alpha  \overline{\grad \f}(\x^k)-\alpha \bar{\w}^k,
\\
\hat{\U}\tran \x^{k+1} &=   (\hat{\mathbf{\Lambda}}_a \hat{\mathbf{\Lambda}}_c-\hat{\mathbf{\Lambda}}_b^2 ) \hat{\U}\tran \x^{k}
 - \hat{\U}\tran \s^{k}
 -\alpha  \hat{\mathbf{\Lambda}}_a \hat{\U}\tran \big( \grad \mathbf{f}(\mathbf{x}^{k}) - \grad \mathbf{f}(\bar{\x}^{k})+\w^k \big) , 
\\
\hat{\U}\tran \s^{k+1} &= \hat{\U}\tran \s^{k}+    \hat{\mathbf{\Lambda}}_b^2 \hat{\U}\tran  \x^{k}+ \alpha  \hat{\mathbf{\Lambda}}_a \hat{\U}\tran \big( \grad \mathbf{f}(\bar{\x}^{k+1})- \grad \mathbf{f}(\bar{\x}^{k})\big). 
\end{align}
\end{subequations}
 Multiplying the third equation by $\hat{\mathbf{\Lambda}}_b^{-1}$ and  rewriting the previous recursion in matrix notation, we obtain: 
\begin{subequations} \label{trans_uda_non_diag}
\begin{align}
\hspace{-0.75mm} \bar{x}^{k+1} &=\bar{x}^{k} - \alpha  \overline{\grad \f}(\x^k) -\alpha \bar{\w}^k, \label{trans_uda_non_diag_x} \\
 \begin{bmatrix}
\hat{\U}\tran\x^{k+1} \\
\hat{\mathbf{\Lambda}}_b^{-1} \hat{\U}\tran\s^{k+1}
\end{bmatrix}&=  \begin{bmatrix}
\hat{\mathbf{\Lambda}}_a \hat{\mathbf{\Lambda}}_c-\hat{\mathbf{\Lambda}}_b^2 & -\hat{\mathbf{\Lambda}}_b \\
   \hat{\mathbf{\Lambda}}_b  & ~~\I
\end{bmatrix}  \begin{bmatrix}
\hat{\U}\tran\x^{k} \\
\hat{\mathbf{\Lambda}}_b^{-1} \hat{\U}\tran\s^{k}
\end{bmatrix} \nonumber \\
& \quad  - \alpha  \begin{bmatrix}
\hat{\mathbf{\Lambda}}_a \hat{\U}\tran \big(\grad \mathbf{f}(\mathbf{x}^{k}) - \grad \mathbf{f}(\bar{\x}^{k})+\w^k\big) \\
\hat{\mathbf{\Lambda}}_b^{-1} \hat{\mathbf{\Lambda}}_a \hat{\U}\tran \big(\grad \mathbf{f}(\bar{\x}^{k}) -\grad \mathbf{f}(\bar{\x}^{k+1})\big)
\end{bmatrix}. \label{trans_uda_non_diag_y}
\end{align}
\end{subequations}
The convergence of \eqref{trans_uda_non_diag} is be governed by the matrix:
\begin{align} \label{G_matrix}
\G \define \begin{bmatrix}
\hat{\mathbf{\Lambda}}_a \hat{\mathbf{\Lambda}}_c-\hat{\mathbf{\Lambda}}_b^2 & -\hat{\mathbf{\Lambda}}_b \\
   \hat{\mathbf{\Lambda}}_b  & ~~\I
\end{bmatrix} \in \real^{2d(n-1)\times 2d(n-1)}.
\end{align}
We next introduce a fundamental factorization of $\G$ that will be used to transform \eqref{trans_uda_non_diag} into our final key recursion. The next result is proven in Appendix \ref{app:lemma_diag_proof}. 
\begin{lemma}[\bfseries \small Fundamental factorization] \label{lemma:diagonalization}
  Suppose that  the eigenvalues of $\G$ are strictly less than one in magnitude. Then, there exists an invertible matrix $\hat{\V}$ such that the matrix $\G$ admits the similarity transformation
\begin{align} \label{G_diagonalized}
 \G  = \hat{\V} \mathbf{\Gamma} \hat{\V}^{-1},
\end{align}
where $\mathbf{\Gamma}$ satisfies $ \|\mathbf{\Gamma}\|<1$.  \qd
\end{lemma}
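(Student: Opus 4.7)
The plan is to exploit the fact that every sub-block of $\G$ is diagonal, so that $\G$ decouples under a permutation into a direct sum of $2\times 2$ blocks, each with spectral radius strictly less than one. Each such $2\times 2$ block is then brought to a matrix of operator norm strictly less than one via the classical Jordan-plus-rescaling construction, and the pieces are reassembled to produce the desired $(\hat{\V}, \mathbf{\Gamma})$.

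First, I would permute $\G$ into block-diagonal form. Because $\hat{\mathbf{\Lambda}}_a$, $\hat{\mathbf{\Lambda}}_b$, $\hat{\mathbf{\Lambda}}_c$ are all diagonal by \eqref{znznas82}, the four sub-blocks of $\G$ in \eqref{G_matrix} are individually diagonal; consequently there is a permutation matrix $P$ such that $P^\tran \G P$ is block diagonal with $2\times 2$ blocks
\begin{align*}
G_i = \begin{bmatrix} \lambda_{a,i}\lambda_{c,i} - \lambda_{b,i}^2 & -\lambda_{b,i} \\ \lambda_{b,i} & 1 \end{bmatrix}, \qquad i=2,\ldots,n,
\end{align*}
each appearing with multiplicity $d$ owing to the Kronecker factor $I_d$. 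The spectrum of $\G$ is therefore the union of the spectra of $G_2, \ldots, G_n$, and the hypothesis of the lemma translates into $\rho(G_i) < 1$ for every $i$.

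Next, I would process each $G_i$ separately using Jordan decomposition. If the eigenvalues of $G_i$ are distinct (the generic case) or semisimple, then $G_i = V_i \Gamma_i V_i^{-1}$ with $\Gamma_i$ diagonal, and $\|\Gamma_i\| = \rho(G_i) < 1$ immediately. If instead $G_i$ has a repeated eigenvalue $\gamma_i$ that is defective, its Jordan form is $J_i = \bigl[\begin{smallmatrix} \gamma_i & 1 \\ 0 & \gamma_i \end{smallmatrix}\bigr]$, and I would apply the classical rescaling trick: conjugate by $D_i = \diag(1,\epsilon_i)$ with $\epsilon_i \in (0, 1-|\gamma_i|)$ to obtain $\Gamma_i = D_i^{-1} J_i D_i = \bigl[\begin{smallmatrix} \gamma_i & \epsilon_i \\ 0 & \gamma_i \end{smallmatrix}\bigr]$, for which a direct singular-value computation yields $\|\Gamma_i\| \leq |\gamma_i| + \epsilon_i < 1$. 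In either case, we obtain $G_i = V_i \Gamma_i V_i^{-1}$ with $\|\Gamma_i\| < 1$.

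Finally, I would reassemble: letting $V_{\mathrm{blk}}$ and $\mathbf{\Gamma}$ denote the block-diagonal matrices whose blocks are the $V_i$ and $\Gamma_i$ (each with multiplicity $d$), and setting $\hat{\V} = P V_{\mathrm{blk}}$, we obtain $\G = \hat{\V} \mathbf{\Gamma} \hat{\V}^{-1}$ with $\|\mathbf{\Gamma}\| = \max_{i} \|\Gamma_i\| < 1$ as a maximum over finitely many strictly-less-than-one quantities. The main subtlety is the defective Jordan-block case: a Jordan block has operator norm strictly larger than its spectral radius, so obtaining a norm below one genuinely requires the $\epsilon_i$-rescaling. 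This works cleanly here because each block is only $2\times 2$ and so any Jordan chain has length at most two, allowing a single-parameter rescaling to suffice.
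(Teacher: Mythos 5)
Your proposal is correct and follows essentially the same route as the paper's proof: permute $\G$ into a direct sum of $2\times 2$ blocks $G_i$, diagonalize each block when its eigenvalues are distinct, and in the defective case apply the Jordan-form-plus-$\diag\{1,\epsilon_i\}$ rescaling with $\epsilon_i < 1-|\gamma_i|$ to force $\|\Gamma_i\| \leq |\gamma_i|+\epsilon_i < 1$, exactly as the paper does. The only cosmetic difference is that the paper certifies the norm bound via $\|\Gamma_i\|^2 \leq \|\Gamma_i\Gamma_i^*\|_1 \leq (|\gamma_i|+\epsilon_i)^2$ rather than your appeal to a direct singular-value computation, but both yield the same estimate.
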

\noindent For the convergence of \eqref{trans_uda_non_diag}, it is necessary that $\G$ is a stable matrix (has eigenvalues strictly less than one in magnitude). To see this suppose that $\grad \f(\x)=\w^k=\zero$, then the convergence is dictated by \eqref{trans_uda_non_diag_y}, which diverges for unstable $\G$ if $\hat{\U}\tran\x^{0} \neq \zero$ or $\hat{\U}\tran\s^{0} \neq \zero$.   Thus, we implicitly assume that $\G$ is a stable matrix. Explicit expressions for $\mathbf{\Gamma}$ and $\hat{\V}$ for ED/D$^2$, EXTRA, and  GT methods  are derived in Appendix \ref{app:bounds_special_cases}, where we also find exact expressions for the eigenvalues of $\G$ for these methods.

 Finally, multiplying both sides of  \eqref{trans_uda_non_diag_y} by $\frac{1}{\upsilon}\hat{\V}^{-1}$ for any $\upsilon >0$ and using the structure \eqref{G_diagonalized}, we arrive at the following key result.
\begin{lemma}[\bfseries \small Final transformed recursion] Under Assumption \ref{assump:network}, there exists an invertible matrix $\hat{\V}$  such that recursion \eqref{trans_uda_non_diag} can transformed into
\begin{subequations} \label{error_diag_transformed}
\begin{align}
\bar{x}^{k+1} &=\bar{x}^{k} - \alpha  \overline{\grad \f}(\x^k)-\alpha \bar{\w}^k, \label{error_average_diag} \\
\hat{\e}^{k+1}&=\mathbf{\Gamma} \hat{\e}^{k} - \alpha \hat{\V}^{-1}  \begin{bmatrix}
\frac{1}{\upsilon} \hat{\mathbf{\Lambda}}_a \hat{\U}\tran \big(\grad \mathbf{f}(\mathbf{x}^{k}) - \grad \mathbf{f}(\bar{\x}^{k})+\w^k\big) \\
\frac{1}{\upsilon}  \hat{\mathbf{\Lambda}}_b^{-1} \hat{\mathbf{\Lambda}}_a \hat{\U}\tran \big(\grad \mathbf{f}(\bar{\x}^{k}) -\grad \mathbf{f}(\bar{\x}^{k+1})\big)
\end{bmatrix}, \label{error_check_diag}
\end{align}
\end{subequations}
where $\mathbf{\Gamma}$ was introduced in \eqref{G_diagonalized},  $\upsilon$ is an arbitrary strictly positive constant, and
\begin{align} \label{e_hat_def}
\hat{\e}^{k} \define \frac{1}{\upsilon}\hat{\V}^{-1} \begin{bmatrix}
\hat{\U}\tran\x^{k} \\
\hat{\mathbf{\Lambda}}_b^{-1} \hat{\U}\tran\s^{k}
\end{bmatrix}.
\end{align}
\qd
\end{lemma}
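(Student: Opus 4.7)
The plan is to obtain \eqref{error_diag_transformed} as an immediate consequence of the fundamental factorization in Lemma~\ref{lemma:diagonalization}, applied to the already-derived recursion \eqref{trans_uda_non_diag}. Concretely, the average-iterate line \eqref{error_average_diag} is nothing but \eqref{trans_uda_non_diag_x}, so no work is required there; the content of the lemma is entirely about the orthogonal block \eqref{trans_uda_non_diag_y}.

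First I would invoke Assumption~\ref{assump:network} and the (implicitly imposed) stability of $\G$ to apply Lemma~\ref{lemma:diagonalization}, which supplies an invertible $\hat{\V}$ with $\G = \hat{\V}\mathbf{\Gamma}\hat{\V}^{-1}$. Fix any $\upsilon>0$ and left-multiply both sides of \eqref{trans_uda_non_diag_y} by $\tfrac{1}{\upsilon}\hat{\V}^{-1}$. On the left, by the definition \eqref{e_hat_def} of $\hat{\e}^{k}$, we get exactly $\hat{\e}^{k+1}$. On the right, the homogeneous term becomes
\begin{align*}
\tfrac{1}{\upsilon}\hat{\V}^{-1} \G \begin{bmatrix} \hat{\U}\tran \x^{k} \\ \hat{\mathbf{\Lambda}}_b^{-1}\hat{\U}\tran \s^{k}\end{bmatrix}
= \tfrac{1}{\upsilon}\hat{\V}^{-1}\hat{\V}\mathbf{\Gamma}\hat{\V}^{-1}\begin{bmatrix} \hat{\U}\tran \x^{k} \\ \hat{\mathbf{\Lambda}}_b^{-1}\hat{\U}\tran \s^{k}\end{bmatrix}
= \mathbf{\Gamma}\hat{\e}^{k},
\end{align*}
where we used $\hat{\V}^{-1}\hat{\V}=\I$ and pulled the scalar $\tfrac{1}{\upsilon}$ inside the matching factor in \eqref{e_hat_def}. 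The remaining $\alpha$-scaled driving terms in \eqref{trans_uda_non_diag_y} simply acquire the common prefactor $\tfrac{1}{\upsilon}\hat{\V}^{-1}$, producing exactly the driving term displayed in \eqref{error_check_diag}.

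Putting the unchanged average-iterate equation together with the transformed orthogonal block yields the system \eqref{error_diag_transformed}. No genuine obstacle arises: the only nontrivial ingredient, namely the factorization $\G=\hat{\V}\mathbf{\Gamma}\hat{\V}^{-1}$ with $\|\mathbf{\Gamma}\|<1$, is supplied by Lemma~\ref{lemma:diagonalization}, and the rest is bookkeeping by linearity. The role of the arbitrary scaling $\upsilon$ is purely cosmetic at this stage: it leaves \eqref{error_diag_transformed} invariant (since $\hat{\e}^{k}$ and the driving term are scaled by the same $1/\upsilon$), but it will be chosen later in the convergence analysis to balance constants in the Lyapunov bounds.
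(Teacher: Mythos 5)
Your proposal is correct and matches the paper's own derivation exactly: the paper obtains \eqref{error_diag_transformed} by invoking the factorization $\G=\hat{\V}\mathbf{\Gamma}\hat{\V}^{-1}$ from Lemma \ref{lemma:diagonalization} and left-multiplying \eqref{trans_uda_non_diag_y} by $\tfrac{1}{\upsilon}\hat{\V}^{-1}$, with \eqref{error_average_diag} carried over unchanged from \eqref{trans_uda_non_diag_x}. Nothing further is needed.
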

\begin{remark}[\sc Deviation from average] \rm
Recall  that $\bar{\x}^k= \one  \otimes \bar{x}^k$ where $\bar{x}^k=\tfrac{1}{n} \sum_{i=1}^n x^k_i$. Since $\hat{\U}\tran \hat{\U}=\I$, it holds that
\begin{align*}
\|\hat{\U}\tran \x\|^2 =  \x\tran \hat{\U} \hat{\U}\tran \hat{\U} \hat{\U}\tran \x =\|\hat{\U} \hat{\U}\tran \x^k\|^2 \overset{\eqref{UUtran}}{=}\|\x^k-\bar{\x}^k\|^2.
\end{align*}
Therefore, the quantity $\|\hat{\U}\tran \x^k\|^2$ measures the deviation of $\x^k$ from the average $\bar{\x}^k$.  Similarly, $\|\hat{\U}\tran \s^k\|^2$ measures the deviation of $\s^k$ with the average $(\tfrac{1}{n}\one \one\tran \otimes I_n) \s^k=\one \otimes \tfrac{\alpha}{n}\sum_{i=1}^n \grad f_i(\bar{x}^k)$ (see \eqref{s_definition}). Now, using \eqref{e_hat_def}, we have
\begin{align} \label{hat_relation_avg_Dev}
	\|\upsilon \hat{\V} \hat{\e}^{k} \|^2=\|\hat{\U}\tran\x^{k}\|^2 + \|\hat{\mathbf{\Lambda}}_b^{-1} \hat{\U}\tran\s^{k}\|^2.
\end{align}
Thus, the vector $\hat{\e}^{k}$ can be interpreted as a measure of a weighted deviation of $\x^k$ and  $\s^k$  from  $\bar{\x}^k$ and $\one \otimes \tfrac{\alpha}{n}\sum_{i=1}^n \grad f_i(\bar{x}^k)$, respectively.   \qd
\end{remark}
	\begin{remark} \label{remark:gt_difference}\rm
		This work handles the deviation of the individual vectors from the average $\mathbf{x}^k -\bar{\mathbf{x}}^k$, and the deviation of the ``gradient-tracking'' variable $\s^k$ from the averaged-gradients $\one \otimes \tfrac{\alpha}{n}\sum_{i=1}^n \grad f_i(\bar{x}^k)$ as one augmented quantity. This leads to two coupled error terms given by \eqref{trans_uda_non_diag}. This is one main reason that allows us to obtain tighter network-dependent rates compared to previous works. The rigorous factorization of the matrix $\mathbf{G}$ leads us to arrive at the final transformed recursion \eqref{error_diag_transformed} with contractive matrix $\mathbf{\Gamma}$, which is key for our result.

		This is in contrast to other works. For example, in previous GT works (see \eg, \cite{qu2017harnessing,xin2021improved,pu2021distributed}),  the deviation of the individual vectors from the average $\mathbf{x}^k -\bar{\mathbf{x}}^k$, and the deviation of the gradient-tracking variable from the averaged-gradients are handled independently, and thus, they do not exploit the coupling matrix between these two network quantities.  \qd
	\end{remark}	
	
	\section{Convergence Results}
In this section, we state and discuss the convergence results.	  
\subsection{Convergence Under General Non-Convex Costs}
The following result establishes the convergence under general non-convex smooth costs.
 \begin{theorem} [\bfseries \small Convergence of SUDA] \label{thm_nonconvex}
	Suppose that Assumptions \ref{assump:network}--\ref{assump:noise} hold and the step size satisfies $
	    \alpha \leq \min \left\{\frac{1}{2L},~ \frac{1-\gamma}{2L v_1v_2 \lambda_a}
	    ,~ \frac{\sqrt{\underline{\lambda_b}(1-\gamma)}}{2 L \sqrt{v_1 v_2 \lambda_a  }}
	    \right\}$. 	Then, the iterates $\{\x^k\}$ of \suda~with $\x^0=\one \otimes x^0$ ($x^0 \in \real^d$) satisfy
	\begin{equation} \label{eq:thm:nonconvex}
		\begin{aligned}
	\frac{1}{K}	 \sum_{k=0}^{K-1}  \Ex  \| \grad f(\bar{x}^{k})  \|^2 +   \Ex  \| \overline{\grad\f}(\x^k)\|^2 		&\leq 
	  	  \frac{8   (f(x^{0})-f^\star)}{\alpha K}	  
	 	   + 	\frac{ 4\alpha L \sigma^2 }{ n} 
	 	   +   \frac{12  \alpha^2  L^2   v_1^2 v_2^2   \zeta_0^2}{ \underline{\lambda_b}^2  (1-\gamma)K}
	 	   \\
	 	  & \quad     
	 +  \frac{16\alpha^2   L^2 v_1^2 v_2^2   \lambda_a^2  \sigma^2}{1-\gamma}+\frac{16\alpha^4 L^4 v_1^2 v_2^2  \lambda_a^2  \sigma^2}{ \underline{\lambda_b}^2(1-\gamma)^2n},
	\end{aligned}
	\end{equation}
	where $v_1 \define \|\hat{\V}\|$, $v_2 \define \|\hat{\V}^{-1}\|$ and
	\begin{align*}
	 \gamma &\define \|\mathbf{\Gamma}\|<1, \quad \underline{\lambda_b}\define\frac{1}{\|\mathbf{\Lambda}_b^{-1}\|}, \quad \lambda_a \define \|\mathbf{\Lambda}_a\|, \\
\zeta_0 &\define \tfrac{1}n \|(\A-\tfrac{1}{n}\one\tran \one \otimes I_d )  \big(\grad \mathbf{f}(\x^0) - \one \otimes \grad f(x^0) \big)\|.
	\end{align*}
	Consequently, if  we set $\alpha=\tfrac{1}{2L \beta +\sigma \sqrt{K/n}}$ where $\beta \define 1
+\frac{ v_1v_2 \lambda_a}{1-\gamma}
+\frac{\sqrt{v_1 v_2 \lambda_a  }}{\sqrt{\underline{\lambda_b}(1-\gamma)}}$. Then, we obtain the convergence rate
\begin{equation} \label{eq:thm:nonconvex_rate}
	\begin{aligned} 
 	\frac{1}{K}	 \sum_{k=0}^{K-1}    \Ex  \big\| \overline{\grad\f}(\x^k) \big\|^2 		\leq 
	  	O \left(    
	 	   	\frac{   \sigma }{ \sqrt{nK}} 
	 	   	+ \frac{1}{K}
	 	  	 + \frac{  n \sigma^2}{n+\sigma^2 K}
	 	  	  +    \frac{  n \zeta_0^2}{ n K+\sigma^2 K^2} \right).
	\end{aligned}
\end{equation}
\qd
\end{theorem}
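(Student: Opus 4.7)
The plan is to run a two-track Lyapunov argument built on the decomposition in \eqref{error_diag_transformed}: a descent inequality for $f(\bar{x}^k)$ coming from \eqref{error_average_diag}, and a contraction inequality for the consensus-like quantity $\|\hat{\e}^k\|^2$ coming from \eqref{error_check_diag}. Combine them with carefully chosen weights into a single Lyapunov function, telescope over $k$, then optimize $\alpha$ to obtain \eqref{eq:thm:nonconvex_rate}. The initialization $\x^0=\one\otimes x^0$ makes $\hat{\U}\tran\x^0=\zero$, so \eqref{e_hat_def} together with $\s^0=\alpha\A\grad \f(\bar{\x}^0)$ gives $\|\upsilon\hat{\V}\hat{\e}^0\|^2 = \|\hat{\mathbf{\Lambda}}_b^{-1}\hat{\U}\tran \s^0\|^2 \leq \alpha^2 \underline{\lambda_b}^{-2} n \zeta_0^2$, which is exactly the source of the $\zeta_0$ term in \eqref{eq:thm:nonconvex}.

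First I would apply $L$-smoothness of $f$ to $\bar{x}^{k+1}=\bar{x}^k-\alpha\overline{\grad\f}(\x^k)-\alpha\bar{\w}^k$ and take conditional expectation, using $\Ex[\bar{\w}^k\mid\bm{\cF}^k]=0$ and $\Ex\|\bar{\w}^k\|^2\leq\sigma^2/n$ (which follows from the per-agent bound and independence in Assumption \ref{assump:noise}). Splitting $-\alpha\langle\grad f(\bar{x}^k),\overline{\grad\f}(\x^k)\rangle$ via $2\langle a,b\rangle=\|a\|^2+\|b\|^2-\|a-b\|^2$ and using $\alpha L\leq 1/2$, I obtain an inequality roughly of the form
\begin{align*}
\Ex f(\bar{x}^{k+1}) \leq f(\bar{x}^k) - \tfrac{\alpha}{4}\|\grad f(\bar{x}^k)\|^2 - \tfrac{\alpha}{4}\Ex\|\overline{\grad\f}(\x^k)\|^2 + \tfrac{\alpha L^2}{2n}\|\hat{\U}\tran\x^k\|^2 + \tfrac{\alpha^2 L\sigma^2}{2n},
\end{align*}
where the consensus cross term was bounded via $\|\grad f(\bar{x}^k)-\overline{\grad\f}(\x^k)\|^2\leq (L^2/n)\|\hat{\U}\tran\x^k\|^2$ (using $L$-smoothness of each $f_i$ and Jensen).

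Next I would produce a recursion for $\Ex\|\hat{\e}^{k+1}\|^2$ from \eqref{error_check_diag}. Using $\|\mathbf{\Gamma}\|=\gamma<1$ and Young's inequality $\|a+b\|^2\leq\tfrac{1+\gamma}{2\gamma}\|a\|^2\cdot\gamma + \tfrac{2}{1-\gamma}\|b\|^2$ applied as $(1+\gamma^2)/2$-style contraction, the mean-zero noise, the gradient-difference bound $\|\grad\f(\x^k)-\grad\f(\bar{\x}^k)\|^2\leq L^2\|\hat{\U}\tran\x^k\|^2$, and the additional smoothness bound $\|\grad\f(\bar{\x}^{k+1})-\grad\f(\bar{\x}^k)\|^2\leq nL^2\alpha^2\Ex\|\overline{\grad\f}(\x^k)+\bar{\w}^k\|^2$, I would obtain
\begin{align*}
\Ex\|\hat{\e}^{k+1}\|^2 \leq \tfrac{1+\gamma}{2}\Ex\|\hat{\e}^k\|^2 + \tfrac{c_1\alpha^2 v_2^2\lambda_a^2}{1-\gamma}\bigl(L^2\|\hat{\U}\tran\x^k\|^2+n\sigma^2\bigr) + \tfrac{c_2\alpha^4 v_2^2\lambda_a^2 L^2}{\underline{\lambda_b}^2(1-\gamma)}\Ex\|\overline{\grad\f}(\x^k)+\bar{\w}^k\|^2,
\end{align*}
and then re-express $\|\hat{\U}\tran\x^k\|^2\leq \upsilon^2 v_1^2\|\hat{\e}^k\|^2$ via \eqref{hat_relation_avg_Dev}.

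Finally I would form $V^k = \Ex f(\bar{x}^k) - f^\star + c\cdot\Ex\|\hat{\e}^k\|^2$ for a constant $c$ proportional to $L^2/(n(1-\gamma))$ chosen to cancel the $\|\hat{\U}\tran\x^k\|^2$ cross term arising from the descent inequality. The step-size constraints $\alpha\leq (1-\gamma)/(2L v_1 v_2\lambda_a)$ and $\alpha\leq\sqrt{\underline{\lambda_b}(1-\gamma)}/(2L\sqrt{v_1v_2\lambda_a})$ are exactly what is needed so that the $\alpha^2$-order consensus terms get absorbed into the negative descent on $\|\overline{\grad\f}(\x^k)\|^2$, and so the contraction factor of $\|\hat{\e}^k\|^2$ remains below $1$. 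Telescoping from $k=0$ to $K-1$, bounding the initial $\|\hat{\e}^0\|^2$ via $\zeta_0$ as noted above, and dividing by $K$ yields \eqref{eq:thm:nonconvex}; substituting $\alpha=1/(2L\beta+\sigma\sqrt{K/n})$ and simplifying in $O(\cdot)$ notation gives \eqref{eq:thm:nonconvex_rate}. The main obstacle is bookkeeping: aligning the prefactors $v_1,v_2,\lambda_a,\underline{\lambda_b},\gamma$ so that every cross term (especially the coupling through $\grad\f(\bar{\x}^{k+1})-\grad\f(\bar{\x}^k)$, which is what forces an $\alpha^4$-style term with the extra $1/(1-\gamma)^2$ in \eqref{eq:thm:nonconvex}) is absorbed exactly under the stated step-size thresholds rather than a loose upper bound, since this tightness is what yields the improved network dependence advertised in Table \ref{table_non_convex}.
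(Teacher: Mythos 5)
Your proposal is correct and follows essentially the same route as the paper: the same descent inequality for $f(\bar{x}^k)$, the same contraction recursion for $\Ex\|\hat{\e}^k\|^2$ with the $(1+\gamma)/2$ factor and the $\alpha^4$ coupling term, the same role for the two step-size thresholds, and the same treatment of $\|\hat{\e}^0\|^2$ via $\zeta_0$. The only cosmetic difference is in how the two inequalities are combined --- you telescope a weighted Lyapunov function $\Ex f(\bar{x}^k)-f^\star + c\,\Ex\|\hat{\e}^k\|^2$, while the paper unrolls the consensus recursion, averages over $k$ using the geometric-sum bounds, and substitutes into the telescoped descent inequality --- and both yield the stated bound under the same conditions.
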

\noindent The proof of Theorem \ref{thm_nonconvex} is established in Appendix \ref{app:convergence_analysis}.  The convergence rate \eqref{eq:thm:nonconvex_rate} shows that \suda~enjoys linear speedup since the dominating term is $O(1/\sqrt{nK})$ for sufficiently large $K$  \cite{lian2017can}.   Note that the rate given in \eqref{eq:thm:nonconvex_rate} treats the network quantities $\{\gamma,\lambda_a,\underline{\lambda_b},v_1,v_2\}$  as constants. However, these quantities can have a significant influence on the convergence rate as explained in the introduction. While the above result holds for EXTRA, ED/D$^2$ and GT methods, it is still unclear whether these methods can achieve enhanced transient time compared to \textsc{Dsgd}.  To show the network affect and see the implication of Theorem \ref{thm_nonconvex} in terms of network quantities, we will specialize Theorem \ref{thm_nonconvex} to ED/D$^2$ \eqref{exact_diff} and ATC-GT \eqref{GT_atc_alg} and reflect the values of the $\{\gamma,\lambda_a,\underline{\lambda_b},v_1,v_2\}$  in the convergence rate.

\begin{corollary}[\bfseries \small ED/D$^2$ convergence] \label{corollary:ED}
Suppose that all the conditions given in Theorem \ref{thm_nonconvex} hold and assume further that $W$ is positive semi-definite.   If we set $\alpha=\sqrt{n/K}$ and $
K \geq \max \left\{ 4 nL^2, 
~\frac{32L^2  \lambda^2 n}{ (1-\sqrt{\lambda})^2 \underline{\lambda}}  ,~  \frac{\sqrt{32}   L^2  \lambda n}{  \sqrt{1-\lambda} (1-\sqrt{\lambda} ) \sqrt{\underline{\lambda}}}  \right\}$, then  ED/D$^2$ \eqref{exact_diff}  with $\x^0=\one \otimes x^0$ ($x^0 \in \real^d$) has convergence rate
\begin{equation} \label{ED_nonconvex_result}
\begin{aligned} 
	\frac{1}{K}	 \sum_{k=0}^{K-1} \Ex  \| \overline{\grad\f}(\x^k)\|^2 	&\leq 
	  	O \bigg(  \frac{ f(x^{0}) - f^\star }{ \sqrt{n K}}	 
	  	+\frac{   \sigma^2 }{ \sqrt{nK}}  \bigg)
	 	 \\
	 	 & \quad  
	 	 + O \bigg( \frac{n  \lambda^2     \sigma^2}{ (1-\lambda) \underline{\lambda} K}
	 	  	 + \frac{n  \lambda^2     \sigma^2}{ (1-\lambda)^{3} \underline{\lambda} K^2}
	 	  	  +    \frac{   n  L^2  \zeta_0^2}{  (1-\lambda)^2  \underline{\lambda} K^2} \bigg),
	\end{aligned}
	\end{equation}
	where $\underline{\lambda}$ is the minimum non-zero eigenvalue of $W$.	Moreover, we have 
	\begin{align*}
	\zeta_0^2 =\textstyle \frac{1}{n} \sum_{i=1}^n \big\|\sum\limits_{j \in \cN_i} w_{ij} \grad f_j(x^0)-\grad f(x^0) \big\|^2 \leq   \lambda^2  \varsigma_0^2,
	\end{align*}
	  where $\varsigma_0^2 \define \tfrac{1}{n} \sum_{i=1}^n \big\| \grad f_i(x^0)-\grad f(x^0) \big\|^2$.
\begin{proof}
 The proof  follows by substituting  the bounds \eqref{exact_diff_bounds} established in Appendix \ref{app:bounds_special_cases} into \eqref{eq:thm:nonconvex}.
\end{proof}
\end{corollary}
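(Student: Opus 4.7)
The plan is to derive Corollary~\ref{corollary:ED} as a direct specialization of Theorem~\ref{thm_nonconvex}: evaluate the five network constants $\gamma,\lambda_a,\underline{\lambda_b},v_1,v_2$ that appear in~\eqref{eq:thm:nonconvex} for the ED/D$^2$ choice $\A=\W$, $\B=(\I-\W)^{1/2}$, $\C=\I$, verify the stepsize hypothesis of Theorem~\ref{thm_nonconvex}, and substitute everything in. The content of Appendix~\ref{app:bounds_special_cases} is exactly this evaluation, so my proposal parallels what I would expect that appendix to contain.

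First I would use the simultaneous diagonalization~\eqref{ABC_decompositon} to reduce the computation of the constants to a scalar problem per eigenvalue $\lambda_i$ of $W$. The positive semi-definite assumption on $W$ forces $\lambda_i\in[0,1]$, from which $\lambda_{a,i}=\lambda_i$ and $\lambda_{b,i}^2=1-\lambda_i$ give $\lambda_a=\lambda$ and $\underline{\lambda_b}=\sqrt{1-\lambda}$ at once. The matrix $\G$ defined in~\eqref{G_matrix} then block-diagonalizes into $2\times 2$ blocks of the form
\begin{align*}
G_i=\begin{bmatrix}2\lambda_i-1 & -\sqrt{1-\lambda_i}\\ \sqrt{1-\lambda_i} & 1\end{bmatrix}.
\end{align*}
A short characteristic-polynomial calculation shows $\mathrm{tr}(G_i)=2\lambda_i$ and $\det(G_i)=\lambda_i$, so the eigenvalues form a complex-conjugate pair of modulus $\sqrt{\lambda_i}$. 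This immediately yields $\gamma=\|\mathbf{\Gamma}\|=\sqrt{\lambda}$. Writing out the eigenvectors of $G_i$ one can normalize them so that $\|\hat V_i\|=O(1)$ while $\|\hat V_i^{-1}\|=O(1/\sqrt{\lambda_i})$ (the inverse blows up as $\lambda_i\to 0$ because $\det\hat V_i\propto\sqrt{\lambda_i}$), giving $v_1v_2=O(1/\sqrt{\underline{\lambda}})$. I expect this condition-number estimate to be the technical crux: obtaining the $1/\sqrt{\underline{\lambda}}$ dependence (rather than a weaker $1/\sqrt{1-\lambda}$) requires a careful choice of eigenvector scaling so that the norms are uniform across blocks.

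Next I would verify the stepsize condition. Plugging $\alpha=\sqrt{n/K}$ into the three inequalities of Theorem~\ref{thm_nonconvex} and solving for $K$ produces exactly the three candidates in the $\max\{\cdot\}$ assumed by the corollary, once $\lambda_a=\lambda$, $\underline{\lambda_b}=\sqrt{1-\lambda}$, $\gamma=\sqrt{\lambda}$, and $v_1v_2\le C/\sqrt{\underline{\lambda}}$ are inserted.

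Finally I would substitute the constants and the stepsize into~\eqref{eq:thm:nonconvex} term-by-term. The $\alpha L\sigma^2/n$ term becomes the $\sigma^2/\sqrt{nK}$ contribution; the $\alpha^2L^2 v_1^2v_2^2\lambda_a^2\sigma^2/(1-\gamma)$ term becomes the $n\lambda^2\sigma^2/[(1-\lambda)\underline{\lambda}K]$ contribution (using $(1-\sqrt{\lambda})^{-1}\le 2/(1-\lambda)$); the $\alpha^4L^4\lambda_a^2 \sigma^2/[\underline{\lambda_b}^2(1-\gamma)^2 n]$ term becomes $n\lambda^2\sigma^2/[(1-\lambda)^3\underline{\lambda}K^2]$; and the $\alpha^2L^2v_1^2v_2^2\zeta_0^2/[\underline{\lambda_b}^2(1-\gamma)K]$ term gives the $nL^2\zeta_0^2/[(1-\lambda)^2\underline{\lambda}K^2]$ correction, matching~\eqref{ED_nonconvex_result}. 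For the bound on $\zeta_0^2$: because $\frac{1}{n}\sum_j(\grad f_j(x^0)-\grad f(x^0))=0$, the centered vector $\grad\f(\x^0)-\one\otimes\grad f(x^0)$ lies in the orthogonal complement of $\one\otimes I_d$, so $\W-\frac{1}{n}\one\one\tran\otimes I_d$ acts as $\W$ on it and contracts by at most $\lambda$ by~\eqref{graph_mixing_rate}; the closed-form blockwise expression then follows by unfolding $\W(\grad\f(\x^0)-\one\otimes\grad f(x^0))$ entrywise.
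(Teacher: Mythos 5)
Your route is the same as the paper's: Appendix \ref{app:bounds_special_cases} does exactly the block reduction you describe, computes $\lambda_a=\lambda$, $\underline{\lambda_b}=\sqrt{1-\lambda}$, $\gamma=\sqrt{\lambda}$, $v_1^2\le 4$, $v_2^2\le 2/\underline{\lambda}$ for the blocks $G_i$, and the corollary then follows by substitution into \eqref{eq:thm:nonconvex} together with the stepsize check and the contraction argument for $\zeta_0^2$ (your last two steps match the paper's term-by-term bookkeeping, including $(1-\sqrt{\lambda})^{-1}\le 2/(1-\lambda)$).

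There is, however, one genuine gap in your eigenstructure argument. You assert that the eigenvalues of each $G_i$ ``form a complex-conjugate pair of modulus $\sqrt{\lambda_i}$,'' but this is only true for $\lambda_i\in(0,1)$. Positive semi-definiteness of $W$ permits $\lambda_i=0$ (\eg, $W=\tfrac1n\one\one\tran$), and in that case
\begin{align*}
G_i=\begin{bmatrix}-1 & -1\\ 1 & 1\end{bmatrix}
\end{align*}
has a repeated zero eigenvalue and is a nonzero nilpotent matrix, hence \emph{not diagonalizable}; your proposed eigenvector normalization with $\det V_i\propto\sqrt{\lambda_i}$ degenerates there rather than merely ``blowing up.'' The paper resolves this by switching to the Jordan factorization $G_i=T_iJ_iT_i^{-1}$ and rescaling with $E_i=\diag\{1,\epsilon\}$, choosing $\epsilon^2=\lambda$, which yields $\|\Gamma_i\|=\epsilon=\sqrt{\lambda}$ (consistent with $\gamma=\sqrt{\lambda}$) and $\|V_i^{-1}\|^2\le 2/\epsilon^2\le 2/\underline{\lambda}$ (consistent with $v_2^2\le 2/\underline{\lambda}$). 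Without this case the similarity transformation of Lemma \ref{lemma:diagonalization} — and hence the final recursion \eqref{error_diag_transformed} on which Theorem \ref{thm_nonconvex} rests — is simply unavailable for such $W$, so you should add it; everything else in your proposal goes through.
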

\begin{corollary}[\bfseries \small ATC-GT convergence] \label{corollary:GT}
Suppose that all the conditions given in Theorem \ref{thm_nonconvex} hold and assume further that $W$ is positive semi-definite.  If we let $\alpha=\sqrt{n/K}$ and $
K \geq \max \left\{ 4nL^2,~ \frac{432 L^2  \lambda^4 n}{(1-\lambda)^2}
,~ \frac{72 L^2  \lambda^2 n}{(1-\lambda)^2} 
\right\}$, then ATC-GT \eqref{GT_atc_alg} with  $\x^0=\one \otimes x^0$ ($x^0 \in \real^d$) has convergence rate
\begin{equation} \label{ATC_GT_nonconvex_result}
\begin{aligned}
	\frac{1}{K}	 \sum_{k=0}^{K-1}    \Ex  \| \overline{\grad\f}(\x^k)\|^2 		 
	&\leq 
	  	O \bigg(  \frac{ f(x^{0}) - f^\star }{ \sqrt{n K}}	  
	 	   + 	\frac{   \sigma^2 }{ \sqrt{nK}}  \bigg)
	 	   \\	 	   
	 	  	 & \quad + O\bigg( \frac{n  \lambda^4     \sigma^2}{ (1-\lambda)K} 
	 	  	 + \frac{n  \lambda^4     \sigma^2}{ (1-\lambda)^4 K^2} 
	 	  	  +    \frac{   n    \zeta_0^2}{  (1-\lambda)^3 K^2} \bigg).
	\end{aligned}
	\end{equation}	
	Moreover, we have 
	\begin{align*}
	\zeta_0^2 = \textstyle \frac{1}{n} \sum_{i=1}^n \big\|\sum\limits_{j \in \cN_i} [w_{ij}]^2 \grad f_j(x^0)-\grad f(x^0) \big\|^2 \leq   \lambda^4 \varsigma_0^2,
	\end{align*}
	  where $\varsigma_0^2 \define \tfrac{1}{n} \sum_{i=1}^n \big\| \grad f_i(x^0)-\grad f(x^0) \big\|^2$.
\begin{proof}
 The proof  follows by substituting the bounds \eqref{gt_diff_W_bound} established in Appendix \ref{app:bounds_special_cases} into Theorem \ref{thm_nonconvex}.
\end{proof}
\end{corollary}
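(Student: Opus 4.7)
The plan is to instantiate Theorem~\ref{thm_nonconvex} at the ATC-GT choice $\A = \W^2$, $\B = \I - \W$, $\C = \I$, and translate each of the abstract network quantities $\gamma, \lambda_a, \underline{\lambda_b}, v_1, v_2$ appearing there into an explicit function of the spectrum $\{\lambda_i\}_{i=1}^n$ of $W$. Under these assignments and the spectral decomposition \eqref{ABC_decompositon}, the matrix $\G$ in \eqref{G_matrix} block-diagonalizes in the eigenbasis $\hat{\U}$ into $n-1$ independent $2 \times 2$ blocks whose entries are polynomials in $\lambda_i$ obtained by substituting $\lambda_{a,i} = \lambda_i^2$, $\lambda_{b,i}^2 = (1-\lambda_i)^2$, $\lambda_{c,i} = 1$ into \eqref{G_matrix}. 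The positive semi-definiteness hypothesis on $W$ is used here to force $1-\lambda_i > 0$ for every $i \ge 2$, which guarantees that each $2 \times 2$ block is stable and the similarity factorization of Lemma~\ref{lemma:diagonalization} can be assembled mode-by-mode as a block-diagonal $\hat{\V}$; in particular $\lambda_a = \max_{i \ge 2} \lambda_i^2 \le \lambda^2$ and $\underline{\lambda_b} = \min_{i \ge 2}(1-\lambda_i) = 1-\lambda$.

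Next, I would carry out the explicit $2\times 2$ eigenvalue/eigenvector analysis of each mode (this is exactly the content of the bounds labeled \eqref{gt_diff_W_bound} in Appendix~\ref{app:bounds_special_cases}) to extract polynomial-in-$\lambda$ estimates of the form $1-\gamma = \Omega(1-\lambda)$ and $v_1 v_2 = O(1)$. Plugging these into the step-size bound of Theorem~\ref{thm_nonconvex} converts the three constraints $\alpha \le \tfrac{1}{2L}$, $\alpha \le \tfrac{1-\gamma}{2L v_1 v_2 \lambda_a}$, $\alpha \le \sqrt{\underline{\lambda_b}(1-\gamma)}/(2L\sqrt{v_1 v_2 \lambda_a})$ into explicit polynomial conditions in $\lambda$; choosing $\alpha = \sqrt{n/K}$ then makes these the three entries of the stated $\max$ in the hypothesis on $K$. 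Substituting the same constants into the right-hand side of \eqref{eq:thm:nonconvex} produces the four displayed terms of \eqref{ATC_GT_nonconvex_result}: the $\alpha L \sigma^2/n$ and $(f(x^0)-f^\star)/(\alpha K)$ contributions combine into $O(\sigma^2/\sqrt{nK})$ plus the initial-value term, the deterministic transient $\alpha^2 L^2 v_1^2 v_2^2 \zeta_0^2/(\underline{\lambda_b}^2(1-\gamma)K)$ becomes $O(n \zeta_0^2/((1-\lambda)^3 K^2))$, and the two higher-order noise transients collapse to $O(n\lambda^4 \sigma^2/((1-\lambda)K))$ and $O(n\lambda^4 \sigma^2/((1-\lambda)^4 K^2))$ respectively.

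For the concluding claim $\zeta_0^2 \le \lambda^4 \varsigma_0^2$, I would plug $\A = \W^2$ and $\x^0 = \one \otimes x^0$ into the definition of $\zeta_0$ from Theorem~\ref{thm_nonconvex}: the $i$-th block of the inner vector becomes $\sum_{j \in \cN_i}[W^2]_{ij} \grad f_j(x^0) - \grad f(x^0)$, which is the stated identity. Because $\grad \f(\x^0) - \one \otimes \grad f(x^0)$ has zero average across agents, it lies in the range of $\hat{\U}$, on which $\W^2 - \tfrac{1}{n}\one \one\tran \otimes I_d$ acts as $\hat{\mathbf{\Lambda}}^2$ with spectral norm at most $\lambda^2$; squaring and using $\|\grad \f(\x^0) - \one \otimes \grad f(x^0)\|^2 = n\varsigma_0^2$ yields the inequality. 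The main obstacle in the whole argument is precisely the sharp $2\times 2$ calculation behind \eqref{gt_diff_W_bound}: a generic Lyapunov-style bound on the contractive matrix $\G$ would only yield $(1-\gamma)^{-k}$ for some unspecified $k$, losing the very $(1-\lambda)$-improvements in transient time that are the point of the corollary, so one must handle the modes $\lambda_i \to 0$ and $\lambda_i \to 1$ separately and obtain tight polynomial scaling of $v_1 v_2/(1-\gamma)$ in $1-\lambda$.
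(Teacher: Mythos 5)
Your proposal is correct and follows essentially the same route as the paper: specialize Theorem~\ref{thm_nonconvex} to $\A=\W^2$, $\B=\I-\W$, $\C=\I$, extract $\gamma\le(1+\lambda)/2$, $v_1 v_2=O(1)$, $\lambda_a=\lambda^2$, $\underline{\lambda_b}=1-\lambda$ from the per-mode $2\times 2$ analysis (exactly the paper's bounds \eqref{gt_diff_W_bound}), and substitute with $\alpha=\sqrt{n/K}$, including the correct treatment of $\zeta_0$ via the zero-average projection. The only minor imprecision is that for ATC-GT each $2\times 2$ block has the repeated eigenvalue $\lambda_i$ and is defective, so the factorization is an $\epsilon$-scaled Jordan form rather than a genuine eigenvector diagonalization, and positive semi-definiteness of $W$ is not what guarantees stability (that already follows from $|\lambda_i|<1$ under Assumption~\ref{assump:network}); it is what lets one take $\epsilon=(1-\lambda_i)/2$ so that $\gamma\le(1+\lambda)/2$ and $\hat{\V}$ has $O(1)$ conditioning.
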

\noindent 
Note that linear speedup  is achieved when the dominating term is $O(\tfrac{1}{\sqrt{nK}})$. This is case when $K$ is sufficiently large so that the higher order terms are less than or equal to $O(\tfrac{1}{\sqrt{nK}})$. For example, for ED/D$^2$ bound in \eqref{ED_nonconvex_result}, linear speedup is achieved when
\begin{align*}
	 & O \bigg( \hspace{-0.5mm}\frac{n  \lambda^2     }{ (1-\lambda)  K}
	 	  	 \hspace{-0.5mm}+\hspace{-0.5mm} \frac{n  \lambda^2     }{ (1-\lambda)^{3}  K^2}
	 	  	  \hspace{-0.5mm}+\hspace{-0.5mm}    \frac{   n  }{  (1-\lambda)^2 K^2} \bigg)
	 	  	  \leq O \bigg(  \frac{ 1 }{ \sqrt{n K}}	 
	    \bigg).
\end{align*}
The above holds when $K \geq O\left(n^3/(1-\lambda)^2 \right)$. 
Here, we treated $\underline{\lambda}$ for ED/D$^2$ as constant since, for example, if we set $W \leftarrow (1-\theta)W+\theta I$ for constant $\theta>0$, then it holds that $\underline{\lambda} \in (\theta, 1)$. Table \ref{table_non_convex} compares the our results with existing works. It is clear that our bounds are tighter in terms of the spectral gap $1-\lambda$. Moreover, ED/D$^2$ and GT methods have enhanced transient time compared to \textsc{Dsgd}.

\begin{remark}[\sc Step size selection] \label{remark:stepsize} \rm
We remark that in Corollaries  \ref{corollary:ED} and \ref{corollary:GT}, the step size is chosen as $\alpha=\sqrt{n/K}$ to simplify the expressions. This choice is not optimal and tighter rates can be obtained if we meticulously select the step size. For example, we can select the step size as in Theorem \ref{thm_nonconvex} and obtain a rate similar to \eqref{eq:thm:nonconvex_rate} where the dominating term is $O(\sigma/\sqrt{nK})$ instead of $O(\sigma^2/\sqrt{nK})$. We can even get a tighter rate by carefully selecting the step size similar to \cite{koloskova2020unified}.  However, such choices does not affect the transient time order in terms of network quantities, which is the main conclusion of our results. \qd

\end{remark} 
\begin{remark}[\sc EXTRA and other GT variations] \rm
We remark that the matrix $\G$ defined in \eqref{G_matrix} is identical for both EXTRA \eqref{EXTRA} and  ED/D$^2$. Hence, the convergence rate of ED/D$^2$ given in \eqref{ED_nonconvex_result} also holds for EXTRA with the exception of the value of $\lambda^2$ in the {\em numerators}, which should be replaced by  one for EXTRA ($\lambda^2 \to 1$  in numerators).  Likewise, for the non-ATC-GT \eqref{GT_nonatc_alg} and semi-ATC-GT (see Appendix \ref{app:relation_to_other_methods}), the matrix $\G$ is identical to ATC-GT and the convergence rate of ATC-GT given \eqref{ATC_GT_nonconvex_result} holds for these other variations except for the value of $\lambda^2$ in the {\em numerators}, which is one for non-ATC-GT ($\lambda^2 \to 1$  in numerators) and $\lambda$ for the semi-ATC-GT ($\lambda^2 \to \lambda$  in numerators). Please see Appendix \ref{app:bounds_special_cases} for more details.

 Finally, note that for static and undirected graphs, our technique can be used to improve  the network bounds for EXTRA, ED/D$^2$, and  GT modifications for other cost-function settings such variance-reduced settings \cite{xin21hybrid}. 
 \qd
\end{remark}

\subsection{Convergence Under PL Condition}
We next state the convergence of \suda~under the PL condition given in Assumption \ref{assump:PL}.
 \begin{theorem} [\bfseries \small  PL case] \label{thm_PL_linear_conv}
	Suppose that Assumptions \ref{assump:network}--\ref{assump:PL} hold and the step size satisfies:
	\begin{align}  \label{pl_thm_ineq_constant}
\alpha \leq \min\left\{ 
\frac{1-\gamma}{3L},~
\frac{ \underline{\lambda_b}}{2  L },~
\frac{1-\gamma}{\sqrt{6} L v_1 v_2 \lambda_a}
,~
\left(\frac{\mu\underline{\lambda_b}^{2}(1-\gamma)}{8 L^{4} v_1^{2} v_2^{2} } \right)^{1/3}
\right\}.
\end{align}
Then, the iterates $\{\x^k\}$ of \suda~ with $\x^0=\one \otimes x^0$ ($x^0 \in \real^d$) satisfy
\begin{equation}  \label{eq:PL_thm_1}
\begin{aligned}
\frac{1}{n}\sum_{i=1}^n \Ex [f(x_i^{k}) - f^\star]
& \leq  
\left(1-\frac{\alpha \mu}{2} \right)^k r_0
\\
& \quad
+O\left(\frac{\alpha L \sigma^2}{\mu n}
+\frac{\alpha^2 L^2 v_1^2 v_2^2 \lambda_a^2  \sigma^2}{\mu (1-\gamma)}
+\frac{ \alpha^4 L^4 v_1^2 v_2^2 \lambda_a^2  \sigma^2}{  \mu \underline{\lambda_b}^2(1-\gamma)^2 n } \right),
\end{aligned}
\end{equation}
where $r_0 =2\Ex [f(x^{0})-f^\star] +
\alpha^2 L v_1^2 v_2^2 \zeta_0^2/ \underline{\lambda_b}^2$ and the quantities $v_1$, $v_2$, $\gamma$, $\lambda_a$, $\underline{\lambda_b}$ and $\zeta_0$ are defined as in Theorem \ref{thm_nonconvex}. 
\qd
\end{theorem}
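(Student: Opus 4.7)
The plan is to set up a Lyapunov function of the form
\[
V^{k} \;\define\; \Ex\bigl[f(\bar{x}^{k}) - f^\star\bigr] + c\,\alpha^{2}\,\Ex\|\hat{\e}^{k}\|^{2}
\]
for a carefully chosen constant $c>0$, and to show a one-step contraction $V^{k+1} \le (1-\tfrac{\alpha\mu}{2})V^{k} + (\text{noise})$, which unrolls to give \eqref{eq:PL_thm_1} after converting $f(\bar{x}^{k})$ to $\tfrac{1}{n}\sum_{i} f(x_{i}^{k})$ via $L$-smoothness. The two ingredients are (i) a descent-type inequality for the average iterate coming from \eqref{error_average_diag}, and (ii) a contraction inequality for the transformed consensus/tracking error coming from \eqref{error_check_diag}.

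First I would apply $L$-smoothness of $f$ to $\bar{x}^{k+1} = \bar{x}^{k} - \alpha \overline{\grad\f}(\x^k) - \alpha\bar{\w}^{k}$. Taking conditional expectation, using Assumption \ref{assump:noise} to extract the $\sigma^{2}/n$ variance term, and splitting $\overline{\grad\f}(\x^k) = \grad f(\bar{x}^{k}) + [\overline{\grad\f}(\x^k)-\grad f(\bar{x}^{k})]$, I would use $L$-smoothness again on the latter bracket to bound it by $\tfrac{L^2}{n}\|\x^{k}-\bar{\x}^{k}\|^{2}$. For $\alpha \le 1/L$ this yields (schematically)
\[
\Ex[f(\bar{x}^{k+1}) - f^\star] \le \Ex[f(\bar{x}^{k}) - f^\star] - \tfrac{\alpha}{2}\Ex\|\grad f(\bar{x}^{k})\|^{2} + \tfrac{\alpha L^2}{n}\Ex\|\x^{k}-\bar{\x}^{k}\|^{2} + \tfrac{\alpha^{2}L\sigma^{2}}{2n}.
\]
Invoking the PL condition, $-\tfrac{\alpha}{2}\|\grad f(\bar{x}^{k})\|^{2} \le -\alpha\mu(f(\bar{x}^{k})-f^\star)$, produces the contraction factor needed on the first Lyapunov term.

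Next I would bound $\Ex\|\hat{\e}^{k+1}\|^{2}$ using \eqref{error_check_diag} together with $\|\mathbf{\Gamma}\|=\gamma<1$. By Young's inequality with parameter $(1-\gamma)/\gamma$,
\[
\Ex\|\hat{\e}^{k+1}\|^{2} \le \tfrac{1+\gamma}{2}\Ex\|\hat{\e}^{k}\|^{2} + \tfrac{C\,\alpha^{2}\, v_{2}^{2} \lambda_{a}^{2}}{1-\gamma}\Bigl(L^{2}\Ex\|\x^{k}-\bar{\x}^{k}\|^{2} + \sigma^{2} + \tfrac{1}{\underline{\lambda_{b}}^{2}}\Ex\|\grad\f(\bar{\x}^{k+1})-\grad\f(\bar{\x}^{k})\|^{2}\Bigr),
\]
and then replace $\grad\f(\bar{\x}^{k+1}) - \grad\f(\bar{\x}^{k})$ by $L\|\bar{x}^{k+1}-\bar{x}^{k}\|$, which in turn equals $\alpha\|\overline{\grad\f}(\x^{k})+\bar{\w}^{k}\|$; the latter is bounded (via $L$-smoothness once more) by $\|\grad f(\bar{x}^{k})\|^{2}$, consensus error, and $\sigma^{2}/n$. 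Using \eqref{hat_relation_avg_Dev} to replace $\|\x^{k}-\bar{\x}^{k}\|^{2}$ by $v_{1}^{2}\|\hat{\e}^{k}\|^{2}$ (up to constants absorbed in $v_1,v_2,\underline{\lambda_{b}}$) closes the recursion purely in $V^{k}$ and noise, provided $\alpha$ is small enough that the induced $\|\grad f(\bar x^{k})\|^{2}$ term can be absorbed into the PL-induced negative term of the $f$-recursion.

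The main obstacle is choosing the constant $c$ and verifying that the coupling between the two inequalities yields a net contraction with rate $(1-\alpha\mu/2)$ under the step-size budget \eqref{pl_thm_ineq_constant}. Three separate absorption conditions must hold simultaneously: (a) the descent-step condition needs $\alpha \lesssim 1/L$; (b) absorbing the $\|\x^{k}-\bar{\x}^{k}\|^{2}$ term from the $f$-recursion into the contraction of $\|\hat{\e}^{k}\|^{2}$ requires $\alpha \lesssim (1-\gamma)/(Lv_{1}v_{2}\lambda_{a})$ and $\alpha \lesssim \underline{\lambda_{b}}/L$; and (c) making sure the gradient-difference term in the $\hat{\e}$ recursion does not overwhelm the PL decrease of $f(\bar{x}^{k})-f^{\star}$ produces the cubic bound $\alpha \lesssim (\mu\underline{\lambda_{b}}^{2}(1-\gamma)/(L^{4}v_{1}^{2}v_{2}^{2}))^{1/3}$ visible in \eqref{pl_thm_ineq_constant}. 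Once the contraction $V^{k+1} \le (1-\alpha\mu/2) V^{k} + \text{noise}$ is established, unrolling gives $V^{k} \le (1-\alpha\mu/2)^{k} V^{0} + \tfrac{2\,(\text{noise})}{\alpha\mu}$, which matches the three noise terms in \eqref{eq:PL_thm_1}; identifying $V^{0}$ with $r_{0}$ (using $\hat{\e}^{0}$ expressed via $\zeta_{0}$ as in the non-convex theorem) and finally passing from $f(\bar{x}^{k})$ to $\tfrac{1}{n}\sum_{i} f(x_{i}^{k})$ via the smoothness bound $\tfrac{1}{n}\sum_{i}f(x_{i}^{k}) \le f(\bar{x}^{k}) + \tfrac{L}{2n}\|\x^{k}-\bar{\x}^{k}\|^{2}$ completes the argument.
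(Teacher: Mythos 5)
Your overall architecture is the right one and matches the paper's in its ingredients: the descent inequality for $\bar{x}^k$ combined with the PL condition (the paper's Lemma \ref{lemma_descent_inq} plus \eqref{PL_cond}), the contraction inequality for $\hat{\e}^k$ (Lemma \ref{lemma:cons_inequ}), the identification of all four step-size constraints including the cubic one, the use of \eqref{e_0_bound} to express $\hat{\e}^0$ via $\zeta_0$, and the final passage from $f(\bar{x}^k)$ to $\tfrac{1}{n}\sum_i f(x_i^k)$. (Your version of that last step, noting that the first-order cross term vanishes upon averaging, is in fact slightly tighter than the paper's \eqref{f_PL_bound}, which pays a factor of $2$ via Young's inequality.)

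The genuine gap is in the noise bookkeeping of the scalar Lyapunov function $V^k=\Ex[f(\bar{x}^k)-f^\star]+c\alpha^2\Ex\|\hat{\e}^k\|^2$. To absorb the $\tfrac{\alpha L^2 v_1^2v_2^2}{2}\Ex\|\hat{\e}^k\|^2$ term from the descent inequality into the $\tfrac{1-\gamma}{2}$ contraction slack of the consensus recursion, the weight on $\Ex\|\hat{\e}^k\|^2$ must be at least of order $\alpha L^2v_1^2v_2^2/(1-\gamma)$, i.e.\ $\Theta(\alpha)$, not $\Theta(\alpha^2)$ with $c$ constant. With that $\Theta(\alpha)$ weight, unrolling $V^{k+1}\le(1-\tfrac{\alpha\mu}{2})V^k+N$ does reproduce the three noise terms of \eqref{eq:PL_thm_1} \emph{for the component} $\Ex[f(\bar{x}^k)-f^\star]\le V^k$; but your final conversion needs $L v_1^2v_2^2\Ex\|\hat{\e}^K\|^2$ with an $O(1)$ weight, and extracting it from $V^K$ costs a factor $V^K/w=\Theta\big(\tfrac{1-\gamma}{\alpha L}\big)V^K$, which turns $\tfrac{\alpha L\sigma^2}{\mu n}$ into a non-vanishing $\Theta(\tfrac{\sigma^2}{\mu n})$ and the middle term into $\Theta(\tfrac{\alpha L v_1^2v_2^2\lambda_a^2\sigma^2}{\mu})$ rather than the claimed $\Theta(\tfrac{\alpha^2L^2v_1^2v_2^2\lambda_a^2\sigma^2}{\mu(1-\gamma)})$ --- and the latter distinction is exactly what makes the ED/D$^2$ and GT transient times in Table \ref{table_pl} go through. (Choosing instead a constant weight $w=Lv_1^2v_2^2$ fixes the conversion but loses the $\tfrac{\alpha L}{1-\gamma}$ damping on the consensus noise, again giving the looser $\Theta(\alpha)$ middle term.) The paper sidesteps this by keeping the pair $(2\Ex\tilde f(\bar x^k),\,Lv_1^2v_2^2\Ex\|\hat{\e}^k\|^2)$ as a genuine $2\times2$ linear recursion with matrix $H$ in \eqref{linear_dynamical_error2}, bounding $\|H\|_1\le 1-\tfrac{\alpha\mu}{2}$ and computing $(I-H)^{-1}h$ exactly: the consensus noise then reaches the function-value component only through the off-diagonal entry $\alpha L$, which supplies the missing factor $\tfrac{\alpha L}{1-\gamma}$. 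To repair your argument you would either adopt that matrix form, or run a two-pass argument (first bound $\Ex\tilde f(\bar x^k)$ from the Lyapunov recursion, then separately unroll the $\hat{\e}$ recursion using that bound).
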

\noindent The proof of Theorem \ref{thm_PL_linear_conv} is given in Appendix \ref{app:thm_pl_lin_conv_proof}. To discuss the implication of the above result, we specialize it to  ED/D$^2$ and ATC-GT.

\begin{corollary}[\bfseries \small  ED/D$^2$  convergence under PL condition] \label{corollary:pl_ED}
\sloppy Let the same conditions as in Theorem \ref{thm_PL_linear_conv} hold and suppose further that $W$ is positive semi-definite. If the step size satisfies $\alpha \leq \min\left\{ 
\frac{1-\sqrt{\lambda}}{3L},~
\frac{ \sqrt{1-\lambda}}{2  L },~
\frac{(1-\sqrt{\lambda}) \underline{\lambda} }{2\sqrt{12}  L \lambda}
,~
\left(\frac{\mu (1-\lambda)(1-\sqrt{\lambda}) \underline{\lambda}}{18 L^{4}  } \right)^{1/3} 
\right\}$, then ED/D$^2$ \eqref{exact_diff}  with $\x^0=\one \otimes x^0$ ($x^0 \in \real^d$) has convergence rate 
\begin{equation} \label{ED_PL_result}
\begin{aligned}
 \frac{1}{n}\sum_{i=1}^n \Ex [f(x_i^{k})-f^\star] &\leq 
 (1-\tfrac{\alpha \mu}{2})^k r_0 +
O\left(\frac{\alpha L \sigma^2}{\mu n}
+\frac{\alpha^2 L^2  \lambda^2  \sigma^2}{\mu (1-\lambda)\underline{\lambda} }
+\frac{ \alpha^4 L^4  \lambda^2  \sigma^2}{  \mu (1-\lambda)^3 \underline{\lambda} n } \right),
\end{aligned}
\end{equation}
where $r_0 =2\Ex [f(x^{0})-f^\star] +
8 \alpha^2 L  \zeta_0^2/ (1-\lambda) \underline{\lambda}$,  $\zeta_0^2 = \textstyle \frac{1}{n} \sum_{i=1}^n \big\|\sum\limits_{j \in \cN_i} w_{ij} \grad f_j(x^0)-\grad f(x^0) \big\|^2$, and $\underline{\lambda}$ is the minimum non-zero eigenvalue of $W$. Hence, selecting $\alpha=2\ln( K^2)/\mu K$, we obtain
     \begin{align} \label{rate_ED_pl}
\frac{1}{n}\sum_{i=1}^n \Ex [f(x_i^{K}) - f^\star] 
& \leq  
\frac{2\Ex [f(x^{0})-f^\star]}{ K^2}  
 \nonumber \\
 & \quad + \tilde{O}\left( 
\frac{  \sigma^2}{K  n}
 +\frac{   \lambda^2  \sigma^2}{K^2  (1-\lambda) \underline{\lambda}}
 + \frac{  \zeta_0^2}{K^4  (1-\lambda) \underline{\lambda} } 
 +
\dfrac{    \lambda^2  \sigma^2}{   K^4 (1-\lambda)^3 \underline{\lambda} n } \right),
\end{align}
where   $\tilde{O}(\cdot)$ hides logarithmic factors. 
 \begin{proof}
 Equation \eqref{ED_PL_result} follows from Theorem \ref{thm_PL_linear_conv} and the bounds \eqref{exact_diff_bounds}  derived in Appendix \ref{app:bounds_special_cases}. Now, if we set $\alpha=2\ln( K^2)/\mu K$, then $
    1-\tfrac{\alpha \mu}{2} \leq \exp(- \alpha \mu K/2) =\frac{1}{K^2}$ where $\exp(\cdot)$ denote the exponential function. Hence, for $\alpha=2\ln(K^2)/\mu K$ and large enough $K$, inequality \eqref{ED_PL_result} can be upper bounded by \eqref{rate_ED_pl}.
\end{proof}
\end{corollary}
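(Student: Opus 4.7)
The plan is to treat Corollary~\ref{corollary:pl_ED} as a direct instantiation of Theorem~\ref{thm_PL_linear_conv} for the ED/D$^2$ choice $\A=\W$, $\B^2=\I-\W$, $\C=\I$, followed by a short computation with the step size $\alpha=2\ln(K^2)/(\mu K)$. The substantive work lies in evaluating the five algorithm-dependent constants $\gamma$, $\underline{\lambda_b}$, $\lambda_a$, $v_1$, $v_2$ from Theorem~\ref{thm_PL_linear_conv} in terms of $\lambda$ and $\underline{\lambda}$, under the added PSD assumption on $W$. By Assumption~\ref{assump:network}, $\A$, $\B^2$, $\C$ are simultaneous polynomials in $\W$, so from \eqref{znznas82} I read off $\lambda_{a,i}=\lambda_i$ and $\lambda_{b,i}^{2}=1-\lambda_i$; PSD-ness forces $\lambda_i\in[\underline{\lambda},\lambda]$ on the non-unit spectrum, yielding $\lambda_a=\lambda$ and $\underline{\lambda_b}=\sqrt{1-\lambda}$ immediately.

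The main obstacle is computing $\gamma$, $v_1$, $v_2$. The matrix $\G$ from \eqref{G_matrix} is block-diagonal in the eigenbasis of $\W$ with $2\times 2$ blocks
\begin{align*}
\G_i=\begin{bmatrix} 2\lambda_i-1 & -\sqrt{1-\lambda_i} \\ \sqrt{1-\lambda_i} & 1\end{bmatrix},
\end{align*}
and a direct characteristic-polynomial calculation should give two complex-conjugate eigenvalues of modulus $\sqrt{\lambda_i}$, hence $\gamma=\|\mathbf{\Gamma}\|=\sqrt{\lambda}$ (matching the $(1-\sqrt{\lambda})$ factor in the stated step-size bound). To pin down $v_1=\|\hat{\V}\|$ and $v_2=\|\hat{\V}^{-1}\|$ from Lemma~\ref{lemma:diagonalization}, I would diagonalize each $\G_i$ separately and track the conditioning of the resulting eigenvector matrices; the anticipated outcome is $v_1 v_2=O(1/\sqrt{\underline{\lambda}})$, so that the combination $v_1 v_2 \lambda_a/\underline{\lambda_b}$ reproduces the $\lambda/((1-\sqrt{\lambda})\underline{\lambda})$-type factor appearing in the stated step-size constraint. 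This eigenblock bookkeeping is the only real difficulty, and would be isolated into a separate lemma (mirroring Appendix~\ref{app:bounds_special_cases}). The initial-bias constant $\zeta_0$ collapses to the stated network-weighted expression by inserting $\A=\W$ and $\x^0=\one\otimes x^0$ into its definition in Theorem~\ref{thm_nonconvex}, while the bound $\zeta_0^2\leq\lambda^2\varsigma_0^2$ follows because $\A-\tfrac{1}{n}\one\one\tran\otimes I_d$ restricted to mean-zero vectors has spectral norm $\lambda$.

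With these five constants substituted, condition \eqref{pl_thm_ineq_constant} collapses to the step-size bound stated in the corollary, and the right-hand side of \eqref{eq:PL_thm_1} becomes \eqref{ED_PL_result} after regrouping the $\underline{\lambda_b}^{-2}=1/(1-\lambda)$ and $\lambda_a^2=\lambda^2$ factors. For the final rate \eqref{rate_ED_pl} I would insert $\alpha=2\ln(K^2)/(\mu K)$ and use $(1-\alpha\mu/2)^K\leq\exp(-\alpha\mu K/2)=K^{-2}$; this decays the leading term at rate $K^{-2}$, and when applied to the $\alpha^2\zeta_0^2/((1-\lambda)\underline{\lambda})$ summand hidden inside $r_0$ produces the $\tilde{O}(\zeta_0^2/[K^4(1-\lambda)\underline{\lambda}])$ entry. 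The three residual noise terms, proportional to $\alpha$, $\alpha^2$ and $\alpha^4$, then pick up matching powers of $\ln K/K$, and the logarithmic factors are absorbed into the $\tilde{O}(\cdot)$ notation to give \eqref{rate_ED_pl} exactly.
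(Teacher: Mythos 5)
Your proposal follows essentially the same route as the paper: evaluate the five constants $\gamma,\underline{\lambda_b},\lambda_a,v_1,v_2$ for the ED/D$^2$ choice $\A=\W$, $\B^2=\I-\W$, $\C=\I$ (exactly what Appendix \ref{app:bounds_special_cases} does, yielding $\gamma=\sqrt{\lambda}$, $\lambda_a=\lambda$, $\underline{\lambda_b}=\sqrt{1-\lambda}$, $v_1^2\leq 4$, $v_2^2\leq 2/\underline{\lambda}$), substitute into Theorem \ref{thm_PL_linear_conv}, and then insert $\alpha=2\ln(K^2)/(\mu K)$ with $(1-\alpha\mu/2)^K\leq e^{-\alpha\mu K/2}=K^{-2}$. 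The one inaccuracy is your claim that PSD-ness forces $\lambda_i\in[\underline{\lambda},\lambda]$ and that each $G_i$ can be diagonalized with complex-conjugate eigenvalues of modulus $\sqrt{\lambda_i}$: when $\lambda_i=0$ the block $G_i$ is nilpotent and defective, and the paper handles this via a scaled Jordan factorization $V_i=T_iE_i$ with $\epsilon^2=\lambda$ so that $\|\Gamma_i\|=\sqrt{\lambda}$ and $\|V_i^{-1}\|^2\leq 2/\lambda$ remain consistent with the uniform bounds; your plan as written would stall on that corner case, though the fix is already contained in Lemma \ref{lemma:diagonalization}, which you cite.
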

\begin{corollary}[\bfseries \small  ATC-GT convergence under PL condition] \label{corollary:pl_GT}
\sloppy Let the same conditions as in Theorem \ref{thm_PL_linear_conv} hold and assume further that $W$ is positive semi-definite. Then, ATC-GT \eqref{GT_atc_alg} with  $\x^0=\one \otimes x^0$ ($x^0 \in \real^d$) and  $\alpha \leq \min\left\{ 
\frac{1-\lambda}{6L}
,~
\frac{1-\lambda}{ 6\sqrt{18} L \lambda^2 }
,~
\left(\frac{\mu (1-\lambda)^3}{432 L^{4}  } \right)^{1/3}
\right\}$  has convergence rate:
\begin{equation} \label{ATC_GT_PL_result}
\begin{aligned}
 \frac{1}{n}\sum_{i=1}^n \Ex [f(x_i^{k})-f^\star] &\leq 
 (1-\tfrac{\alpha \mu}{2})^k r_0
 +
O\left(\frac{\alpha L \sigma^2}{\mu n}
+\frac{\alpha^2 L^2  \lambda^4  \sigma^2}{\mu (1-\lambda)}
+\frac{ \alpha^4 L^4  \lambda^4  \sigma^2}{  \mu (1-\lambda)^4 n } \right).
\end{aligned}
\end{equation}
where $r_0 =2\Ex [f(x^{0})-f^\star] +
27 \alpha^2 L  \zeta_0^2/ (1-\lambda)^2$ and $\zeta_0^2 = \textstyle \frac{1}{n} \sum_{i=1}^n \big\|\sum\limits_{j \in \cN_i} [w_{ij}]^2 \grad f_j(x^0)-\grad f(x^0) \big\|^2$.
Hence, if we set $\alpha=2\ln( K^2)/\mu K$, then it holds that
      \begin{align} \label{rate_GT_pl}
\frac{1}{n}\sum_{i=1}^n \Ex [f(x_i^{K}) - f^\star] 
 &\leq  
\frac{2\Ex [f(x^{0})-f^\star]}{K^2}  
 \nonumber \\
 & \quad + \tilde{O}\left( 
\frac{  \sigma^2}{K  n}
 +\frac{   \lambda^4  \sigma^2}{K^2  (1-\lambda)}
 + \frac{   \zeta_0^2}{K^4  (1-\lambda)^2  }
 +
\frac{   \lambda^4  \sigma^2}{   K^4 (1-\lambda)^4 n } \right),
\end{align}
where 	 $\tilde{O}(\cdot)$ hides logarithmic factors. 
 \begin{proof}
Equation \eqref{ATC_GT_PL_result} follows from Theorem \ref{thm_PL_linear_conv} and the bounds \eqref{gt_diff_W_bound}  derived in Appendix \ref{app:bounds_special_cases}. Inequality \eqref{rate_GT_pl} follows by subsisting  $\alpha=2\ln(K^2)/\mu K$, into \eqref{gt_diff_W_bound}  and using  $
    1-\tfrac{\alpha \mu}{2} \leq \exp(- \alpha \mu K/2) =\frac{1}{K^2}$.
\end{proof}
\end{corollary}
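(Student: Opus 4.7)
The plan is to derive Corollary \ref{corollary:pl_GT} as a direct specialization of Theorem \ref{thm_PL_linear_conv} to the ATC-GT choice $\A=\W^2$, $\B=\I-\W$, $\C=\I$. First I would read off from Appendix \ref{app:bounds_special_cases} the explicit values of the generic network quantities $\{\gamma,\lambda_a,\underline{\lambda_b},v_1,v_2\}$ that enter Theorem \ref{thm_PL_linear_conv}. Since $W$ is symmetric and positive semi-definite with spectrum in $[0,1]$ and mixing rate $\lambda$, the restriction of $\A=\W^2$ to the orthogonal complement of the consensus subspace has norm $\lambda_a=\lambda^2$, while $\B=\I-\W$ has smallest nonzero eigenvalue $1-\lambda$, so $\underline{\lambda_b}=1-\lambda$. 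The fundamental factorization of $\G$ in Lemma \ref{lemma:diagonalization}, specialized to ATC-GT in the appendix, then yields a contraction rate $\gamma$ and factorization constants $v_1,v_2$ whose combined dependence on the spectrum reduces to the form $\gamma,\; v_1 v_2\lambda_a,\; v_1^2 v_2^2/\underline{\lambda_b}^2$ appearing as the relevant structural ratios in \eqref{eq:PL_thm_1}. These are precisely the quantities collected as \eqref{gt_diff_W_bound}.

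Next I would verify that each of the four terms in the step size restriction \eqref{pl_thm_ineq_constant} simplifies to a corresponding term in the corollary's restriction on $\alpha$. Concretely, $(1-\gamma)/(3L)$ becomes $(1-\lambda)/(6L)$, the term $\underline{\lambda_b}/(2L)$ is dominated by the others so may be dropped, $(1-\gamma)/(\sqrt{6}L v_1 v_2 \lambda_a)$ becomes $(1-\lambda)/(6\sqrt{18}L\lambda^2)$, and the last cube-root term becomes $(\mu(1-\lambda)^3/(432 L^4))^{1/3}$. With these substitutions the step size regime in Theorem \ref{thm_PL_linear_conv} is implied by the regime stated in the corollary.

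Substituting $\lambda_a=\lambda^2$, $\underline{\lambda_b}=1-\lambda$, and the bounds on $\gamma$ and $v_1 v_2$ directly into the general bound \eqref{eq:PL_thm_1} gives \eqref{ATC_GT_PL_result}; the initial-condition coefficient $v_1^2 v_2^2/\underline{\lambda_b}^2$ reduces to a constant multiple of $1/(1-\lambda)^2$, which yields the stated constant $27$ in $r_0$. The claimed bound $\zeta_0^2\le \lambda^4\varsigma_0^2$ for ATC-GT (where the inner network averaging is by the entries of $W^2$) is exactly what the appendix derives in \eqref{gt_diff_W_bound}.

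Finally, to pass to the iteration-complexity bound \eqref{rate_GT_pl} I would set $\alpha=2\ln(K^2)/(\mu K)$ and use the standard inequality
\begin{equation*}
\Bigl(1-\tfrac{\alpha\mu}{2}\Bigr)^{K} \le \exp\!\Bigl(-\tfrac{\alpha\mu K}{2}\Bigr) = \frac{1}{K^{2}},
\end{equation*}
which takes care of the geometric transient. The three variance terms in \eqref{ATC_GT_PL_result} carry factors $\alpha$, $\alpha^2$ and $\alpha^4$, which become $O(\ln K/K)$, $O(\ln^{2}\!K/K^{2})$ and $O(\ln^{4}\!K/K^{4})$ after the substitution; absorbing these $\log$ factors into $\tilde O(\cdot)$ and using that the $r_0$ piece contributes an additional $O(\alpha^{2}\zeta_0^{2}/(1-\lambda)^{2})$ term through the initial condition gives \eqref{rate_GT_pl}. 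I expect the only nontrivial step to be the appendix computation that diagonalizes the $2d(n-1)\times 2d(n-1)$ matrix $\G$ for ATC-GT sharply enough to obtain the claimed scalings of $\gamma$ and $v_1 v_2$; once those are in hand, the corollary is essentially bookkeeping.
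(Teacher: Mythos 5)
Your proposal is correct and follows essentially the same route as the paper: the corollary is obtained by plugging the ATC-GT bounds $\gamma\le(1+\lambda)/2$, $v_1^2\le 3$, $v_2^2\le 9$, $\lambda_a=\lambda^2$, $\underline{\lambda_b}=1-\lambda$ from \eqref{gt_diff_W_bound} into Theorem \ref{thm_PL_linear_conv}, and then setting $\alpha=2\ln(K^2)/(\mu K)$ with $(1-\alpha\mu/2)^K\le e^{-\alpha\mu K/2}=1/K^2$. Your constant checks (e.g., $2\sqrt{6}\cdot\sqrt{27}=6\sqrt{18}$ and $8\cdot 27\cdot 2=432$) match the stated step-size restrictions, and the only minor slip is attributing the bound $\zeta_0^2\le\lambda^4\varsigma_0^2$ to \eqref{gt_diff_W_bound} rather than to the separate computation in Corollary \ref{corollary:GT}, which does not affect the argument.
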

\noindent We note that as in Remark \ref{remark:stepsize}, the selected step sizes in Corollaries  \ref{corollary:pl_ED} and \ref{corollary:pl_GT}  are not optimized. The hidden log factors can be removed  if we adopt decaying step-sizes techniques (\eg,  \cite{pu2019sharp}). However, the main conclusion we want to emphasize is the network dependent bounds, which do not change if select better step size choices.    Under the PL condition,  linear speedup  is achieved when $K$ is large enough such that the dominating term is $O(\tfrac{1}{nK})$. Table \ref{table_pl} lists the transient times implied by the above result and compares them with existing results. It is clear that our results significantly improves upon existing GT results. Moreover, our bound for ED/D$^2$ matches the existing bound, which are  under the stronger assumption of strong-convexity. 
\begin{remark}[\sc Steady-state error] \rm 
For constant step size $\alpha$ independent of $k$, we can let $k$ goes to $\infty$ in \eqref{ED_PL_result} and \eqref{ATC_GT_PL_result} to arrive at the following steady state results.
\begin{itemize}
\item For ED/D$^2$ \eqref{exact_diff}, we have
\begin{equation} \label{ED_PL_ss_result}
\begin{aligned}
\limsup_{k \rightarrow \infty}  \frac{1}{n}\sum_{i=1}^n \Ex [f(x_i^{k})-f^\star] &\leq 
O\left(\frac{\alpha L \sigma^2}{\mu n}
+\frac{\alpha^2 L^2  \lambda^2  \sigma^2}{\mu (1-\lambda) }
+\frac{ \alpha^4 L^4  \lambda^2  \sigma^2}{  \mu (1-\lambda)^3  n } \right).
\end{aligned}
\end{equation}
	  
\item For ATC-GT \eqref{GT_atc_alg}, we have
\begin{equation} \label{ATC_GT_PL_ss_result}
\begin{aligned}
\limsup_{k \rightarrow \infty}  \frac{1}{n}\sum_{i=1}^n \Ex [f(x_i^{k})-f^\star] &\leq 
O\left(\frac{\alpha L \sigma^2}{\mu n}
+\frac{\alpha^2 L^2  \lambda^4  \sigma^2}{\mu (1-\lambda)}
+\frac{ \alpha^4 L^4  \lambda^4  \sigma^2}{  \mu (1-\lambda)^4 n } \right).
\end{aligned}
\end{equation}	
\end{itemize}
The bound \eqref{ED_PL_ss_result} for ED/D$^2$ has the same network dependent bounds as in the strongly-convex case \cite{yuan2020influence}. Moreover, the bound \eqref{ATC_GT_PL_ss_result} for ATC-GT improves upon existing bounds for both strongly-convex \cite{pu2021distributed} and PL settings \cite{xin2021improved}, which are on the order of $
O\big(\alpha^2  \lambda^2  \sigma^2/ (1-\lambda)^3\big)
$. 
\qd
\end{remark}

\section{Simulation Results}\label{sec:simu-nonconvex}
In this section, we validate the established theoretical results with numerical simulations. 
\subsection{Simulation for non-convex problems}
\noindent \textbf{The problem.} We consider the logistic regression problem with a non-convex regularization term \cite{antoniadis2011penalized,xin2021improved}.
The problem formulation is given by $\min_{x \in \real^d} \frac{1}{n}\sum_{i=1}^n f_i(x) + \rho\, r(x)$, where
\begin{equation} \label{non_convex_lr}
 \begin{aligned} 
f_i(x) = \frac{1}{L}\sum_{\ell=1}^L \ln\big(1 + \exp(-y_{i,\ell}h_{i,\ell}\tran x)\big) \quad \mbox{and} \quad r(x) = \sum_{j=1}^d \frac{x(j)^2}{ 1 + x(j)^2}.
\end{aligned}
\end{equation}
In the above problem, $x=\col\{x(j)\}_{j=1}^d \in \real^d$ is the unknown variable to be optimized, $\{h_{i,\ell}, y_{i,\ell}\}_{\ell=1}^L$ is the training dateset held by agent $i$ in which $h_{i,\ell}\in \mathbb{R}^d$ is a feature vector while $y_{i,\ell} \in \{-1,+1\}$ is the corresponding label. The regularization $r(x)$ is a smooth but non-convex function and the regularization constant $\rho > 0$ controls the influence of $r(x)$. 

\vspace{1mm}
\noindent \textbf{Experimental settings.} In  our experiments, we set $d=20$, $L=2000$ and $\rho = 0.001$. To control data heterogeneity across the agents, we first let each agent $i$ be associated with a local solution $x^\star_{i}$, and such $x^\star_i$ is generated by $x^\star_i = x^\star + v_i$ where $x^\star\sim \cN(0, I_d)$ is a randomly generated vector while $v_i \sim \cN(0, \sigma^2_h I_d)$ controls the similarity between each local solution. Generally speaking, a large $\sigma^2_h$ result in local solutions  $\{x_i^\star\}$ that are vastly different from each other. With $x_i^\star$ at hand, we can generate local data that follows distinct distributions. At agent $i$, we generate each feature vector $h_{i,\ell} \sim \cN(0, I_d)$. To produce
the corresponding label $y_{i,\ell}$, we generate a random variable $z_{i,\ell} \sim \cU(0,1)$. If $z_{i,\ell} \le 1 + \exp(-y_{i,\ell}h_{i,\ell}\tran x_i^\star)$, we set $y_{i,\ell} = 1$; otherwise $y_{i,\ell} = -1$. Clearly, solution $x_i^\star$ controls the distribution of the labels. In this way, we can easily control data heterogeneity by adjusting $\sigma^2_h$. Furthermore, to easily control the influence of gradient noise, we will achieve the stochastic gradient by imposing a Gaussian noise to the real gradient, \ie, $\widehat{\nabla f}_i(x) = {\nabla f}_i(x) + s_i$ in which $s_i\sim \cN(0, \sigma^2_{n} I_d)$. We can control the magnitude of the gradient noise by adjusting $\sigma^2_n$. The metric for all simulations is $\Ex \|\nabla f(\bar{x})\|^2$ where $\bar{x}=\frac{1}{n}\sum_{i=1}^n x_i$.

\vspace{1mm} 
\noindent \textbf{Performances of SUDA with and without data heterogeneity.} In this set of simulations, we will test the performance of ED/D$^2$ and ATC-GT (which are covered by the SUDA framework) with constant and decaying step size, and compare them with \textsc{Dsgd}. In the simulation, we organize $n=32$ agents into an undirected ring topology.

Fig.~\ref{fig:peformance-const-decay-homo} shows the performances of these algorithms with homogeneous data, \ie, $\sigma_h^2 = 0$. The gradient noise magnitude is set as $\sigma_n^2 =  0.001$. 
In the left plot, we set up a constant step size $\alpha = 0.01$. In the right plot, we set an initial step size as $0.01$, and then scale it by $0.5$ after every $100$ iterations. It is observed in Fig.~\ref{fig:peformance-const-decay-homo} that all stochastic algorithms perform similarly to each other with homogeneous data. 
Fig.~\ref{fig:peformance-const-decay-slight-heterogeneity} shows the performance under heterogeneous data settings with $\sigma_h^2 = 0.2$. The gradient noise and the step size values are the same as in Fig.~\ref{fig:peformance-const-decay-homo}. It is clear from Fig.~\ref{fig:peformance-const-decay-slight-heterogeneity} that ED/D$^2$ and ATC-GT are more robust to data heterogeneity compared to \textsc{Dsgd}. We see that  ED/D$^2$ can converge as well as \textsc{Psgd} while ATC-GT performs slightly worse than ED/D$^2$. 

\begin{figure}[t!]
	\centering
	\includegraphics[width=0.4\textwidth]{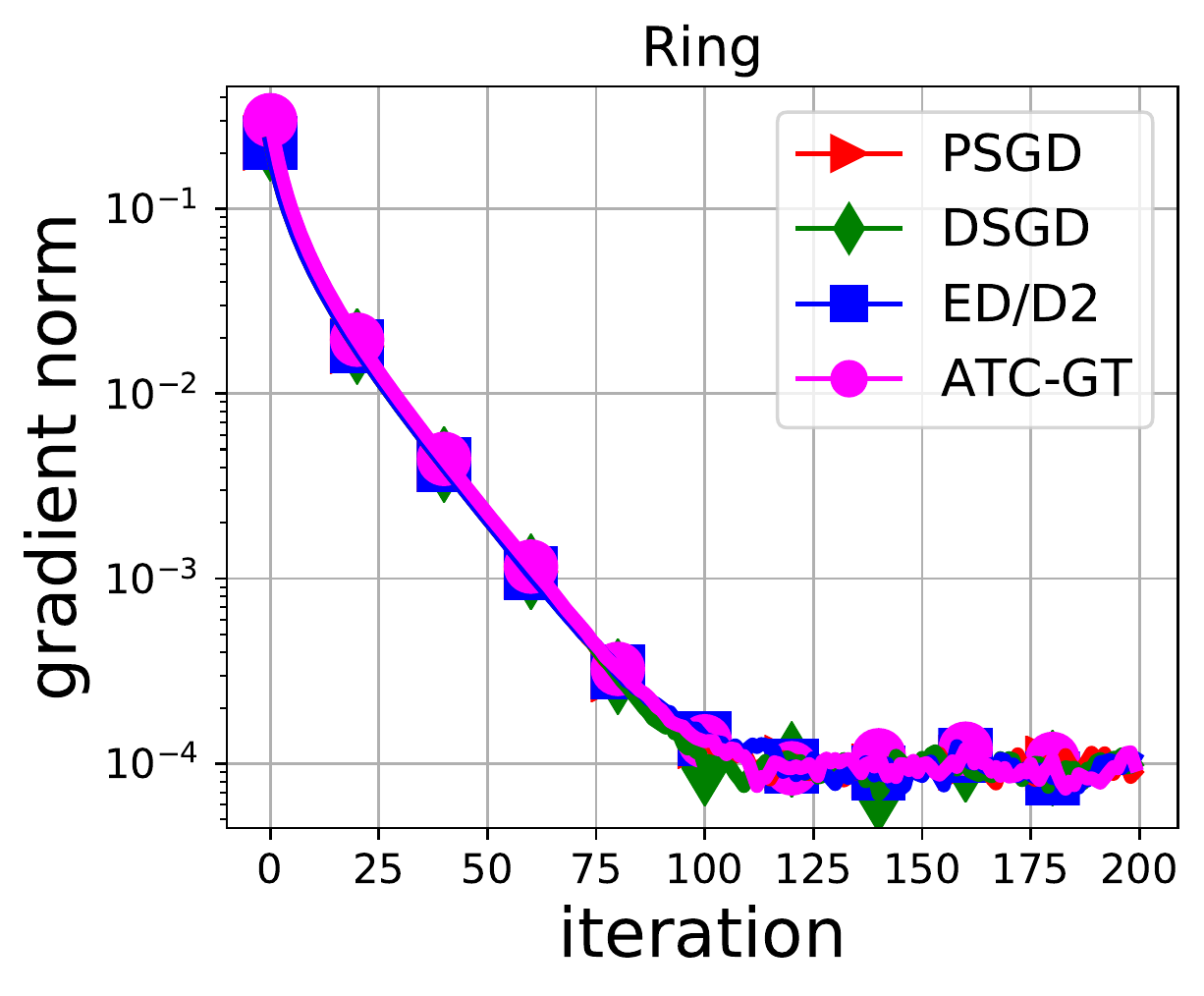}
	\hspace{1cm} 
	\includegraphics[width=0.4\textwidth]{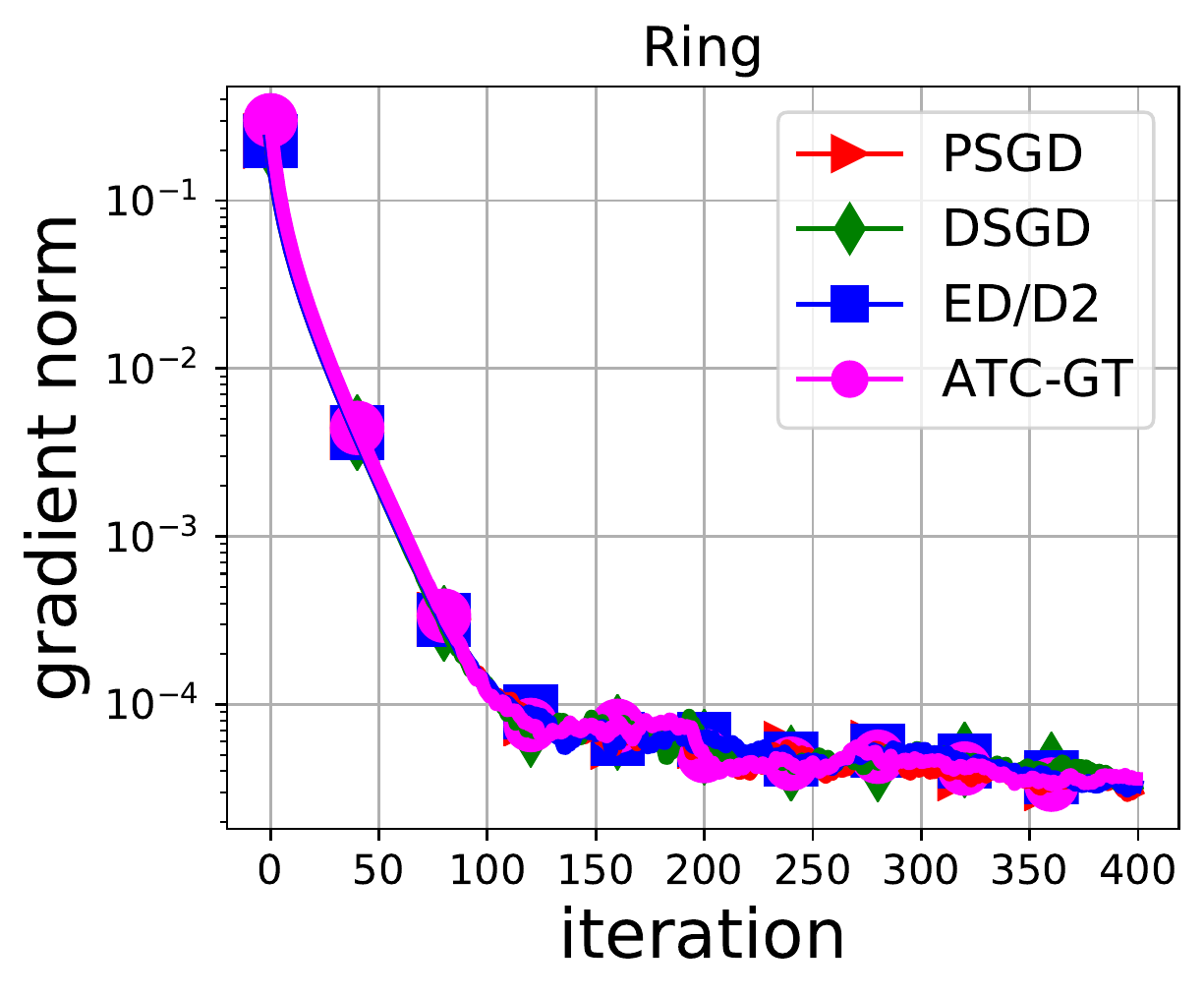}
	\caption{Performance of different stochastic algorithms to solve problem \eqref{non_convex_lr} with homogeneous data. Constant and learning  decaying rates are used in the left and right plots, respectively. All algorithms in both plots are over the ring topology with $\lambda = 0.99$.}
    \label{fig:peformance-const-decay-homo}
\end{figure}

\begin{figure}[t!]
	\centering
	\includegraphics[width=0.4\textwidth]{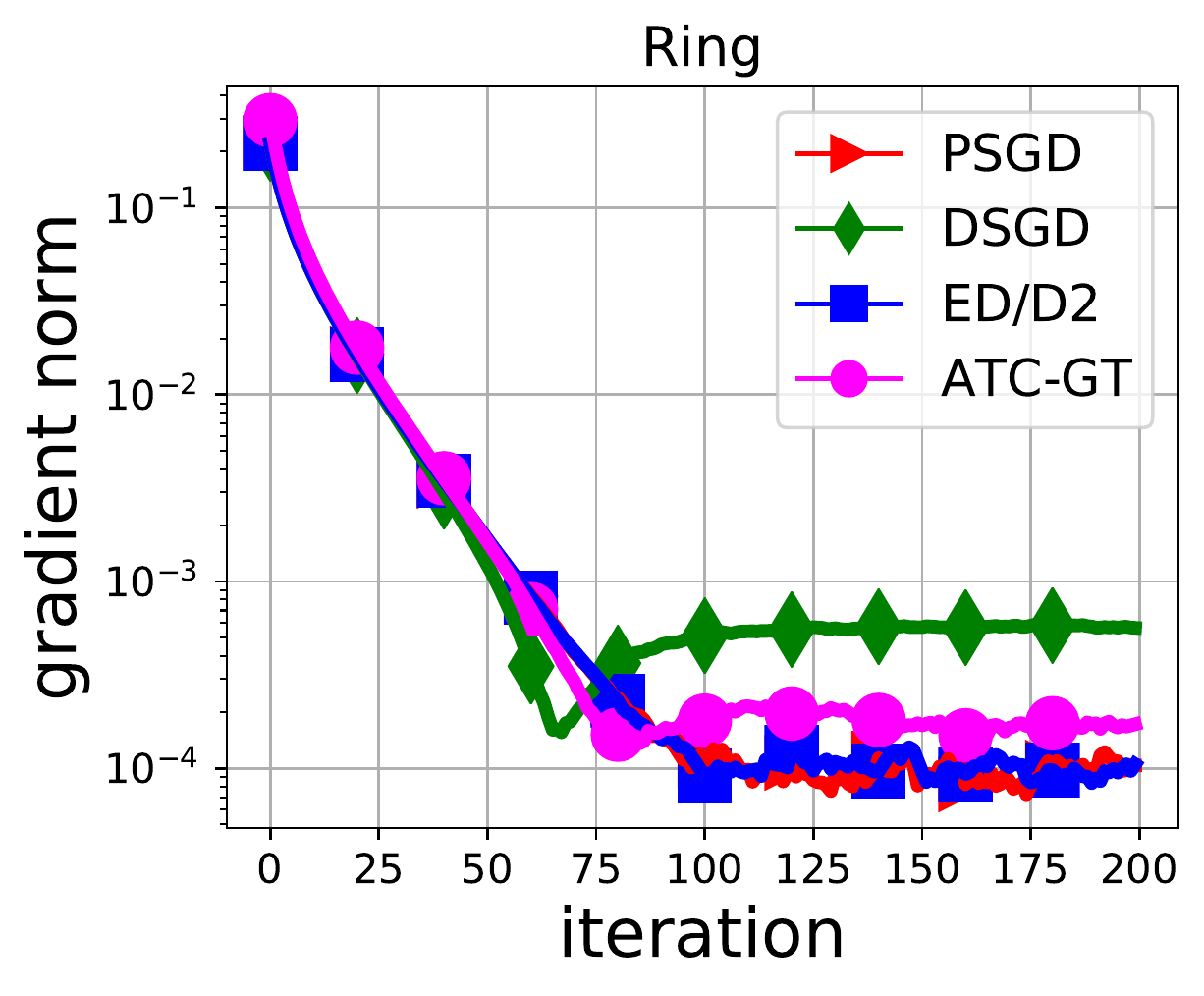}
	\hspace{1cm} 
	\includegraphics[width=0.4\textwidth]{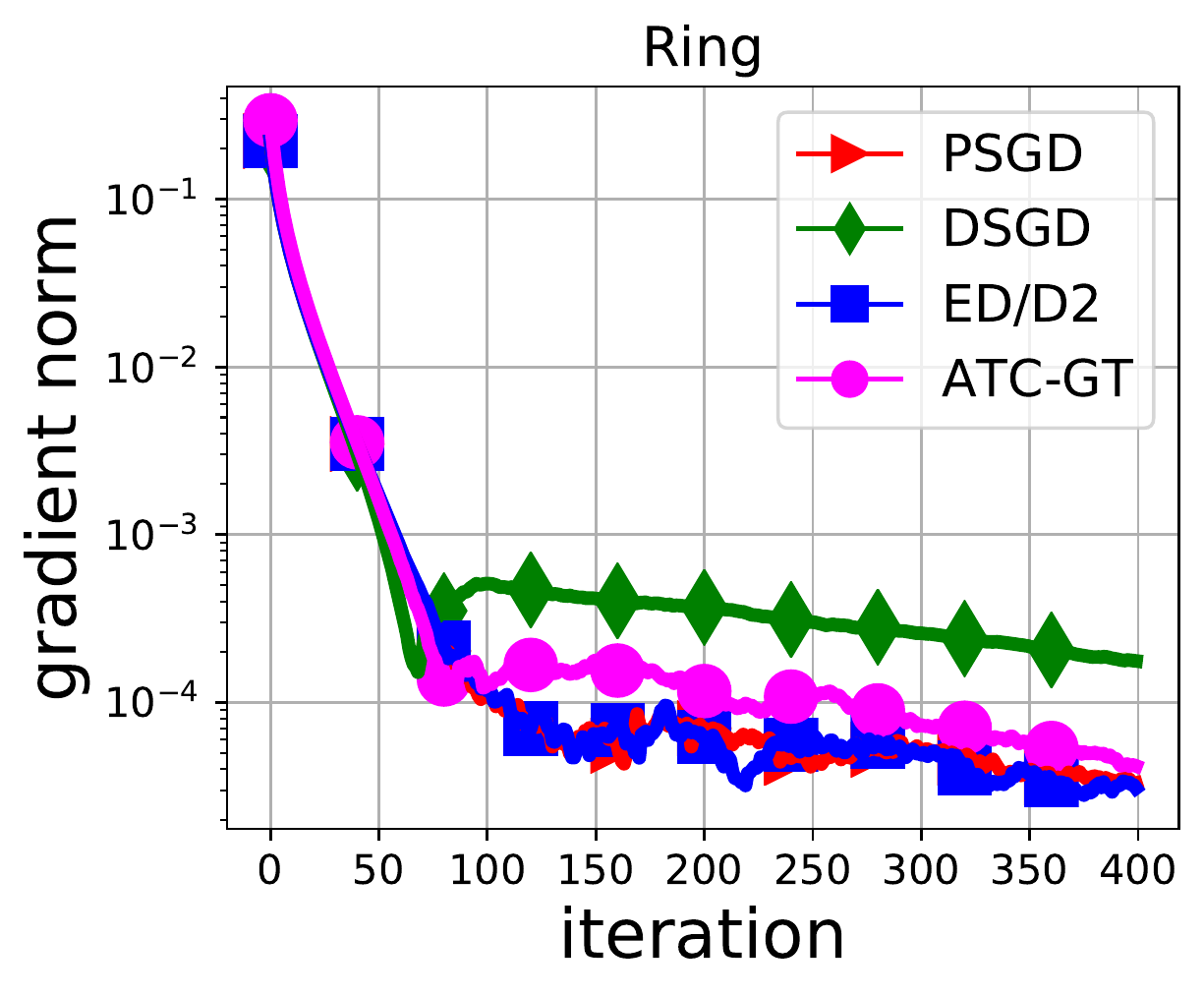}
	\caption{Performance of different stochastic algorithms to solve problem \eqref{non_convex_lr} with heterogeneous data. Constant and learning decaying rates are used in the left and right plots, respectively. All algorithms in both plots are over the ring topology with $\lambda = 0.99$. }
    \label{fig:peformance-const-decay-slight-heterogeneity}
\end{figure}

\vspace{1mm} 
\noindent \textbf{Influence of network topology.} In this set of simulations, we will test the influence of the spectral gap $1-\lambda$ on various decentralized stochastic algorithms. We generate four topologies: the Erdos-Renyi graph with probability $0.8$, the Ring topology, the Grid topology, and the scaled Ring topology with $\lambda = (9 + \lambda_{\rm Ring})/10$. The value of the mixing rate $\lambda$ for each topology is listed in the caption in Fig.~\ref{fig:influence_of_the_topology}. We utilize a constant step size $0.01$ for each plot. It is observed in Fig.~\ref{fig:influence_of_the_topology} that each decentralized algorithm will converge to a less accurate solution as $\lambda \to 1$ while \textsc{Psgd} is immune to the network topology. In addition, it is also observed that ED/D$^2$ is least sensitive to  the network topology while \textsc{Dsgd} is most sensitive (under heterogeneous setting), which is consistent with the our results listed in Table \ref{table_non_convex}.

\begin{figure}[t!]
	\centering
	\includegraphics[width=0.4\textwidth]{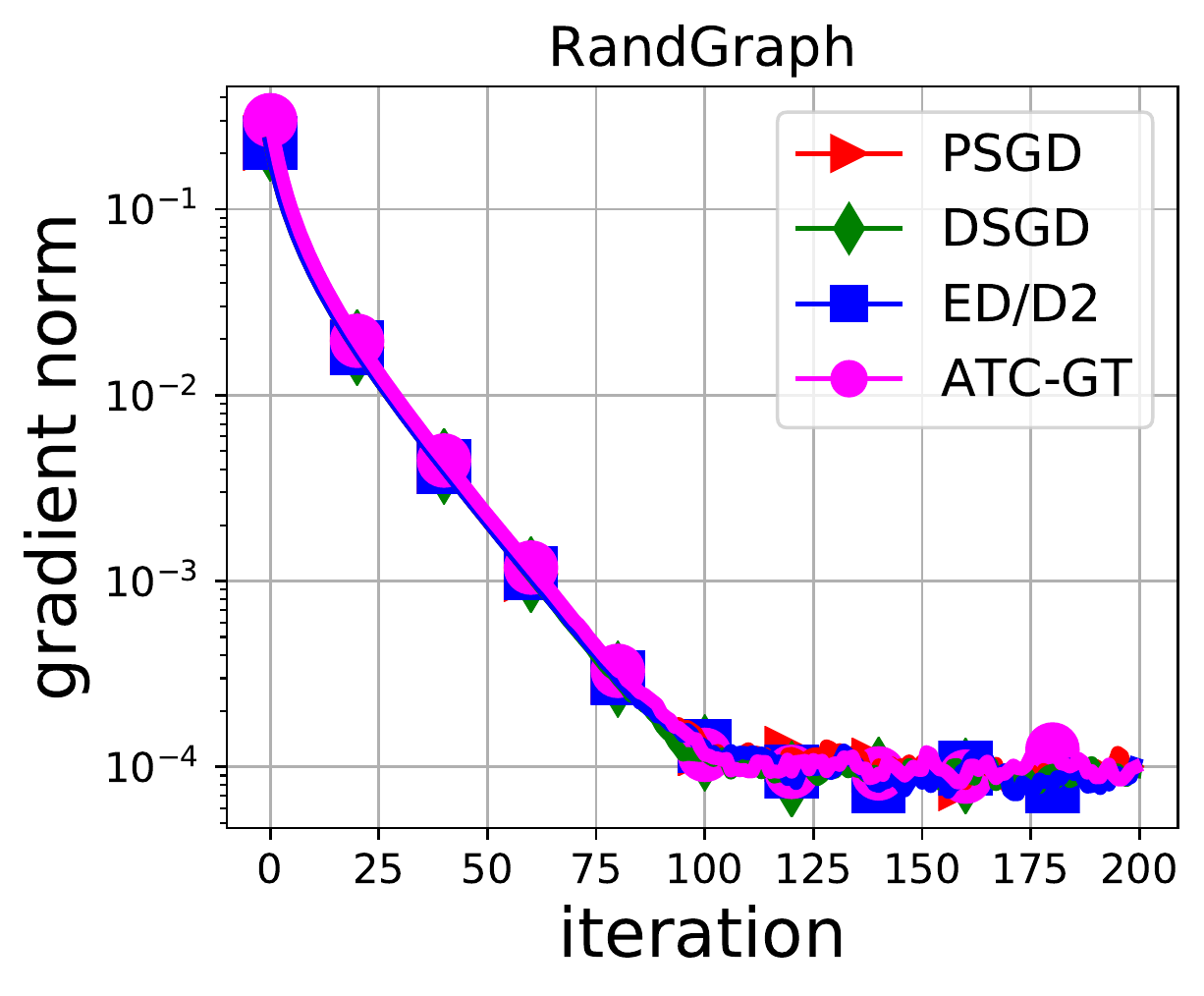} 
	\includegraphics[width=0.4\textwidth]{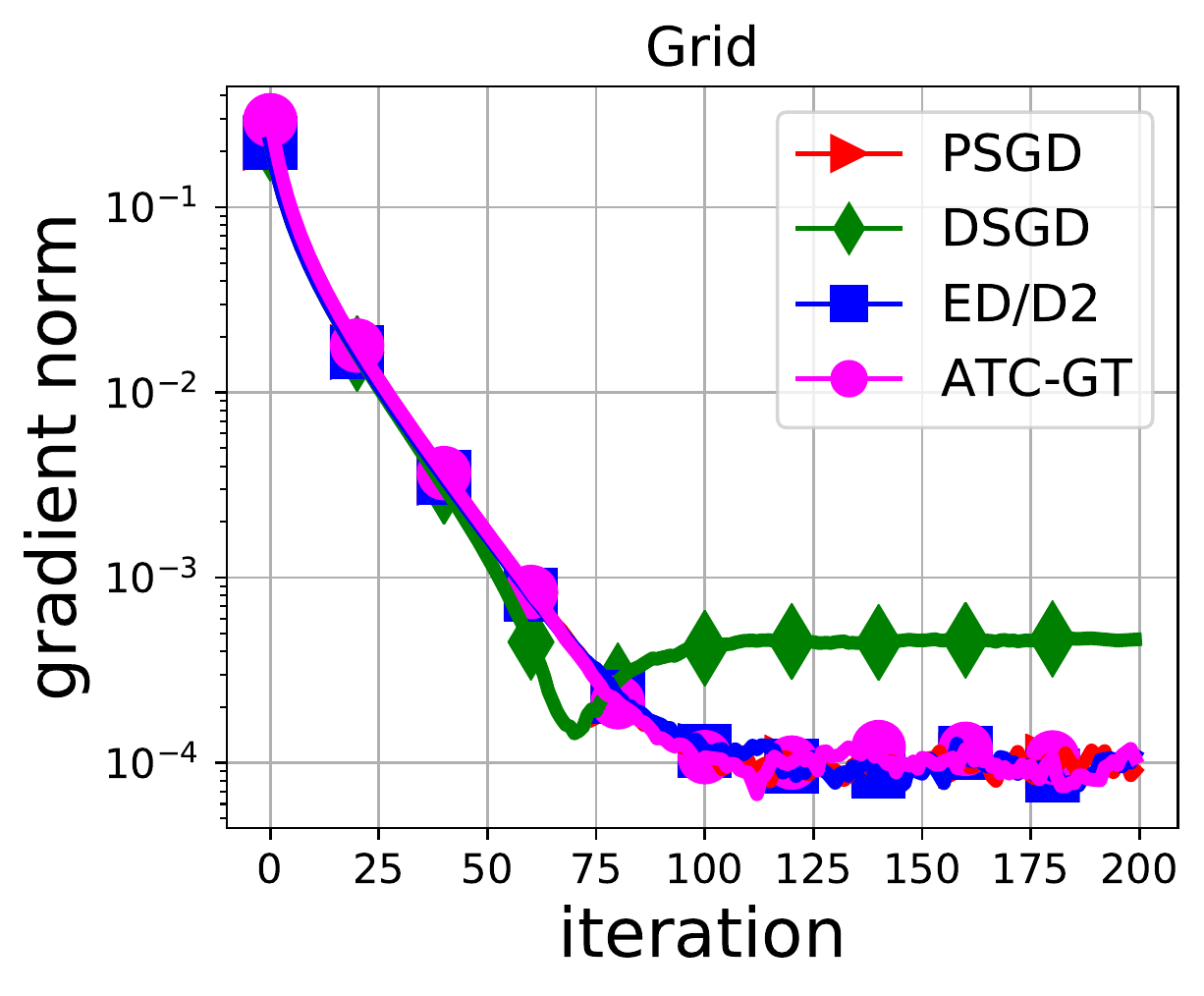}
	\includegraphics[width=0.4\textwidth]{figure/nonconvex_lr_ring_36nodes_const_10142021_changeLegend.pdf}
	\includegraphics[width=0.4\textwidth]{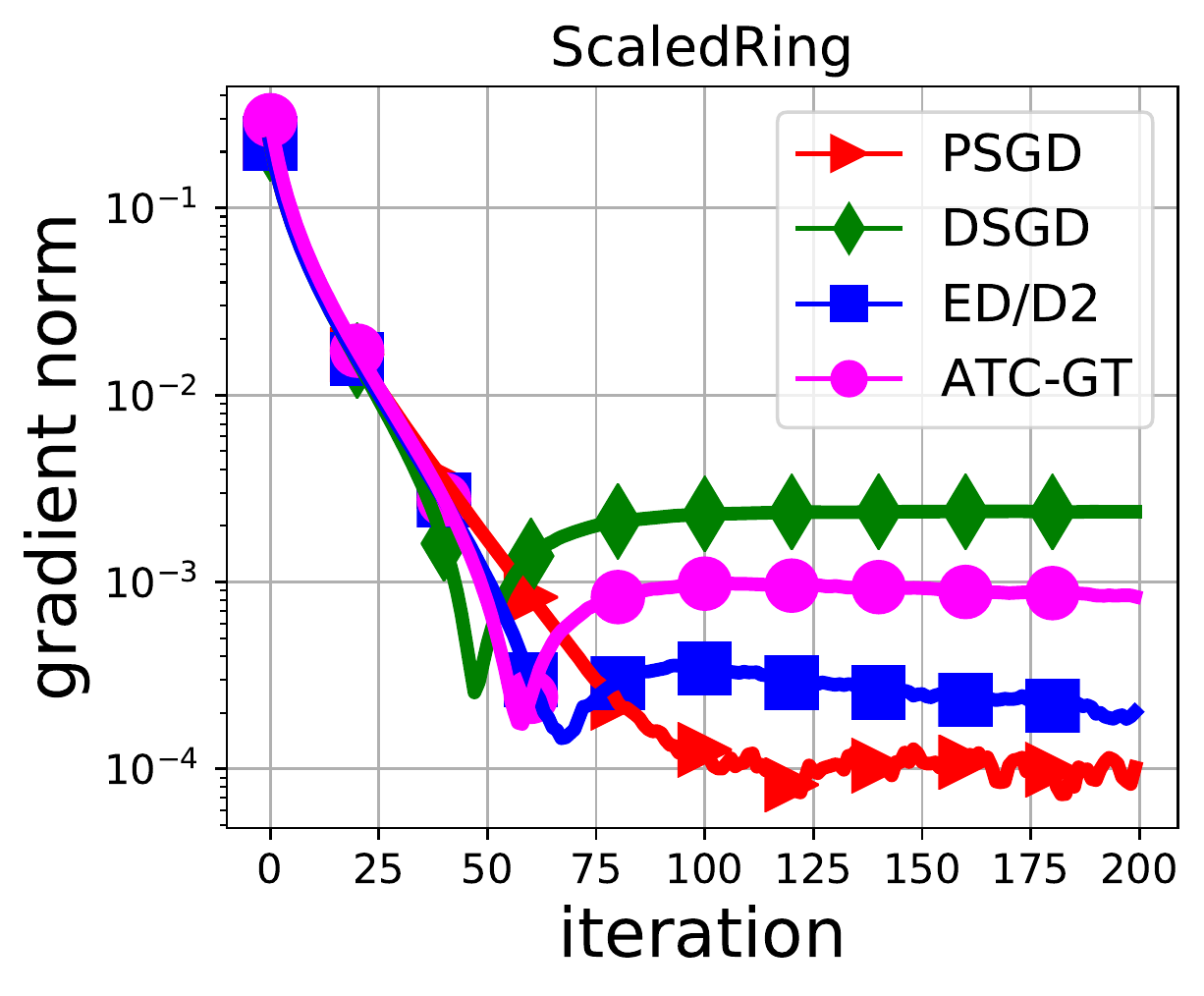}
	\caption{Performance of different stochastic algorithms to solve problem \eqref{non_convex_lr} with different topologies. Top-left: Erdos-Renyi random graph with $\lambda = 0.32$; Top-right: Grid  with $\lambda = 0.94$; Bottom-left: Ring  with $\lambda = 0.99$; Bottom-right: Scaled Ring  with $\lambda = 0.999$}
    \label{fig:influence_of_the_topology}
\end{figure}

\subsection{Simulation results under the PL condition}
\noindent \textbf{The problem.} In this section we examine the performance of ED/D$^2$ and GT algorithms for the non-convex problem under PL condition. We consider the same setup used in \cite{xin2021improved} where the problem formulation is given by $\min_{x \in \real^d} \frac{1}{n}\sum_{i=1}^n f_i(x)$ with $f_i(x) = x^2 + 3\sin^2(x) + a_i x \cos(x)$. By letting $\sum_{i=1}^n a_i = 0$, we have $f(x) = \frac{1}{n}\sum_{i=1}^n f_i(x) = x^2 + 3\sin^2(x)$ which is a non-convex cost function that satisfies the PL condition \cite{karimi2016linear}. 

\vspace{1mm}
\noindent \textbf{Experimental settings.} We set $n=32$ in all simulations. To generate $a_i$, we let $a_i = \sigma_h^2 \cdot i$ and $a_{n-i} = - a_i$ for $i \in \{1,\dots, n/2\}$ where $\sigma_h^2$ is used to control data heterogeneity. In this way, we can guarantee $\sum_{i=1}^n a_i = 0$. Similar to Sec.~\ref{sec:simu-nonconvex}, we will achieve the stochastic gradient by imposing a Gaussian noise to the real gradient, \ie, $\widehat{\nabla f}_i(x) = {\nabla f}_i(x) + s_i$ in which $s_i\sim \cN(0, \sigma^2_{n} I_d)$. The metric for all simulations is   $\Ex f(\bar{x})-f^\star$ where $\bar{x}=\frac{1}{n}\sum_{i=1}^n x_i$.

\vspace{1mm} 
\noindent \textbf{Influence of network topology.} We test the influence of the network topology on various stochastic decentralized methods. In simulations, we set data heterogeneity $\sigma_h^2 = 2$ and gradient noise $\sigma_n^2 = 0.1$. We generate two types of topologies: an Erdos-Renyi random graph with $\lambda = 0.28$, and an Erdos-Renyi random graph with $\lambda = 0.87$. We employ a constant step size $0.008$ for all tested algorithms. It is observed in Fig.~\ref{fig:PL-peformance} that the performance of all algorithms can be deteriorated by the badly-connected network topology when $\lambda \to 1$. We see that ED/D$^2$ is least sensitive to  the network topology while \textsc{Dsgd} is most sensitive (under heterogeneous setting), which is consistent with the our results listed in Table \ref{table_pl}.
\begin{figure}[t!]
	\centering
	\includegraphics[width=0.4\textwidth]{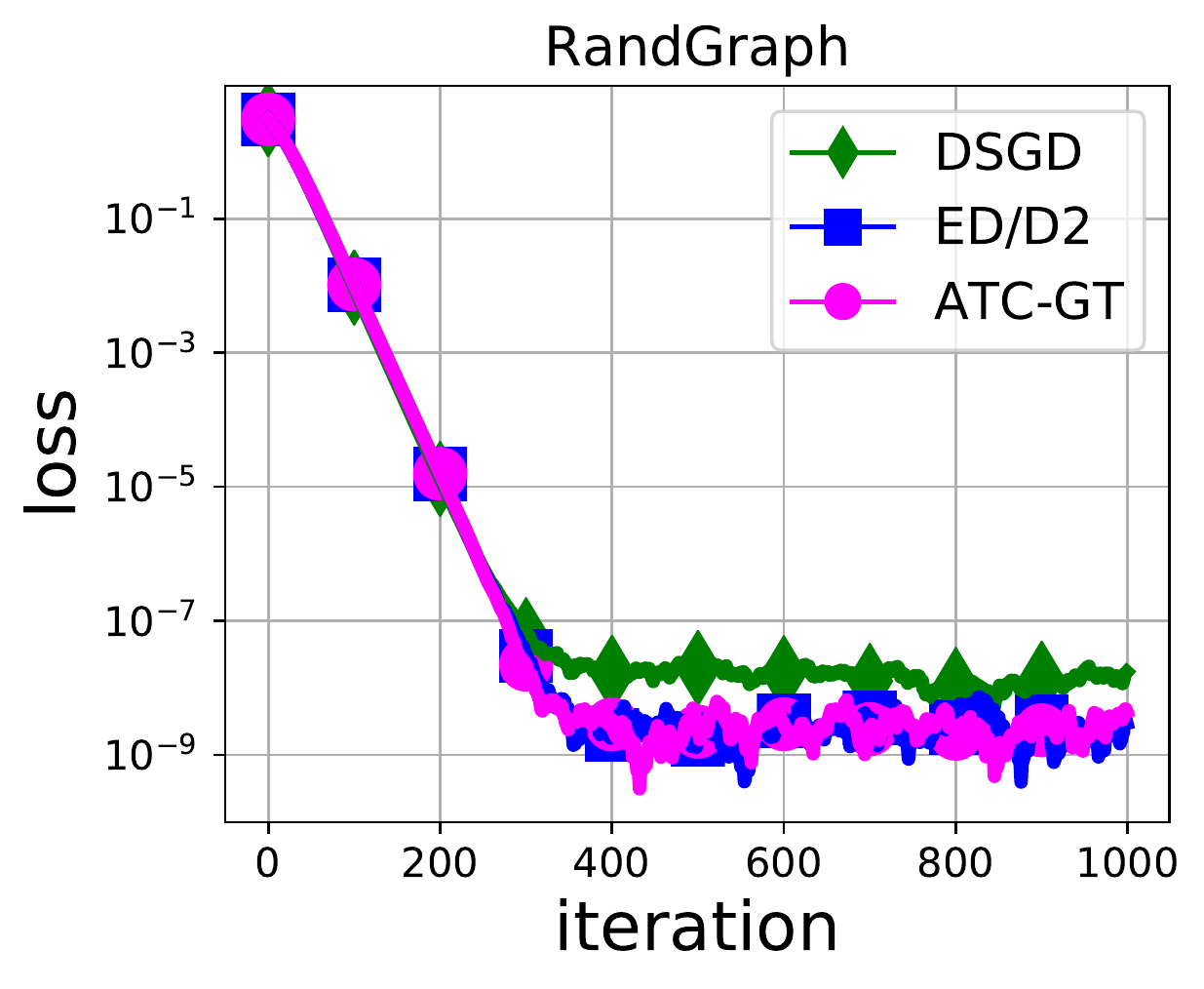}
	\hspace{1cm} 
	\includegraphics[width=0.4\textwidth]{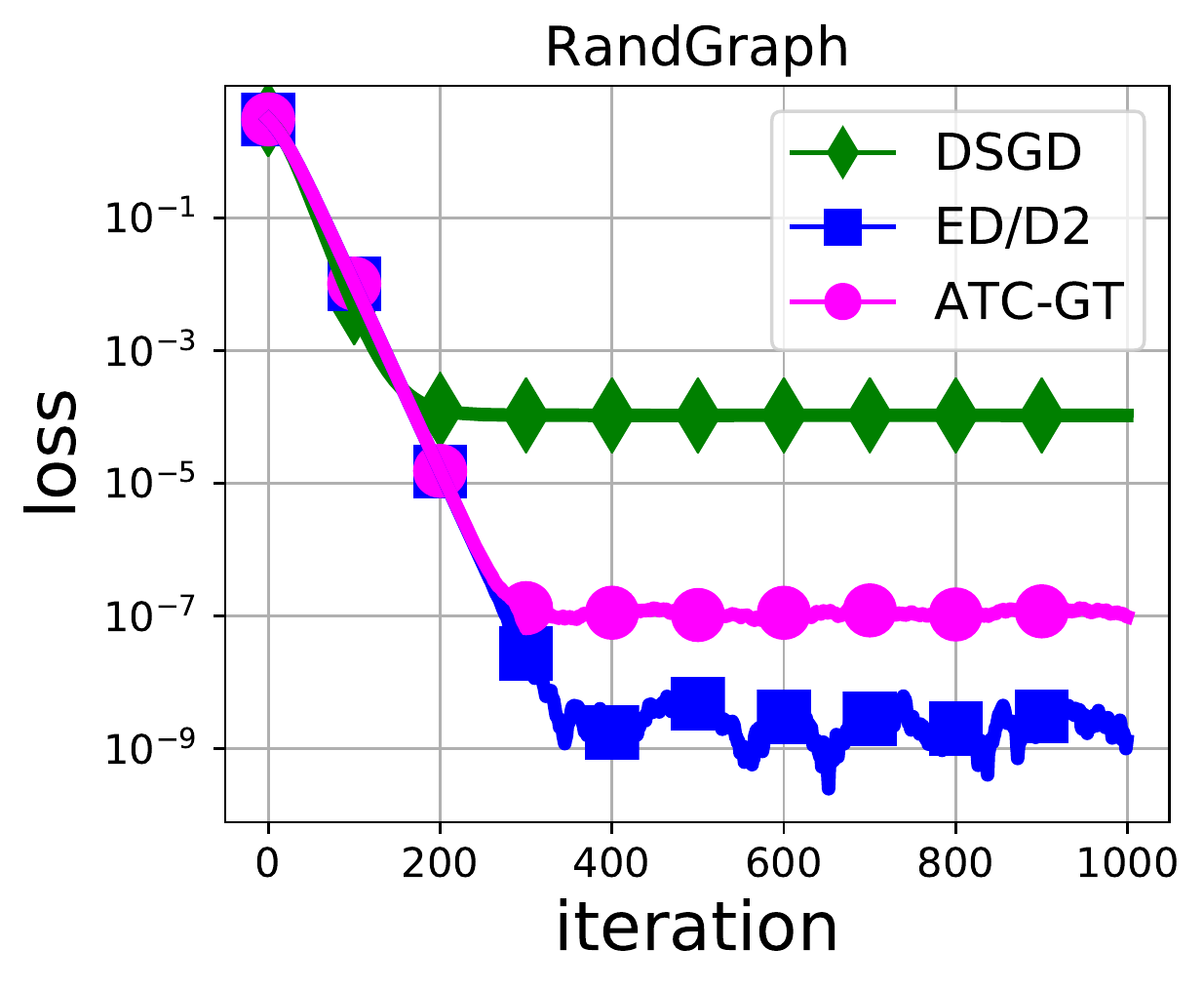}
	\caption{Performance of different stochastic algorithms to solve non-convex problem under PL condition with different topologies. Left: Erdos-Renyi random graph with $\lambda = 0.28$; Right: Erdos-Renyi random graph with $\lambda = 0.87$}
    \label{fig:PL-peformance}
\end{figure}

\section{Conclusion} 
In this work,  we analyzed the convergence properties of \suda~\eqref{SUDA_alg} for decentralized stochastic non-convex optimization problems. \suda~is a general algorithmic framework that includes several state of the art decentralized methods as special case such as EXTRA, Exact-Diffusion/D$^2$, and gradient-tracking methods. We established the convergence of \suda~under both general non-convex  and  PL condition settings. Explicit convergence rate bounds are provided in terms of the problem parameters and the network topology.  When specializing \suda~to the particular instances of ED/D$^2$, EXTRA, and GT-methods, we achieve improved network topology dependent rates compared to existing results under non-convex settings.  Moreover, our rate shows that ED/D$^2$, EXTRA, and GT-methods are less sensitive to network topology compared to  \textsc{Dsgd} under heterogeneous data setting.

	Finally, it should be noted that the lower bound from \cite{lu2021optimal} suggests that it could be possible to further improve these network-dependent rates. However, such improvement have only been established by utilizing multiple gossip rounds as discussed in the introduction. Therefore, one potential future direction is to investigate whether these rates can be further improved {\em without} utilizing  multiple gossip rounds.

{\small
\bibliography{myref} 
}

\newpage
\appendices 
\small
\section{Relation of \eqref{UDA_alg} to Existing Methods} \label{app:relation_to_other_methods}
In this section, we describe how different existing methods are related to  algorithm \eqref{UDA_alg}. First, note that we can equivalently describe the updates in \eqref{UDA_alg}  in terms of $\x^k$  by noting that $\x^1=\A (\C \x^0 - \alpha \grad \f(\x^0) )$ and
\begin{align*}
\mathbf{x}^{k+2} -\x^{k+1} &=  \A \C (\mathbf{x}^{k+1} -\x^{k})-\alpha \A \big( \grad \mathbf{f}(\mathbf{x}^{k+1})-\grad \mathbf{f}(\mathbf{x}^{k})\big)  -  \mathbf{B} (\mathbf{y}^{k+1} -\y^{k})  \\
& \overset{\eqref{dual_UDA}}{=}  \A \C (\mathbf{x}^{k+1} -\x^{k})-\alpha \A \big( \grad \mathbf{f}(\mathbf{x}^{k+1})-\grad \mathbf{f}(\mathbf{x}^{k})\big)  -   \B^2 \x^{k+1}.
\end{align*}
Hence, for $k \geq 0$:
\begin{align} \label{UDA_single_update}
\x^{k+2}=(\I-\B^2+\A\C)\x^{k+1} -\A \C \x^{k}-\alpha \A  \big( \grad \mathbf{f}(\mathbf{x}^{k+1})-\grad \mathbf{f}(\mathbf{x}^{k})\big) 
\end{align}
with $\x^1=\A (\C \x^0 - \alpha \grad \f(\x^0) )$.

\noindent \textbf{Specific instances.} We next show that by choosing specific $\A,\B,\C$ as a function of the combination matrix $\W$, we can recover several important state-of-the-art methods.   In the following, we do not assume that $\W$ is symmetric unless otherwise stated.
\begin{itemize}

\item \textsc{Exact-Diffusion/D$^2$.}  For Exact-Diffusion (ED) \cite{yuan2019exactdiffI} (a.k.a. D$^2$ \cite{tang2018d}), we assume $\W$ to be symmetric and positive-semidefinite. If  $\A=\W$, $\B=(\I-\W)^{1/2}$, and $\C=\I$, we get ED/D$^2$ \cite{yuan2019exactdiffI,tang2018d}:
\begin{align} \label{exact_diff_app}
\x^{k+2}=\W \Big(2\x^{k+1} - \x^{k}-\alpha  \big( \grad \mathbf{f}(\mathbf{x}^{k+1})-\grad \mathbf{f}(\mathbf{x}^{k})\big) \Big), 
\end{align}
with $\x^1=\W ( \x^0 - \alpha \grad \f(\x^0) )$. The above methods is also known as NIDS \cite{li2017nids}.

\item \textsc{EXTRA.} For EXTRA \cite{shi2015extra}, we also assume $\W$ to be symmetric and positive-semidefinite. If we choose  $\A=\I$, $\B=(\I-\W)^{1/2}$, and $\C=\W$, we get EXTRA \cite{shi2015extra}:
\begin{align} \label{EXTRA_app}
\x^{k+2}=\W \big(2\x^{k+1} - \x^{k} \big)-\alpha  \big( \grad \mathbf{f}(\mathbf{x}^{k+1})-\grad \mathbf{f}(\mathbf{x}^{k})\big),
\end{align}
with $\x^1=\W  \x^0 - \alpha \grad \f(\x^0) $.

\item \textsc{ATC-GT.} Consider  the adapt-then-combine GT method (ATC-GT):
\begin{subequations} \label{GT_atc_alg_app}
\begin{align} 
\x^{k+1}&=\W (\x^{k} - \alpha \g^{k}) \\
\g^{k+1} &= \bar{\W} \big(\g^{k} + \grad \mathbf{f}(\mathbf{x}^{k+1})-\grad \mathbf{f}(\mathbf{x}^{k}) \big), 
\end{align}
\end{subequations}
with $\g^0=\bar{\W} \grad \f(\x^0)$, $\x^0 = \bar{\W} \x^0$ (\ie, $\x^0$ is consensual), and $\W,\bar{\W}$ are  doubly-stochastic combination matrices such that $\bar{\W} \W=\W \bar{\W}$.  Subtracting $ \bar{\W} \x^{k}$ from both sides of the first equation, we have for $k \geq 0$
\begin{align*}
\x^{k+2}- \bar{\W}\x^{k+1}&=\W \x^{k+1}-\bar{\W} \W  \x^{k} - \alpha  (\W \g^{k+1}-\bar{\W} \W \g^{k}) \\
&=\W \x^{k+1}-\bar{\W} \W  \x^{k} - \alpha  \W( \g^{k+1}-\bar{\W}  \g^{k}).
\end{align*}
Using the second equation and rearranging, we get
\begin{align*} 
\x^{k+2}= (\W+  \bar{\W}) \x^{k+1}  - \bar{\W}  \W \x^{k}-\alpha  \bar{\W} \W \big( \grad \mathbf{f}(\mathbf{x}^{k+1})-\grad \mathbf{f}(\mathbf{x}^{k}) \big),
\end{align*}
with $\x^1=\bar{\W} \W   ( \x^0 - \alpha \grad \f(\x^0) )$.  The above is the same as \eqref{UDA_single_update}  when $
\A=\bar{\W} \W$, $\B^2=(\I-\bar{\W})(\I-\W)$, and $\C= \I$.
 Note that when $\bar{\W}=\W$, then \eqref{GT_atc_alg_app} becomes the GT method from \cite{xu2015augmented}.

\item \textsc{Non-ATC-GT.} Consider the gradient-tracking algorithm 
\begin{subequations} \label{GT_nonatc_alg_app}
\begin{align}
\x^{k+1}&=\W \x^{k} - \alpha \g^{k} \\
\g^{k+1} &= \bar{\W} \g^{k} + \grad \mathbf{f}(\mathbf{x}^{k+1})-\grad \mathbf{f}(\mathbf{x}^{k}),
\end{align}
\end{subequations}
where $\W$ and $\bar{\W}$ are  doubly-stochastic combination matrices. Eliminating $\g^k$, we get:
\begin{align*}
\x^{k+2}= (\W+  \bar{\W})  \x^{k+1} - \bar{\W} \W \x^{k} -\alpha  \big( \grad \mathbf{f}(\mathbf{x}^{k+1})-\grad \mathbf{f}(\mathbf{x}^{k})\big). 
\end{align*}
The above is exactly \eqref{UDA_single_update}  when $\A=\I$, $\B^2=(\I-\bar{\W})(\I-\W)$, and $\C= \bar{\W} \W$. If $\bar{\W}=\W$ then \eqref{GT_nonatc_alg} is the GT method studied in \cite{nedic2017achieving,qu2017harnessing}.

\item \textsc{Semi-ATC-GT.} Consider the following variation of  GT methods:
\begin{subequations} \label{GT_semiatc_alg1_app}
\begin{align} 
\x^{k+1}&=\W (\x^{k} - \alpha \g^{k}) \\
\g^{k+1} &= \bar{\W} \g^{k} + \grad \mathbf{f}(\mathbf{x}^{k+1})-\grad \mathbf{f}(\mathbf{x}^{k}),  
\end{align}
\end{subequations}
where the ATC structure is used for the update of $\x^k$ only. Following the same argument as before, we can show that \eqref{GT_semiatc_alg1_app} is equivalent to
\begin{align*} 
\x^{k+2}=  (\W+  \bar{\W}) \x^{k+1}  - \bar{\W}  \W \x^{k}-\alpha \W \big( \grad \mathbf{f}(\mathbf{x}^{k+1})-\grad \mathbf{f}(\mathbf{x}^{k}) \big).
\end{align*}
 The above is the same as \eqref{UDA_single_update}  when $\A= \W$, $\B^2=(\I-\bar{\W})(\I-\W)$, and  $\C= \bar{\W}$.

We can also consider the GT variant:
\begin{subequations} \label{GT_semiatc_alg2_app}
\begin{align} 
\x^{k+1}&=\W \x^{k} - \alpha \g^{k} \\
\g^{k+1} &= \bar{\W} \big(\g^{k} + \grad \mathbf{f}(\mathbf{x}^{k+1})-\grad \mathbf{f}(\mathbf{x}^{k}) \big) .
\end{align}
\end{subequations}
where the  adapt-then-combine structure is used for the gradient-tracking variable $\g^k$ only. Eliminating the gradient-tracking variable gives:
\begin{align*} 
\x^{k+2}=  (\W+  \bar{\W})  \x^{k+1}  - \bar{\W}  \W \x^{k}-\alpha \bar{\W} \big( \grad \mathbf{f}(\mathbf{x}^{k+1})-\grad \mathbf{f}(\mathbf{x}^{k}) \big),
\end{align*}
 which is exactly \eqref{UDA_single_update}  when $
\A= \bar{\W}$, $\B^2=(\I-\bar{\W})(\I-\W)$, and $\C= \W$.
Note that when $ \bar{\W}=\W$, then \eqref{GT_semiatc_alg1_app} and \eqref{GT_semiatc_alg2_app} become the GT variants from \cite{di2016next}.

\end{itemize}

	\noindent \textbf{Unified decentralized algorithm from \cite{alghunaim2019decentralized}.}  Consider the unified decentralized algorithm (UDA) proposed in \cite{alghunaim2019decentralized}:
	\begin{subequations} \label{alg_UDA0}
		\begin{align}
			\mathbf{z}^{k+1} &=   \C \mathbf{x}^{k}-\alpha \grad \mathbf{f}(\mathbf{x}^{k})   -  \mathbf{B} \mathbf{y}^{k} \label{z_uda0}   \\
			\mathbf{y}^{k+1} &= \mathbf{y}^{k}+ \mathbf{B}  \mathbf{z}^{k+1} \label{dual_uda0}  \\
			\mathbf{x}^{k+1} &=  \A \z^{k+1}. \label{x_uda0} 
		\end{align}
	\end{subequations}
	In the above description, the matrix $\C$ is equivalent to $\I-\C$ in \cite{alghunaim2019decentralized}.  We now  show that UDA \eqref{alg_UDA0}  is equivalent to \eqref{UDA_single_update} (hence, \eqref{UDA_alg}) if $\A$ and $\B^2$ commute (\ie, $\A \B^2=\B^2 \A$).  We can eliminate the variable $\y^{k}$ in \eqref{alg_UDA0} by noting that
	\begin{align*}
		\mathbf{z}^{k+2} -\z^{k+1} &=    \C(\mathbf{x}^{k+1} -\x^{k})-\alpha  \big( \grad \mathbf{f}(\mathbf{x}^{k+1})-\grad \mathbf{f}(\mathbf{x}^{k}) \big)  -  \mathbf{B} (\mathbf{y}^{k+1} -\y^{k})  \\
		&=    \C(\mathbf{x}^{k+1} -\x^{k})-\alpha  \big( \grad \mathbf{f}(\mathbf{x}^{k+1})-\grad \mathbf{f}(\mathbf{x}^{k})\big)  -   \B^2 \z^{k+1}.
	\end{align*}
	Rearranging, we have
	\begin{align*}
		\mathbf{z}^{k+2} &=      (\I-\B^2) \z^{k+1}+ \C(\mathbf{x}^{k+1} -\x^{k})-\alpha  \big( \grad \mathbf{f}(\mathbf{x}^{k+1})-\grad \mathbf{f}(\mathbf{x}^{k}) \big).
	\end{align*}
	We now multiply both sides by $\A$ and use $\A \B^2=\B^2 \A$, to get
	\begin{align*}
		\x^{k+2}=(\I-\B^2+\A \C)\x^{k+1} -\A \C  \x^{k}-\alpha \A   \big( \grad \mathbf{f}(\mathbf{x}^{k+1})-\grad \mathbf{f}(\mathbf{x}^{k})\big),
	\end{align*}
	which is exactly \eqref{UDA_single_update}.

\paragraph{Unified framework from \cite{xu2021distributed}.} 	The method from \cite{xu2021distributed} can described by (see \cite[Eq. (7)]{xu2021distributed}):
	\begin{align*}
	\x^{k+2}=(\I-\bar{\C}+ \bar{\A})\x^{k+1} - \bar{\A}  \x^{k}-\alpha \bar{\B}  \big( \grad \mathbf{f}(\mathbf{x}^{k+1})-\grad \mathbf{f}(\mathbf{x}^{k})\big),
\end{align*}
where the matrices $\bar{\A}$, $\bar{\B}$, and $\bar{\C}$ are required to satisfy certain conditions;   in \cite{xu2021distributed}  it is assumed that $\bar{\A}=\bar{\B} \bar{\D}$ (see \cite[Assumption 8]{xu2021distributed}).  Clearly, the above is equivalent to \eqref{UDA_single_update} when  $\A=\bar{\B}$,  $\B^2=\bar{\C}$, and $\ \C=\bar{\D}$.

\paragraph{Unified framework from \cite{sundararajan2018canonical}.}  The algorithm from \cite{sundararajan2018canonical} has the form (in our notation):
\begin{align*}
	\mathbf{x}^k&=\big((1-\zeta_3) \mathbf{I} +\zeta_3 \mathbf{W} \big)\mathbf{z}^k  \\ 
	\mathbf{z}^{k+1}&=\big((1-\zeta_1) \mathbf{I} +\zeta_1 \mathbf{W} \big)\mathbf{z}^k   -\alpha \nabla \mathbf{f} (\mathbf{x}^k)  + \big((\zeta_0+\zeta_2) \mathbf{I}- \zeta_2 \mathbf{W} \big) \mathbf{y}^k  \\
	\mathbf{y}^{k+1} &=\mathbf{y}^{k} -(\mathbf{I}-\mathbf{W}) \mathbf{z}^k .
\end{align*}
For notional simplicity, let 
\begin{align*}
	\mathbf{W}_{\zeta_3}&\define\big((1-\zeta_3) \mathbf{I} +\zeta_3 \mathbf{W} \big), \qquad  \mathbf{W}_{\zeta_1}\define\big((1-\zeta_1) \mathbf{I} +\zeta_1 \mathbf{W} \big) \\
	 \mathbf{B}_1&\define\big((\zeta_0+\zeta_2) \mathbf{I}- \zeta_2 \mathbf{W} \big), \qquad \mathbf{B}_2\define(\mathbf{I}-\mathbf{W}).
\end{align*}
Then, the prevision recursion is
\begin{align*}
	\mathbf{x}^k&=\mathbf{W}_{\zeta_3} \mathbf{z}^k  \\ 
	\mathbf{z}^{k+1}&=\mathbf{W}_{\zeta_1} \mathbf{z}^k   -\alpha \nabla \mathbf{f} (\mathbf{x}^k)  +\mathbf{B}_1  \mathbf{y}^k  \\
	\mathbf{y}^{k+1} &=\mathbf{y}^{k} -\mathbf{B}_2  \mathbf{z}^k .
\end{align*}
Eliminating the vector $\mathbf{y}^k$, we obtain the equivalent form:
\begin{align*}
	\mathbf{x}^k&=\mathbf{W}_{\zeta_3} \mathbf{z}^k  \\ 
	\mathbf{z}^{k+2} &= (\mathbf{I}+\mathbf{W}_{\zeta_1})  \mathbf{z}^{k+1} -(\mathbf{W}_{\zeta_1} +\mathbf{B}_1 \mathbf{B}_2 )\mathbf{z}^{k}  -\alpha \big(\nabla \mathbf{f} (\mathbf{x}^{k+1})-\nabla \mathbf{f} (\mathbf{x}^{k})\big).  
\end{align*}
Note that each pair from $\mathbf{W}_{\zeta_3},\mathbf{W}_{\zeta_1},\mathbf{B}_1,\mathbf{B}_2$ commute with each other since they are polynomial functions of $\mathbf{W}$. Thus, multiplying the second line in the previous algorithm by $\mathbf{W}_{\zeta_3}$, we obtain
\begin{align*}
	\mathbf{x}^{k+2} &= (\mathbf{I}+\mathbf{W}_{\zeta_1})  \mathbf{x}^{k+1} -(\mathbf{W}_{\zeta_1} +\mathbf{B}_1 \mathbf{B}_2 )\mathbf{x}^{k}  -\alpha \mathbf{\mathbf{W}_{\zeta_3}}  \big(\nabla \mathbf{f} (\mathbf{x}^{k+1})-\nabla \mathbf{f} (\mathbf{x}^{k})\big) .
\end{align*}
We see that this is equivalent to \eqref{UDA_single_update} when $
	\mathbf{A}=\mathbf{W}_{\zeta_3}$, $\mathbf{A C}=\mathbf{W}_{\zeta_1}- \mathbf{B}_1 \mathbf{B}_2$, and $\mathbf{I}-\mathbf{B}^2+\mathbf{A} \mathbf{C}= \mathbf{I}+\mathbf{W}_{\zeta_3}$.

\section{Fundamental Factorization}
\subsection{Proof of Lemma \ref{lemma:diagonalization}}
\label{app:lemma_diag_proof}
Recall that $\hat{\mathbf{\Lambda}}_a=\diag\{\lambda_{a,i}\}_{i=2}^n \otimes I_d$, $\hat{\mathbf{\Lambda}}_b=\diag\{\lambda_{b,i} \}_{i=2}^n \otimes I_d$, $\hat{\mathbf{\Lambda}}_c=\diag\{\lambda_{c,i}\}_{i=2}^n \otimes I_d$. Hence, the  matrix $\G$ defined in \eqref{G_matrix} can be rewritten as
\begin{align*} 
\G = \begin{bmatrix}
\diag\{\lambda_{a,i} \lambda_{c,i}-\lambda_{b,i}^2\}_{i=2}^n & -\diag\{\lambda_{b,i} \}_{i=2}^n \\
\diag\{\lambda_{b,i} \}_{i=2}^n & ~~I_{n-1}
\end{bmatrix} \otimes I_d.
\end{align*}
Utilizing the structure of $\G$, we can exchange the columns and rows of $\G$ through some permutation matrix $\P \in \real^{2d(n-1) \times 2d(n-1)}$ such that
\begin{align*}
\P \G \P\tran= \bdiag\{G_i\}_{i=2}^n \otimes I_d,
\end{align*}
where
\begin{align} \label{G_i_proof}
G_i \define \begin{bmatrix}
\lambda_{a,i} \lambda_{c,i}-\lambda_{b,i}^2 & -\lambda_{b,i} \\
\lambda_{b,i} & ~~1
\end{bmatrix} \in \real^{2 \times 2}.
\end{align}
Let us denote the eigenvalues of $G_i$ by $\gamma_{1,i}$ and $ \gamma_{2,i}$.  If the eigenvalues of $G_i$ are distinct ($\gamma_{1,i} \neq \gamma_{2,i}$), then there exists a $2 \times 2$ invertible $V_i$ such that \cite{horn2012matrix}:
\begin{align*}
G_i= V_i \Gamma_i V_i^{-1}, \quad \Gamma_i \define \diag\{\gamma_{1,i},\gamma_{2,i}\}.
\end{align*} 
 It follows that $\G$ is similar to a diagonal matrix $ \G  = \hat{\V} \mathbf{\Gamma} \hat{\V}^{-1}$,
where
\begin{align} \label{V_hat_eq_def}
\hat{\V} \define \P\tran \V, \quad \V\define \bdiag\{V_i\}_{i=2}^n \otimes I_d, \quad \mathbf{\Gamma} \define \bdiag\{\Gamma_i\} \otimes I_d,
\end{align}
with $\|\mathbf{\Gamma}\|=\max_{i \in \{2,\dots,n\}} \|\Gamma_i\|=\max_{i \in \{2,\dots,n\}} \{|\gamma_{1,i}|,|\gamma_{2,i}|\}< 1$.
Now, suppose that $G_i$ has repeated eigenvalues $\gamma_{1,i}=\gamma_{2,i}=\gamma_i$. Then, using  Jordan canonical form \cite{horn2012matrix}, there exists an invertible matrix $T_i$ such that
\begin{align*}
G_i= T_i J_i T_i^{-1}, \quad J_i \define \begin{bmatrix}
\gamma_{i} & 1 \\
0 & \gamma_{i}
\end{bmatrix}.
\end{align*} 
 If we let $E_i \define \diag\{1,\epsilon_i\}$ where $\epsilon_i>0$ is an arbitrary constant, then we can rewrite $G_i$ as
\begin{align*} 
G_i= V_i \Gamma_i V_i^{-1}, \quad V_i \define T_i E_i, \quad \Gamma_i \define E_i^{-1} J_i E_i = \begin{bmatrix}
\gamma_{i} & \epsilon \\
0 & \gamma_{i}
\end{bmatrix}.
\end{align*} 
 It follows that  $\G  = \hat{\V} \mathbf{\Gamma} \hat{\V}^{-1}$ where $\hat{\V}= \P\tran \V$, $\V =\bdiag\{T_iE_i\}_{i=2}^n \otimes I_d$, and $\mathbf{\Gamma} = \bdiag\{\Gamma_i\} \otimes I_d$. Since the spectral radius of any matrix is upper bounded by any norm \cite{horn2012matrix}, it holds that
 \begin{align} \label{epsilon_Gamma_bound}
 \|\Gamma_i\|^2 = \rho(\Gamma_i \Gamma_i^*) \leq \|\Gamma_i \Gamma_i^*\|_1 =|\gamma_i|^2+\epsilon_i |\gamma_i|+\epsilon_i^2 \leq \big(|\gamma_i|+\epsilon_i \big)^2.
 \end{align}
  Here, $\|.\|_1$ denote the maximum-absolute-column-sum matrix norm. Therefore, $\|\mathbf{\Gamma}\|< 1$ for any $\epsilon_i < 1-|\gamma_i|$.

\subsection{Special Cases of Lemma \ref{lemma:diagonalization}} \label{app:bounds_special_cases}
In this section, we specify the results of Section \ref{app:lemma_diag_proof} to the instances discussed in Appendix \ref{app:relation_to_other_methods}. First, we note that for any matrix
\begin{align*}
G = \begin{bmatrix}
a & b \\
c & d
\end{bmatrix} \in \real^{2 \times 2},
\end{align*}
the eigenvalues are: 
\begin{align} \label{eigenvalues_formula}
\gamma_{1,2} =\frac{(a+d) \pm \sqrt{(a+d)^2-4 (ad -bc)}}{2}.
\end{align}
Moreover, if $\gamma_1 \neq \gamma_2 $, then
\begin{align} \label{eigenvectors_diag_formula}
G=V \diag\{\gamma_1,\gamma_2\} V^{-1}, \quad V=\begin{bmatrix}
v_1 &  v_2
\end{bmatrix} ,
\end{align}
 where $v_{1,2}=\frac{1}{r} \col\{b,\gamma_{1,2}-a\}$ (or $v_{1,2}=\frac{1}{r} \col\{\gamma_{1,2}-d,  
c \}$) if  $b \neq 0$ (or $c \neq 0$) for any $r \neq 0$. Here, $v_{1}$ and $v_2$ are eigenvectors $G$ corresponding to eigenvalues $\gamma_{1}$ and $\gamma_2$, respectively.
\subsubsection{Exact-Diffusion/D$^2$ and EXTRA}
\textbf{ED/D$^2$.} We will consider the case for ED/D$^2$ \eqref{exact_diff_app} first. Recall from Appendix \ref{app:relation_to_other_methods} that for ED/D$^2$, we have $\A=\W$, $\B=(\I-\W)^{1/2}$, and $\C=\I$. For this case, we have $\lambda_{a,i}=\lambda_i$,  $\lambda_{c,i}=1$, and $\lambda_{b,i}=\sqrt{1-\lambda_i}$ where $\{\lambda_i\}$ denote the eigenvalues of $W$. Thus, the matrix \eqref{G_i_proof} becomes:
\begin{align*} 
G_i = \begin{bmatrix}
2\lambda_{i} -1 & -\sqrt{1-\lambda_i} \\
\sqrt{1-\lambda_{i}} & ~~1
\end{bmatrix} \in \real^{2 \times 2}.
\end{align*}
Using \eqref{eigenvalues_formula}, the eigenvalues of $G_i$ ($i=2,\ldots,n$) are:
\begin{align*} 
\gamma_{(1,2),i} = \lambda_i \pm \sqrt{ \lambda_i^2 - \lambda_i}.
\end{align*}
Note that $|\gamma_{(1,2),i}|<1$ if $-\frac{1}{3}<\lambda_i<1$. This implies that for the convergence of ED/D$^2$, we require $W > -\frac{1}3 I$. It is sufficient for our purposes to assume that $W \geq 0$. Thus, we have two cases for the decomposition of $G_i$ derived next.
\begin{itemize}
\item If $0<\lambda_i <1$, then the eigenvalues are complex and distinct:
\begin{align*} 
\gamma_{(1,2),i} = \lambda_i \pm j\sqrt{\lambda_i-\lambda_i^2}, \qquad |\gamma_{(1,2),i} |=\sqrt{\lambda_i}<1,
\end{align*}
where $j^2=-1$. Using \eqref{eigenvectors_diag_formula} with $r=\sqrt{1-\lambda_i}$, we can factor $G_i$ as $G_i=V_i \diag\{\gamma_{1,i},\gamma_{2,i}\} V_i^{-1}$
where
\begin{align*}
V_i=\begin{bmatrix}
-1 & -1 \\
\sqrt{1-\lambda_i}+j\sqrt{  \lambda_i} & \sqrt{1-\lambda_i}-j\sqrt{  \lambda_i} 
\end{bmatrix}.
\end{align*}
The inverse of $V_i$ is
\begin{align*}
V_i^{-1}=\frac{1}{2j \sqrt{\lambda_i}} \begin{bmatrix}
\sqrt{1-\lambda_i}-j\sqrt{  \lambda_i} & ~~1 \\
-\sqrt{1-\lambda_i}-j\sqrt{  \lambda_i} & -1 
\end{bmatrix} =\frac{1}{2 \sqrt{\lambda_i}} \begin{bmatrix}
-\sqrt{\lambda_i}-j\sqrt{1-\lambda_i} & -j \\
-\sqrt{\lambda_i}+j\sqrt{1-\lambda_i} & ~~j 
\end{bmatrix}.
\end{align*}
Note that
\begin{align*}
V_iV_i^*=\begin{bmatrix}
2 & -2 \sqrt{1-\lambda_i} \\ 
-2 \sqrt{1-\lambda_i} & 2
\end{bmatrix}.
\end{align*}
Since the spectral radius of matrix is upper bounded by any of its norm, it holds that   $\|V_i\|^2 =\rho(V_iV_i^*) \leq \|V_iV_i^*\|_1 \leq 4$. Following a similar argument for $V_i^{-1}$, we have
\begin{align*}
(V_i^{-1})(V_i^{-1})^*&=\frac{1}{4 \lambda_i} \begin{bmatrix}
-\sqrt{  \lambda_i}-j\sqrt{1-\lambda_i} & -j \\
-\sqrt{  \lambda_i}+j\sqrt{1-\lambda_i} & ~~j 
\end{bmatrix} \begin{bmatrix}
-\sqrt{  \lambda_i}+j\sqrt{1-\lambda_i} & -\sqrt{  \lambda_i}-j\sqrt{1-\lambda_i} \\
j & -j 
\end{bmatrix} \\
&=\frac{1}{4 \lambda_i} \begin{bmatrix}
2 & 2 \lambda_i-2+2j\sqrt{\lambda_i-\lambda_i^2} \\
2 \lambda_i-2+2j\sqrt{\lambda_i-\lambda_i^2} & 2
\end{bmatrix}.
\end{align*}
Hence, $\|V_i^{-1}\|^2 =\rho(V_i^{-1}(V_i^{-1})^*) \leq \|V_i^{-1}(V_i^{-1})^*\|_1 \leq \frac{1}{\lambda_i}$.

\item  If $\lambda_i =0$, then the eigenvalues of $G_i$ are both zero $
\gamma_{1,2} = 0$. Using Jordan canonical form, it holds that
\begin{align*}
G_i =\begin{bmatrix}
-1 & -1 \\
~~1 & ~~1
\end{bmatrix}, \qquad T_i^{-1} G_i T_i =\begin{bmatrix}
0 & 1 \\
0 & 0
\end{bmatrix}, \quad T_i =\begin{bmatrix}
~~1 & -\tfrac{1}{2} \vspace{0.5mm} \\ 
-1 &  -\tfrac{1}{2}
\end{bmatrix}.
\end{align*}
 If we let $E=\diag\{1,\epsilon\}$ for $0<\epsilon<1$  then
\begin{align*}
G_i= V_i \Gamma_i V_i^{-1}, \quad V_i= T_i E, \quad  \Gamma_i=\begin{bmatrix}
0 & \epsilon \\ 
0 & 0
\end{bmatrix}.
\end{align*}
Simple calculations show that $\|\Gamma_i\|=\epsilon<1$ and
\begin{align*}
 \|V_i\|^2 \leq \|T_i\|^2 \|E\|^2 = 2, \qquad \|V_i^{-1}\|^2 \leq \|T_i^{-1}\|^2 \|E^{-1}\|^2 =\frac{2}{\epsilon^2}.
\end{align*}
We can choose $\epsilon^2=\lambda=\max_{i \in \{2,\dots,n\}} \lambda_i$. 
\end{itemize}
Putting things together, we find
\begin{subequations} \label{exact_diff_bounds}
\begin{align} 
\|\hat{\V}\|^2 &= \|\P\tran \V\|^2 \leq 4, \quad \|\hat{\V}^{-1}\|^2 = \|\V^{-1} \P\|^2 \leq \frac{2}{ \underline{\lambda}}, \quad \gamma=\|\mathbf{\Gamma}\| = \sqrt{\lambda}, \\
 \lambda_a&= \|\hat{\mathbf{\Lambda}}_a\|=\lambda, \quad \underline{\lambda_b}= \frac{1}{\|\hat{\mathbf{\Lambda}}_b^{-1}\|}=\sqrt{1-\lambda},
\end{align}
\end{subequations}
where $\lambda=\max_{i \in \{2,\dots,n\}} \lambda_i$ and  $\underline{\lambda}$ is the minimum non-zero eigenvalue of $W$.

\noindent \textbf{EXTRA.} Observe that for  EXTRA \eqref{EXTRA_app}, the matrix $G_i$ is identical to ED/D$^2$. Hence, the same bounds \eqref{exact_diff_bounds} hold for EXTRA except for $\lambda_a$, which is equal to one, \ie,  $\lambda_a=1$,  for EXTRA.  
 
\subsubsection{GT methods}
\textbf{ATC-GT.}  Consider the ATC-GT method \eqref{GT_atc_alg} (or \eqref{GT_atc_alg_app} with $\bar{\W}=\W$). In this case, we have $\lambda_{a,i}=\lambda_i^2$, $\lambda_{b,i}= 1-\lambda_i$, and $\lambda_{c,i}=1$. Thus, the matrix $G_i$ ($i=2,\dots,n$) given in \eqref{G_i_proof} is
\begin{align*}
G_i = \begin{bmatrix}
2\lambda_i-1 & -(1-\lambda_i) \\
 1-\lambda_i  & 1
\end{bmatrix} \in \real^{2 \times 2}.
\end{align*}
Using \eqref{eigenvalues_formula}, we find that the eigenvalues of $G_i$ are identical $
\gamma_{1,i}=\gamma_{2,i}  =\lambda_i$.
Clearly,  the eigenvalues are strictly less than one by assumption since $0 \leq \lambda_i <1$ ($i=2,\dots,n$). If we let
\begin{align*}
T_i=\begin{bmatrix}
-1 & 0 \\
~~1 & \frac{1}{1-\lambda_i} \end{bmatrix}, \qquad T_i^{-1}=\begin{bmatrix}
-1 & 0 \\
~~1-\lambda_i & 1-\lambda_i
\end{bmatrix}, 
\end{align*}
then, it holds that
\begin{align*}
 G_i= T_i \begin{bmatrix}
\lambda_i & 1 \\
0 & \lambda_i
\end{bmatrix} T_i^{-1}. 
\end{align*}
 If we further let $E_i=\diag\{1,\epsilon_i\}$ where $\epsilon_i>0$ is an arbitrary constant, and define
\begin{align*}
V_i \define T_i E_i=\begin{bmatrix}
-1 & 0 \\
~~1 & \frac{\epsilon}{1-\lambda_i}
\end{bmatrix}, \quad V_i^{-1}=E_i^{-1} T_i^{-1} =\begin{bmatrix}
-1 & 0 \\
\frac{1-\lambda_i}{\epsilon} & \frac{1-\lambda_i}{\epsilon} 
\end{bmatrix}.
\end{align*} 
Then, we have $G_i= V_i \Gamma_i V_i^{-1}$ where
\begin{align*}
 \Gamma_i \define E_i^{-1} \begin{bmatrix}
\lambda_i & 1 \\
0 & \lambda_i
\end{bmatrix} E_i = \begin{bmatrix}
\lambda_{i} & \epsilon \\
0 & \lambda_{i}
\end{bmatrix}.
\end{align*} 
Choosing $\epsilon=(1-|\lambda_i|)/2$, it can be verified that $\|\mathbf{\Gamma}\|<1$.

The above derivations are sufficient  for our convergence analysis  to cover GT methods with $I < W \leq I$.  However, we shall assume in the following that $W \geq 0$ in order to get refined   network dependent bounds for GT methods.    Under this additional condition, we have $\lambda_i \geq 0$ and if we set $\epsilon=(1-\lambda_i)/2$, it holds that
\begin{align*}
\Gamma_i  = \begin{bmatrix}
\lambda_{i} & \frac{1-\lambda_i}{2} \\
0 & \lambda_{i}
\end{bmatrix}, \quad V_i =\begin{bmatrix}
-1 & 0 \\
~~1 & \frac{1}{2}
\end{bmatrix}, \quad V_i^{-1}=\begin{bmatrix}
-1 & 0 \\
~~2 & 2 
\end{bmatrix}.
\end{align*} 
Hence, for ATC-GT with $W \geq 0$, we have 
\begin{subequations} \label{gt_diff_W_bound}
\begin{align} 
\|\hat{\V}\|^2 &= \|\P\tran \V\|^2 \leq 3, \quad
\|\hat{\V}^{-1}\|^2 = \|\V^{-1} \P\|^2 \leq 9, \quad  \gamma=\|\mathbf{\Gamma}\| \leq  \tfrac{1+\lambda}{2}, \\
\lambda_a&=\|\hat{\mathbf{\Lambda}}_a\|=\lambda^2, \quad \underline{\lambda_b}=\frac{1}{\|\hat{\mathbf{\Lambda}}_b^{-1}\|}  = 1-\lambda,
\end{align}
\end{subequations}
where $\lambda=\max_{i \in \{2,\dots,n\}} \lambda_i$.

\noindent \textbf{Other GT variants.} Note that for the other variations of GT methods considered in Appendix \ref{app:relation_to_other_methods} with $\bar{\W}=\W$, the matrix $G_i$ is identical to the previous case. Hence, the same bounds \eqref{gt_diff_W_bound} holds for these other variations except for the value of $\lambda_a$, which is $\lambda_a=1$ for the non-ATC-GT \eqref{GT_nonatc_alg_app} and $\lambda_a=\lambda$ for the semi-ATC-GT variants \eqref{GT_semiatc_alg1_app} and \eqref{GT_semiatc_alg2_app}.

\section{Convergence Proof} \label{app:convergence_analysis}
In this section, we prove  Theorems \ref{thm_nonconvex} and \ref{thm_PL_linear_conv}. We will first list some useful  inequalities and facts that are used in our analysis.
\subsubsection*{Useful inequalities}
\begin{itemize}
\item Under the $L$-smoothness  condition \eqref{smooth_f_eq} given in Assumption \ref{assump:smoothness}, the aggregate cost $f$ is also $L$-smooth. It follows that \cite{nesterov2013introductory}:
	 \begin{align}
	      f(y) &\leq f(z)+ \langle \grad f(z),~ y-z \rangle+\tfrac{L}{2} \|y-z\|^2, \quad \forall~z,y \in \real^n. \label{bound:L_smooth_function}
	 \end{align}

\item Recall from \eqref{noise_gradient} that $\w^k = \grad \F (\x^k,\bxi^k) -\grad \mathbf{f}(\x^{k})$ and $\bar{\w}^k =\tfrac{1}{n} \sum_{i=1}^n \grad F_i (x_i^k,\xi_i^k) -\grad f_i(x_i^{k})$. Hence, under Assumption \ref{assump:noise}, we have 
\begin{equation}\label{noise_bound_impli}
\begin{aligned}
\Ex [\w^k | \bm{\cF}^{k}] &=\zero, \quad  &\Ex[\|\w^k\|^2 | \bm{\cF}^{k}] &\leq n\sigma^2, \\
 \Ex [\bar{\w}^k | \bm{\cF}^{k}] &=0, \quad  &\Ex[\|\bar{\w}^k\|^2 | \bm{\cF}^{k}] &\leq \frac{\sigma^2}{n}.
\end{aligned}
\end{equation}

\item 	 Using \eqref{hat_relation_avg_Dev}, it holds that
\begin{align} \label{avg_e_hat_bound}
\|\x^k-\bar{\x}^k\|^2=\|\hat{\U}\tran\x^{k}\|^2=\|\upsilon \hat{\V} \hat{\e}^{k} \| -\|\hat{\mathbf{\Lambda}}_b^{-1} \hat{\U}\tran\s^{k}\|^2  \leq   \upsilon^2 \|\hat{\V}\|^2 \| \hat{\e}^{k} \|^2.
\end{align} 
Moreover, we have
\begin{align} \label{lipschitz_bound_average}
\| \overline{\grad\f}(\x^k)-\grad f(\bar{x}^{k}) \|^2 &=\Big\| \tfrac{1}{n} \sum_{i=1}^n \big(\grad f_i(x_i^k)-\grad f_i(\bar{x}^k)\big) \Big\|^2 \nonumber \\
&\leq\frac{1}{n} \sum_{i=1}^n \big\| \grad f_i(x_i^{k})-\grad f_i(\bar{x}^{k}) \big\|^2 \nonumber \\
& \overset{\eqref{smooth_f_eq}}{\leq}  \frac{L^2}{n} \|\x^{k} -  \bar{\x}^{k}\|^2 
	\overset{\eqref{avg_e_hat_bound}}{\leq}  \frac{L^2 \upsilon^2 \|\hat{\V}\|^2 }{n} \|\hat{\e}^k\|^2.
\end{align}

\item  Let $\upsilon=\sqrt{n} v_2$ where $v_2=\|\hat{\V}^{-1}\|$. Using the definition of $\hat{\e}^{0}$ in \eqref{e_hat_def}, we have 
		\begin{align*}
\| \hat{\e}^{0} \|^2 \leq  \tfrac{v_2^2}{\upsilon^2} \left(\|\hat{\U}\tran\x^{0}\|^2 + \|\hat{\mathbf{\Lambda}}_b^{-1}  \hat{\U}\tran\s^{0}\|^2 \right) = \tfrac{1}{n}\|\hat{\U}\tran\x^{0}\|^2 +\tfrac{1}{n} \|\hat{\mathbf{\Lambda}}_b^{-1}  \hat{\U}\tran\s^{0}\|^2.
	\end{align*}
If the initialization is identical $\x^0=\one \otimes x^0$ (for some $x^0 \in \real^d$), then $\|\hat{\U}\tran\x^{0}\|=0$ and $\x^0=\bar{\x}^0$; moreover, $\z^0=\zero$ since $\y^0=\zero$. Hence,  we have
\begin{align} \label{e_0_bound}
\| \hat{\e}^{0} \|^2 \leq 	\frac{1}{ \underline{\lambda_b}^2 n} \|\hat{\U}\tran\s^{0}\|^2 &\overset{\eqref{s_definition}}{=} \frac{\alpha^2}{\underline{\lambda_b}^2 n} \|\hat{\U}\tran \A \grad \mathbf{f}(\bar{\x}^{0})\|^2 \nonumber \\
&= \frac{\alpha^2}{ \underline{\lambda_b}^2 n} \|(\A-\tfrac{1}{n}\one\tran \one \otimes I_d )  \grad \mathbf{f}(\x^0)\|^2 =  \frac{\alpha^2\zeta_0^2}{ \underline{\lambda_b}^2}.
\end{align}	
	where  $\zeta_0 \define \frac{1}n\|(\A-\tfrac{1}{n}\one\tran \one \otimes I_d )  \big(\grad \mathbf{f}(\x^0) - \one \otimes \grad f(x^0) \big)\|$.
	
\item For any $0 \leq \eta <1$, it holds that
\begin{align} \label{sum_bound}
 \sum_{\ell=0}^{k-1} \eta^{k-1-\ell} \leq \sum_{\ell=0}^{\infty} \eta^{\ell} =\frac{1}{1-\eta}.
\end{align}
 Moreover, for a non-negative sequence $\{w_\ell\}$ it holds that:
\begin{align} \label{sum_bound_2}
  \sum_{k=1}^K \sum_{\ell=0}^{k-1} \eta^{k-1-\ell} w_\ell  \leq  \left( \sum_{\ell=0}^\infty \eta^{\ell}  \right)\sum_{k=0}^{K-1} w_k    = \frac{1}{1-\eta}  \sum_{k=0}^{K-1} w_k.
\end{align}

\end{itemize}

\subsection{Descent and Consensus Inequalities}
In this section, we establish two inequalities given in Lemmas \ref{lemma_descent_inq} and \ref{lemma:cons_inequ}  that are essential to prove Theorems \ref{thm_nonconvex} and \ref{thm_PL_linear_conv}. We start with the following lemma regarding the iterates average.
\begin{lemma} [\bfseries \small Descent inequality] \label{lemma_descent_inq}
	Suppose that Assumptions \ref{assump:smoothness} and \ref{assump:noise} holds. If $\alpha \leq \tfrac{1}{2 L}$, then for any $k \geq 0$, we have
	\begin{equation} \label{descent_inquality}	
	\begin{aligned}
		  \Ex   f(\bar{x}^{k+1}) &\leq  \Ex  f(\bar{x}^{k}) - \frac{\alpha}{2}  \Ex  \| \grad f(\bar{x}^{k})  \|^2 - \frac{\alpha}{4}  \Ex  \| \overline{\grad\f}(\x^k)\|^2 
		   +\frac{\alpha L^2 \upsilon^2 v_1^2  }{2n}  \Ex  \|\hat{\e}^k\|^2+	\frac{ \alpha^2 L \sigma^2 }{2 n},
	\end{aligned}	
	\end{equation}
where $v_1 = \|\hat{\V}\|$ and $\upsilon$ is an arbitrary strictly positive constant.
\end{lemma}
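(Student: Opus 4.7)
My plan is to derive the descent inequality by starting from the average recursion \eqref{trans_uda_non_diag_x}, applying $L$-smoothness of $f$, and then handling the stochastic and consensus error terms separately.

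First I would apply \eqref{bound:L_smooth_function} to $f$ at $y = \bar{x}^{k+1}$ and $z = \bar{x}^{k}$. Substituting the update $\bar{x}^{k+1} - \bar{x}^k = -\alpha \overline{\grad\f}(\x^k) - \alpha \bar{\w}^k$ from \eqref{trans_uda_non_diag_x} yields
\begin{align*}
f(\bar{x}^{k+1}) \;\le\; f(\bar{x}^k) \;-\; \alpha \langle \grad f(\bar{x}^k),\, \overline{\grad\f}(\x^k) + \bar{\w}^k \rangle \;+\; \tfrac{\alpha^2 L}{2} \| \overline{\grad\f}(\x^k) + \bar{\w}^k \|^2.
\end{align*}
Taking conditional expectation with respect to $\bm{\cF}^k$ and using \eqref{noise_bound_impli} (namely $\Ex[\bar{\w}^k|\bm{\cF}^k]=0$ and $\Ex[\|\bar{\w}^k\|^2|\bm{\cF}^k]\le \sigma^2/n$), the cross term involving $\bar{\w}^k$ vanishes and the noise contributes $\alpha^2 L \sigma^2 /(2n)$.

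Next I would rewrite the deterministic inner product using the polarization identity $\langle a,b\rangle = \tfrac12(\|a\|^2 + \|b\|^2 - \|a-b\|^2)$ with $a=\grad f(\bar{x}^k)$ and $b=\overline{\grad\f}(\x^k)$. This extracts the two negative quadratic terms $-\tfrac{\alpha}{2}\|\grad f(\bar x^k)\|^2$ and $-\tfrac{\alpha}{2}\|\overline{\grad\f}(\x^k)\|^2$, at the cost of the Lipschitz-type deviation $\tfrac{\alpha}{2}\|\grad f(\bar{x}^k)-\overline{\grad\f}(\x^k)\|^2$. The step-size restriction $\alpha\le 1/(2L)$ gives $\tfrac{\alpha^2 L}{2}\le \tfrac{\alpha}{4}$, which absorbs the quadratic noise term $\tfrac{\alpha^2 L}{2}\|\overline{\grad\f}(\x^k)\|^2$ into $-\tfrac{\alpha}{2}\|\overline{\grad\f}(\x^k)\|^2$ and leaves $-\tfrac{\alpha}{4}\|\overline{\grad\f}(\x^k)\|^2$, exactly matching the desired coefficient.

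The only remaining term is the Lipschitz-type deviation $\|\grad f(\bar{x}^k) - \overline{\grad\f}(\x^k)\|^2$, which is handled directly by the bound \eqref{lipschitz_bound_average} already established in the useful inequalities: it is at most $\tfrac{L^2 \upsilon^2 v_1^2}{n}\|\hat{\e}^k\|^2$. Plugging this in and taking total expectation yields \eqref{descent_inquality}. There is no real obstacle here: the argument is the standard ``descent lemma for SGD'' adapted to the decentralized setting, with the consensus error absorbed into the $\|\hat{\e}^k\|^2$ term via the pre-established bound, and the only delicate step is ensuring that $\alpha \le 1/(2L)$ is tight enough to absorb $\alpha^2 L/2$ into $\alpha/4$.
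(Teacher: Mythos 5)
Your proposal is correct and follows essentially the same route as the paper's proof: apply the $L$-smoothness descent inequality to the averaged recursion, take conditional expectation so the noise cross term vanishes and the variance contributes $\alpha^2 L\sigma^2/(2n)$, use the polarization identity to extract the two negative quadratic terms, bound the deviation $\|\overline{\grad\f}(\x^k)-\grad f(\bar{x}^k)\|^2$ via \eqref{lipschitz_bound_average}, and use $\alpha\le 1/(2L)$ to turn the coefficient $-\tfrac{\alpha}{2}(1-\alpha L)$ into $-\tfrac{\alpha}{4}$. No gaps.
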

\begin{proof}
Recall that $\bar{x}^{k+1}=\bar{x}^{k}-\alpha \overline{\grad\f}(\x^k) - \alpha 	\bar{\w}^k$ where $\overline{\grad\f}(\x^k) =\tfrac{1}{n} \sum_{i=1}^n \grad f_i(x_i^{k})$. Setting $y=\bar{x}^{k+1}$ and $z=\bar{x}^{k}$ in  inequality \eqref{bound:L_smooth_function}, taking conditional expectation, and using \eqref{noise_bound_impli}, it holds that
	\begin{align}
	  \Ex [  f(\bar{x}^{k+1}) ~|~ \bm{\cF}^{k}]
	&\leq f(\bar{x}^{k}) - \alpha \big\langle \grad f(\bar{x}^{k}), \overline{\grad\f}(\x^k) \big\rangle  +\tfrac{\alpha^2 L}{2} \Ex \big[\|\overline{\grad\f}(\x^k)  +  	\bar{\w}^k(\x^{k})\|^2 ~|~ \bm{\cF}^{k}\big] \nonumber \\
	  &\leq f(\bar{x}^{k}) - \alpha   \big\langle \grad f(\bar{x}^{k}), \overline{\grad\f}(\x^k)   \big\rangle +\tfrac{\alpha^2 L }{2}  \|\overline{\grad\f}(\x^k) \|^2 +  	\tfrac{\alpha^2  L \sigma^2}{2n}.
	\end{align}
	Since $2 \langle a, b \rangle = \|a\|^2 + \|b\|^2 - \|a - b\|^2$, we have
	\begin{align}
     -      \big \langle \grad f(\bar{x}^{k}), \overline{\grad\f}(\x^k)   \big\rangle
   &=  - \tfrac{1}{2}  \|\grad f(\bar{x}^{k})\|^2 - \tfrac{1}{2}  \|\overline{\grad\f}(\x^k) \|^2 + \tfrac{1}{2}  \|\overline{\grad\f}(\x^k)-\grad f(\bar{x}^{k}) \|^2.
\end{align}
	Combining the last two equations, we get
	\begin{align}
	  \Ex [  f(\bar{x}^{k+1}) | \bm{\cF}^{k}] &\leq f(\bar{x}^{k}) - \tfrac{\alpha}{2} \| \grad f(\bar{x}^{k})  \|^2 - \tfrac{\alpha}{2} (1-\alpha L) \| \overline{\grad\f}(\x^k)\|^2 \nonumber \\
	  & \quad +\tfrac{\alpha}{2}    \| \overline{\grad\f}(\x^k)-\grad f(\bar{x}^{k}) \|^2+	\tfrac{ \alpha^2 L \sigma^2 }{2 n}. \label{2bndsbd}
	\end{align}
 Substituting the bound \eqref{lipschitz_bound_average} into inequality \eqref{2bndsbd}, letting $\alpha \le \frac{1}{2L}$, and taking the expectation yields \eqref{descent_inquality}.
\end{proof}

We next establish an inequality regarding the consensus iterates $\{\hat{\e}^{k}\}$.
\begin{lemma}[\bfseries \small Consensus inequality] \label{lemma:cons_inequ}
Supose that Assumptions \ref{assump:network}--\ref{assump:noise} hold. Then, for all $k \geq 0$, we have
\begin{equation} \label{cons_ineq_lemma}
\begin{aligned}
			\Ex \|\hat{\e}^{k+1}\|^2 &\leq  
	  \left(\gamma + \frac{2\alpha^2 L^2 v_1^2 v_2^2 \lambda_a^2 }{1-\gamma}  \right) \Ex \|\hat{\e}^{k}\|^2  + \frac{2\alpha^4 L^2 v_2^2 \lambda_a^2 n }{\underline{\lambda_b}^2(1-\gamma)\upsilon^2}  \Ex \|\overline{\grad \f}(\x^k) \|^2  \\
	& \quad +\frac{2\alpha^2  v_2^2 \lambda_a^2 n   \sigma^2}{\upsilon^2}+\frac{2\alpha^4 L^2 v_2^2 \lambda_a^2  \sigma^2}{\underline{\lambda_b}^2(1-\gamma)\upsilon^2},
\end{aligned}
\end{equation}
where  $v_1 \define \|\hat{\V}\|$, $v_2 \define \|\hat{\V}^{-1}\|$, $\lambda_a \define \|\mathbf{\Lambda}_a\|$, $\gamma \define \|\mathbf{\Gamma}\|<1$, and $\underline{\lambda_b}\define\frac{1}{\|\mathbf{\Lambda}_b^{-1}\|}$.
\end{lemma}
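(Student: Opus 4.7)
The plan is to derive the consensus inequality directly from the transformed recursion \eqref{error_check_diag}. Write $\hat{\e}^{k+1}=\mathbf{\Gamma}\hat{\e}^k+u^k$ where $u^k\define-\alpha\hat{\V}^{-1}\big[\tfrac{1}{\upsilon}\hat{\mathbf{\Lambda}}_a\hat{\U}\tran(\grad\f(\x^k)-\grad\f(\bar{\x}^k)+\w^k);\,\tfrac{1}{\upsilon}\hat{\mathbf{\Lambda}}_b^{-1}\hat{\mathbf{\Lambda}}_a\hat{\U}\tran(\grad\f(\bar{\x}^k)-\grad\f(\bar{\x}^{k+1}))\big]$ is the ``perturbation'' term. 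The first move is to apply Young's inequality $\|a+b\|^2\le(1+\eta)\|a\|^2+(1+1/\eta)\|b\|^2$ with the specific choice $\eta=(1-\gamma)/\gamma$, so that $(1+\eta)\|\mathbf{\Gamma}\hat{\e}^k\|^2\le(1+\eta)\gamma^2\|\hat{\e}^k\|^2=\gamma\|\hat{\e}^k\|^2$ and $(1+1/\eta)=1/(1-\gamma)$. This yields the clean split
\begin{align*}
\|\hat{\e}^{k+1}\|^2\le\gamma\|\hat{\e}^k\|^2+\tfrac{1}{1-\gamma}\|u^k\|^2,
\end{align*}
which already isolates the contractive piece with rate $\gamma$ as required.

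Next I would bound $\Ex\|u^k\|^2$ block-by-block. Using $\|\hat{\V}^{-1}\|\le v_2$, $\|\hat{\mathbf{\Lambda}}_a\|=\lambda_a$, $\|\hat{\mathbf{\Lambda}}_b^{-1}\|=1/\underline{\lambda_b}$, and $\|\hat{\U}\tran v\|\le\|v\|$ (since $\hat{\U}\hat{\U}\tran$ is an orthogonal projector by \eqref{UUtran}), we get $\|u^k\|^2\le \alpha^2 v_2^2\big(\|\text{top}\|^2+\|\text{bottom}\|^2\big)$. For the top block, split the gradient and noise via $\|a+b\|^2\le 2\|a\|^2+2\|b\|^2$; the deterministic piece is controlled by $L$-smoothness together with \eqref{avg_e_hat_bound}, giving $\|\grad\f(\x^k)-\grad\f(\bar{\x}^k)\|^2\le L^2\|\x^k-\bar{\x}^k\|^2\le L^2\upsilon^2 v_1^2\|\hat{\e}^k\|^2$, and the stochastic piece is bounded in expectation by Assumption~\ref{assump:noise}, giving $\Ex\|\w^k\|^2\le n\sigma^2$. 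Combining these produces contributions proportional to $\alpha^2 L^2 v_1^2 v_2^2\lambda_a^2\,\Ex\|\hat{\e}^k\|^2$ and $\alpha^2 v_2^2\lambda_a^2 n\sigma^2/\upsilon^2$.

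For the bottom block, the only ``new'' ingredient is $\grad\f(\bar{\x}^k)-\grad\f(\bar{\x}^{k+1})$, which by $L$-smoothness is bounded by $L\|\bar{\x}^{k+1}-\bar{\x}^k\|$. The average recursion \eqref{trans_uda_non_diag_x} gives $\bar{\x}^{k+1}-\bar{\x}^k=-\alpha\one\otimes(\overline{\grad\f}(\x^k)+\bar{\w}^k)$, so $\|\bar{\x}^{k+1}-\bar{\x}^k\|^2=\alpha^2 n\|\overline{\grad\f}(\x^k)+\bar{\w}^k\|^2$. Splitting once more via $\|a+b\|^2\le 2\|a\|^2+2\|b\|^2$ and taking expectation with $\Ex\|\bar{\w}^k\|^2\le\sigma^2/n$ gives an upper bound of the form $2\alpha^2 L^2 n\,\Ex\|\overline{\grad\f}(\x^k)\|^2+2\alpha^2 L^2\sigma^2$. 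After multiplying by the block's prefactor $\lambda_a^2/(\upsilon^2\underline{\lambda_b}^2)$, these produce the $\alpha^4$ terms in the lemma, which then get divided by $(1-\gamma)$ upon invoking the Young step above.

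The main subtlety will be handling the dependence on noise in the bottom block: $\grad\f(\bar{\x}^{k+1})$ depends stochastically on $\bar{\w}^k$ in a nonlinear way, so the mean-zero property of the noise cannot be used to orthogonally decompose $u^k$ into deterministic and stochastic parts. The resolution is exactly the maneuver above: pass immediately through $L$-smoothness to the vector difference $\|\bar{\x}^{k+1}-\bar{\x}^k\|^2$, which is a clean quadratic in the noise and can be bounded in expectation without any conditioning argument. Once all four pieces (gradient-consensus term, noise-at-top term, gradient-of-average term, noise-at-bottom term) are assembled and combined with the $1/(1-\gamma)$ factor from Young's inequality, one reads off the four terms on the right-hand side of \eqref{cons_ineq_lemma}, completing the proof.
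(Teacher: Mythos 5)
Your overall skeleton (start from \eqref{error_check_diag}, contract with $\mathbf{\Gamma}$, bound the perturbation blocks by smoothness and the average recursion) matches the paper, and your treatment of the bottom block --- passing through $L$-smoothness to $\|\bar{\x}^{k+1}-\bar{\x}^k\|^2$ rather than attempting a conditioning argument --- is exactly right. However, there is a genuine quantitative gap in your very first step, and it is fatal to the stated inequality. By folding the gradient noise $\w^k$ into the perturbation $u^k$ \emph{before} applying Young's inequality, you force the noise variance through the $\tfrac{1}{1-\gamma}$ amplification: your argument yields a third term of the form $\tfrac{2\alpha^2 v_2^2\lambda_a^2 n\sigma^2}{(1-\gamma)\upsilon^2}$, whereas the lemma claims $\tfrac{2\alpha^2 v_2^2\lambda_a^2 n\sigma^2}{\upsilon^2}$ with \emph{no} $(1-\gamma)^{-1}$ factor. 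This is not cosmetic: that term is the dominant noise contribution to the consensus error, and the extra $(1-\gamma)^{-1}$ propagates into Theorem \ref{thm_nonconvex} as $\tfrac{\alpha^2 L^2 v_1^2v_2^2\lambda_a^2\sigma^2}{(1-\gamma)^2}$ in place of $\tfrac{\alpha^2 L^2 v_1^2v_2^2\lambda_a^2\sigma^2}{1-\gamma}$, which for ED/D$^2$ turns the transient time $O(n^3/(1-\lambda)^2)$ back into $O(n^3/(1-\lambda)^4)$ --- i.e., it erases precisely the improvement over \textsc{Dsgd} that the paper is after.

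The fix is the maneuver the paper actually uses: keep the noise block $-\tfrac{\alpha}{\upsilon}\hat{\V}^{-1}\col\{\hat{\mathbf{\Lambda}}_a\hat{\U}\tran\w^k,\zero\}$ \emph{outside} the Young step. Expand the square of the sum of three terms; the cross term between the noise and $\mathbf{\Gamma}\hat{\e}^{k}$ vanishes under $\Ex[\,\cdot\,|\,\bm{\cF}^k]$ by \eqref{noise_bound_eq_mean}, while the cross term between the noise and the deterministic perturbation does \emph{not} vanish (since $\grad\f(\bar{\x}^{k+1})$ depends on $\w^k$, as you correctly observed) and must instead be bounded by Cauchy--Schwarz together with $ab\le\tfrac{a^2}{2}+\tfrac{b^2}{2}$. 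This costs only a factor of $2$ on the noise variance and an extra copy of the deterministic perturbation (absorbed via $1\le(1-\gamma)^{-1}$), but it keeps the $n\sigma^2$ term free of the $(1-\gamma)^{-1}$ amplification. Only after the noise has been peeled off in this way do you apply Young's inequality with parameter $\gamma$ to $\|\mathbf{\Gamma}\hat{\e}^{k}-(\text{deterministic perturbation})\|^2$; the rest of your block-by-block bounds then go through essentially verbatim.
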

	\begin{proof}
	From \eqref{error_check_diag}, we have
	\begin{align*}
	\|\hat{\e}^{k+1}\|^2&= \left\|\mathbf{\Gamma} \hat{\e}^{k} - \tfrac{\alpha}{\upsilon} \hat{\V}^{-1}  \begin{bmatrix}
  \hat{\mathbf{\Lambda}}_a \hat{\U}\tran \big(\grad \mathbf{f}(\mathbf{x}^{k}) - \grad \mathbf{f}(\bar{\x}^{k})\big) \\
\mathbf{\Lambda}_b^{-1} \hat{\mathbf{\Lambda}}_a \hat{\U}\tran \big(\grad \mathbf{f}(\bar{\x}^{k}) -\grad \mathbf{f}(\bar{\x}^{k+1})\big)
\end{bmatrix} - \tfrac{\alpha}{\upsilon} \hat{\V}^{-1} \begin{bmatrix}
   \hat{\mathbf{\Lambda}}_a \hat{\U}\tran \w^k  \\
\zero
\end{bmatrix} \right\|^2 \\
  &= \left\|\mathbf{\Gamma} \hat{\e}^{k} - \tfrac{\alpha}{\upsilon} \hat{\V}^{-1}  \begin{bmatrix}
  \hat{\mathbf{\Lambda}}_a \hat{\U}\tran \big(\grad \mathbf{f}(\mathbf{x}^{k}) - \grad \mathbf{f}(\bar{\x}^{k})\big) \\
 \hat{\mathbf{\Lambda}}_b^{-1} \hat{\mathbf{\Lambda}}_a \hat{\U}\tran \big(\grad \mathbf{f}(\bar{\x}^{k}) -\grad \mathbf{f}(\bar{\x}^{k+1})\big)
\end{bmatrix} \right\|^2 
\\
& \quad + \tfrac{\alpha^2}{\upsilon^2}  \| \hat{\V}_l^{-1} \hat{\mathbf{\Lambda}}_a \hat{\U}\tran \w^k   \|^2 
  -    \tfrac{2\alpha}{\upsilon}  \big\langle \hat{\V}_l^{-1}\hat{\mathbf{\Lambda}}_a \hat{\U}\tran \w^k , ~   \mathbf{\Gamma} \hat{\e}^{k} \big\rangle  
 \\
 & \quad  + \tfrac{2\alpha^2}{\upsilon^2}  \left \langle \hat{\V}_l^{-1}
   \hat{\mathbf{\Lambda}}_a \hat{\U}\tran \w^k ,~    \hat{\V}^{-1}  \begin{bmatrix}
\hat{\mathbf{\Lambda}}_a \hat{\U}\tran \big(\grad \mathbf{f}(\mathbf{x}^{k}) - \grad \mathbf{f}(\bar{\x}^{k})\big) \\
\hat{\mathbf{\Lambda}}_b^{-1} \hat{\mathbf{\Lambda}}_a \hat{\U}\tran \big(\grad \mathbf{f}(\bar{\x}^{k}) -\grad \mathbf{f}(\bar{\x}^{k+1})\big)
\end{bmatrix} \right \rangle,
	\end{align*}		
	where $\hat{\V}_l^{-1}$ is the left part of $\hat{\V}^{-1}=[\hat{\V}_l^{-1} ~ \hat{\V}_r^{-1}]$. 	Using Cauchy–Schwarz inequality and $ab \leq \tfrac{a^2}{2}+\tfrac{b^2}{2}$ for any non-negative scalars $a$ and $b$, the last term can be upper bounded by:
\begin{align*}
& \tfrac{2\alpha^2}{\upsilon^2}  \left \langle \hat{\V}_l^{-1}
   \hat{\mathbf{\Lambda}}_a \hat{\U}\tran \w^k ,~    \hat{\V}^{-1}  \begin{bmatrix}
\hat{\mathbf{\Lambda}}_a \hat{\U}\tran \big(\grad \mathbf{f}(\mathbf{x}^{k}) - \grad \mathbf{f}(\bar{\x}^{k})\big) \\
\hat{\mathbf{\Lambda}}_b^{-1} \hat{\mathbf{\Lambda}}_a \hat{\U}\tran \big(\grad \mathbf{f}(\bar{\x}^{k}) -\grad \mathbf{f}(\bar{\x}^{k+1})\big)
\end{bmatrix} \right \rangle \\
&\leq \tfrac{\alpha^2}{\upsilon^2} \| \hat{\V}_l^{-1}
   \hat{\mathbf{\Lambda}}_a \hat{\U}\tran \w^k \|^2 
   +\tfrac{\alpha^2 }{\upsilon^2}\left\|  \hat{\V}^{-1}  \begin{bmatrix}
 \hat{\mathbf{\Lambda}}_a \hat{\U}\tran \big(\grad \mathbf{f}(\mathbf{x}^{k}) - \grad \mathbf{f}(\bar{\x}^{k})\big) \\
 \hat{\mathbf{\Lambda}}_b^{-1} \hat{\mathbf{\Lambda}}_a \hat{\U}\tran \big(\grad \mathbf{f}(\bar{\x}^{k}) -\grad \mathbf{f}(\bar{\x}^{k+1})\big)
\end{bmatrix} \right\|^2.
\end{align*}	
Combining the last two equations and taking conditional expectation, we have
\begin{align*}
	\Ex [\|\hat{\e}^{k+1}\|^2 | \bm{\cF}^k] &\leq \Ex \left[ \left\|\mathbf{\Gamma} \hat{\e}^{k} - \tfrac{\alpha}{\upsilon} \hat{\V}^{-1}  \begin{bmatrix}
 \hat{\mathbf{\Lambda}}_a \hat{\U}\tran \big(\grad \mathbf{f}(\mathbf{x}^{k}) - \grad \mathbf{f}(\bar{\x}^{k})\big) \\
 \hat{\mathbf{\Lambda}}_b^{-1} \hat{\mathbf{\Lambda}}_a \hat{\U}\tran \big(\grad \mathbf{f}(\bar{\x}^{k}) -\grad \mathbf{f}(\bar{\x}^{k+1})\big)
\end{bmatrix} \right\|^2 \bigg |\bm{\cF}^k \right] \\
& \quad +   \tfrac{2 \alpha^2}{\upsilon^2} \Ex [\|
 \hat{\V}_l^{-1} \hat{\mathbf{\Lambda}}_a \hat{\U}\tran \w^k   \|^2 |  \bm{\cF}^k ] -  \tfrac{2\alpha}{\upsilon}  \Ex [      \big\langle \hat{\V}_l^{-1}\hat{\mathbf{\Lambda}}_a \hat{\U}\tran \w^k , ~   \mathbf{\Gamma} \hat{\e}^{k} \big\rangle   |  \bm{\cF}^k]   \\
 & \quad +  \tfrac{\alpha^2 v_2^2}{\upsilon^2} \Ex  \left[ \left\|   \begin{bmatrix}
 \hat{\mathbf{\Lambda}}_a \hat{\U}\tran \big(\grad \mathbf{f}(\mathbf{x}^{k}) - \grad \mathbf{f}(\bar{\x}^{k})\big) \\
 \hat{\mathbf{\Lambda}}_b^{-1} \hat{\mathbf{\Lambda}}_a \hat{\U}\tran \big(\grad \mathbf{f}(\bar{\x}^{k}) -\grad \mathbf{f}(\bar{\x}^{k+1})\big)
\end{bmatrix}  \right\| \bigg|  \bm{\cF}^k \right] \\
& \leq \Ex \left[ \left\|\mathbf{\Gamma} \hat{\e}^{k} - \tfrac{\alpha}{\upsilon} \hat{\V}^{-1}  \begin{bmatrix}
\hat{\mathbf{\Lambda}}_a \hat{\U}\tran \big(\grad \mathbf{f}(\mathbf{x}^{k}) - \grad \mathbf{f}(\bar{\x}^{k})\big) \\
 \hat{\mathbf{\Lambda}}_b^{-1} \hat{\mathbf{\Lambda}}_a \hat{\U}\tran \big(\grad \mathbf{f}(\bar{\x}^{k}) -\grad \mathbf{f}(\bar{\x}^{k+1})\big)
\end{bmatrix} \right\|^2 \bigg |\bm{\cF}^k \right] 
+ \tfrac{2\alpha^2\| \hat{\V}_l^{-1}\|^2 \lambda_a^2 n\sigma^2}{\upsilon^2}   \\
 & \quad + \tfrac{\alpha^2 v_2^2}{\upsilon^2} \Ex  \left[ \left\|  \begin{bmatrix}
\hat{\mathbf{\Lambda}}_a \hat{\U}\tran \big(\grad \mathbf{f}(\mathbf{x}^{k}) - \grad \mathbf{f}(\bar{\x}^{k})\big) \\
 \hat{\mathbf{\Lambda}}_b^{-1} \hat{\mathbf{\Lambda}}_a \hat{\U}\tran \big(\grad \mathbf{f}(\bar{\x}^{k}) -\grad \mathbf{f}(\bar{\x}^{k+1})\big)
\end{bmatrix}  \right\| \bigg|  \bm{\cF}^k \right],
	\end{align*}	
where $v_2 \define \|\hat{\V}^{-1}\|$ and the last inequality holds from \eqref{noise_bound_impli} and $\|\hat{\U}\tran \| \leq 1$. Taking the expectation and expanding the last term, we get 
	\begin{align}
	\Ex \|\hat{\e}^{k+1}\|^2  &\leq \Ex  \left\|\mathbf{\Gamma} \hat{\e}^{k} - \tfrac{\alpha}{\upsilon} \hat{\V}^{-1}  \begin{bmatrix}
 \hat{\mathbf{\Lambda}}_a \hat{\U}\tran \big(\grad \mathbf{f}(\mathbf{x}^{k}) - \grad \mathbf{f}(\bar{\x}^{k})\big) \\
 \hat{\mathbf{\Lambda}}_b^{-1} \hat{\mathbf{\Lambda}}_a \hat{\U}\tran \big(\grad \mathbf{f}(\bar{\x}^{k}) -\grad \mathbf{f}(\bar{\x}^{k+1})\big)
\end{bmatrix} \right\|^2  
+ \tfrac{2\alpha^2 \| \hat{\V}_l^{-1}\|^2 \lambda_a^2 n  \sigma^2}{\upsilon^2}  \nonumber \\
 & \quad +   \tfrac{\alpha^2 v_2^2 \lambda_a^2}{\upsilon^2}   \big( \Ex \| \grad \mathbf{f}(\mathbf{x}^{k}) - \grad \mathbf{f}(\bar{\x}^{k})\|^2+ (\underline{\lambda_b}^{-1})^2 \Ex \|  \grad \mathbf{f}(\bar{\x}^{k}) -\grad \mathbf{f}(\bar{\x}^{k+1})\|^2 \big).
	\end{align}				
where we used $\|\hat{\U}\|\leq 1$ and defined $\underline{\lambda_b}^{-1} \define \|\hat{\mathbf{\Lambda}}_b^{-1}\|$. We can bound the first term on the right hand side by using the inequality $\|\a+\b\|^2 \leq \frac{1}{t} \|\a\|^2+\frac{1}{1-t}\|\b\|^2$ for any $t \in (0,1)$. Doing so with $t=\gamma=\|\mathbf{\Gamma}\|<1$, we obtain
\begin{align}
	&\Ex \|\hat{\e}^{k+1}\|^2 \nonumber \\ 
	&\leq  \gamma \Ex \|\hat{\e}^{k}\|^2  + \tfrac{\alpha^2 v_2^2 \lambda_a^2 }{(1-\gamma)\upsilon^2}  \big( \Ex \|\grad \mathbf{f}(\mathbf{x}^{k}) - \grad \mathbf{f}(\bar{\x}^{k})\|^2
	+\tfrac{1}{\underline{\lambda_b}^2} \Ex\|  \grad \mathbf{f}(\bar{\x}^{k}) -\grad \mathbf{f}(\bar{\x}^{k+1})\|^2 \big) 
	\nonumber \\
	& ~ + \tfrac{2\alpha^2 \| \hat{\V}_l^{-1}\|^2 \lambda_a^2 n  \sigma^2}{\upsilon^2}  
	+ \tfrac{\alpha^2 v_2^2 \lambda_a^2 }{\upsilon^2}  \big( \Ex \|\grad \mathbf{f}(\mathbf{x}^{k}) - \grad \mathbf{f}(\bar{\x}^{k})\|^2
	+ \tfrac{1}{\underline{\lambda_b}^2} \Ex\| \grad \mathbf{f}(\bar{\x}^{k}) -\grad \mathbf{f}(\bar{\x}^{k+1})\|^2 \big)  
	 \nonumber \\
	&\leq  \gamma \Ex \|\hat{\e}^{k}\|^2 + \tfrac{2\alpha^2 L^2 v_2^2 \lambda_a^2 }{(1-\gamma)\upsilon^2}  \Ex \|\x^{k} - \bar{\x}^{k}\|^2+  \tfrac{2\alpha^2 L^2 v_2^2 \lambda_a^2 n }{\underline{\lambda_b}^2(1-\gamma)\upsilon^2}   \Ex \| \bar{x}^{k} -\bar{x}^{k+1}\|^2 +\tfrac{2\alpha^2  v_2^2 \lambda_a^2 n  \sigma^2}{\upsilon^2}  \nonumber \\
	& \overset{\eqref{avg_e_hat_bound}}{\leq }
	\left(\gamma+ \tfrac{2\alpha^2 L^2 v_1^2 v_2^2 \lambda_a^2 }{(1-\gamma)} \right) \Ex \|\hat{\e}^{k}\|^2  +  \tfrac{2\alpha^2 L^2  v_2^2  \lambda_a^2 n }{\underline{\lambda_b}^2(1-\gamma) \upsilon^2}   \Ex \| \bar{x}^{k} -\bar{x}^{k+1}\|^2 +\tfrac{2\alpha^2  v_2^2 \lambda_a^2 n  \sigma^2}{\upsilon^2}, 
	\label{ineq_conses_proof}
	\end{align}	
	where the second inequality holds due to smoothness condition \eqref{smooth_f_eq}, $\|\hat{\V}_l^{-1}\| \leq v_2$, and $1 \leq 1/(1-\gamma)$. The second term can be bounded by using
	\begin{align*}
	\Ex [\| \bar{x}^{k} -\bar{x}^{k+1}\|^2 | \bm{\cF}^k] &\overset{\eqref{error_average_diag}}{=} \Ex [\|\alpha   \overline{\grad \f}(\x^k)+\alpha \bar{\w}^k \|^2 | \bm{\cF}^k] 	 \overset{\eqref{noise_bound_impli}}{\leq} \alpha^2 \|\overline{\grad \f}(\x^k) \|^2  
	+\tfrac{\alpha^2 \sigma^2}{n}.
	\end{align*}
	Taking the expectation and substituting the resulting bound into \eqref{ineq_conses_proof}  yields inequality \eqref{cons_ineq_lemma}.
	\end{proof}

\subsection{Proof of Theorem \ref{thm_nonconvex} (Non-convex Case)}
\label{app:thm_nonconvex_proof}
  \begin{proof}[\bfseries Proof of Theorem \ref{thm_nonconvex}]
We start by deriving an ergodic bound on the consensus iterates $\{\hat{\e}^{k}\}$. Setting $\upsilon=\sqrt{n} v_2$ in \eqref{cons_ineq_lemma}, we get
\begin{align} 
		\Ex \|\hat{\e}^{k+1}\|^2 &\leq  
	  \left(\gamma + \tfrac{2\alpha^2 L^2 v_1^2 v_2^2 \lambda_a^2 }{1-\gamma}  \right) \Ex \|\hat{\e}^{k}\|^2  + \tfrac{2\alpha^4 L^2 \lambda_a^2 }{\underline{\lambda_b}^2(1-\gamma)}  \Ex \|\overline{\grad \f}(\x^k) \|^2  +2\alpha^2 \lambda_a^2   \sigma^2+\tfrac{2\alpha^4 L^2 \lambda_a^2  \sigma^2}{\underline{\lambda_b}^2(1-\gamma)n}.
\end{align}  
If the step size $\alpha$ satisfies
\begin{align} \label{step_size_nonconv_prof_1}
\gamma + \frac{2\alpha^2 L^2  v_1^2 v_2^2 \lambda_a^2}{1-\gamma}  \leq \frac{1+\gamma}{2} ~\Rightarrow~ \alpha \leq \frac{1-\gamma}{2L v_1v_2 \lambda_a},
\end{align}
then the previous bound can be upper bounded by
\begin{align} 
		\Ex \|\hat{\e}^{k+1}\|^2 &\leq  
	  \tfrac{1+\gamma}{2} \Ex \|\hat{\e}^{k}\|^2  + \tfrac{2\alpha^4 L^2 \lambda_a^2 }{\underline{\lambda_b}^2(1-\gamma)}  \left( \Ex \| \grad f(\bar{x}^{k})  \|^2+ \Ex \|\overline{\grad \f}(\x^k) \|^2 \right)  +2\alpha^2 \lambda_a^2   \sigma^2+\tfrac{2\alpha^4 L^2 \lambda_a^2  \sigma^2}{\underline{\lambda_b}^2(1-\gamma)n}.
\end{align}  
Recursively applying the previous inequality, it holds (for any $k=1,2,\dots$) that 
\begin{align} \label{cons_ineq_nonconvex_proof}
		\Ex \|\hat{\e}^{k}\|^2 &\leq  
	  \left(\tfrac{1+\gamma}{2}\right)^{k}  \|\hat{\e}^{0}\|^2   +\sum_{\ell=0}^{k-1} \left( \tfrac{1+\gamma}{2} \right)^{k-1-\ell}  \left( 2\alpha^2 \lambda_a^2   \sigma^2+\tfrac{2\alpha^4 L^2 \lambda_a^2  \sigma^2}{\underline{\lambda_b}^2(1-\gamma)n} \right) \nonumber \\
	  & \quad + \tfrac{2\alpha^4 L^2 \lambda_a^2 }{\underline{\lambda_b}^2(1-\gamma)} \sum_{\ell=0}^{k-1} \left( \tfrac{1+\gamma}{2} \right)^{k-1-\ell} \left( \Ex \| \grad f(\bar{x}^{\ell})  \|^2 + \Ex \|\overline{\grad \f}(\x^\ell) \|^2 \right) \nonumber \\
	  &\leq  
	  \left(\tfrac{1+\gamma}{2}\right)^{k}  \|\hat{\e}^{0}\|^2  
	 +\tfrac{2}{1-\gamma}  \left( 2\alpha^2 \lambda_a^2   \sigma^2+\tfrac{2\alpha^4 L^2 \lambda_a^2  \sigma^2}{\underline{\lambda_b}^2(1-\gamma)n} \right)
	   \nonumber \\
	  & \quad   + \tfrac{2\alpha^4 L^2 \lambda_a^2 }{\underline{\lambda_b}^2(1-\gamma)} \sum_{\ell=0}^{k-1} \left( \tfrac{1+\gamma}{2} \right)^{k-1-\ell} \left( \Ex \| \grad f(\bar{x}^{\ell})  \|^2 + \Ex \|\overline{\grad \f}(\x^\ell) \|^2 \right),
\end{align}  
where in the last inequality we used \eqref{sum_bound}.  Taking the average of both sides of \eqref{cons_ineq_nonconvex_proof} over $k=1,2\dots,K$, and using \eqref{sum_bound}, we have
  \begin{align} 
	\frac{1}{K} \sum_{k=1}^K	\Ex \|\hat{\e}^{k}\|^2 	  &\leq  
	 \frac{2 \|\hat{\e}^{0}\|^2}{(1-\gamma)K} 
	 +  \frac{4\alpha^2 \lambda_a^2   \sigma^2}{1-\gamma}+\frac{4\alpha^4 L^2 \lambda_a^2  \sigma^2}{\underline{\lambda_b}^2(1-\gamma)^2n}  \nonumber \\
	  & \quad  + \frac{2\alpha^4 L^2 \lambda_a^2 }{\underline{\lambda_b}^2(1-\gamma)K} \sum_{k=1}^K \sum_{\ell=0}^{k-1} \left( \tfrac{1+\gamma}{2} \right)^{k-1-\ell} \left( \Ex \| \grad f(\bar{x}^{\ell})  \|^2 + \Ex \|\overline{\grad \f}(\x^\ell) \|^2 \right) \nonumber \\
	  & \leq  
	 \frac{2 \|\hat{\e}^{0}\|^2}{(1-\gamma)K} 
	 +  \frac{4\alpha^2 \lambda_a^2   \sigma^2}{1-\gamma}+\frac{4\alpha^4 L^2 \lambda_a^2  \sigma^2}{\underline{\lambda_b}^2(1-\gamma)^2n}  \nonumber \\
	  & \quad  + \frac{4\alpha^4 L^2 \lambda_a^2 }{\underline{\lambda_b}^2(1-\gamma)^2 K} \sum_{k=0}^{K-1} \left( \Ex \| \grad f(\bar{x}^{k})  \|^2 + \Ex \|\overline{\grad \f}(\x^k) \|^2 \right),
\end{align}  
where in the last inequality we used the bound \eqref{sum_bound_2} on the last term. Adding $\frac{\|\hat{\e}^{0}\|^2}{(1-\gamma)K}$ to both sides of the previous inequality and using $\frac{\|\hat{\e}^{0}\|^2}{K} \leq \frac{\|\hat{\e}^{0}\|^2}{(1-\gamma)K}$, we get
  \begin{align}  \label{bound_nonconv_cons_final}
	\frac{1}{K} \sum_{k=0}^{K-1}	\Ex \|\hat{\e}^{k}\|^2  	  &\leq  
	 \frac{3 \|\hat{\e}^{0}\|^2}{(1-\gamma)K} 
	 +  \frac{4\alpha^2 \lambda_a^2   \sigma^2}{1-\gamma}+\frac{4\alpha^4 L^2 \lambda_a^2  \sigma^2}{\underline{\lambda_b}^2(1-\gamma)^2n}  \nonumber \\
	  & \quad  + \frac{4\alpha^4 L^2 \lambda_a^2 }{\underline{\lambda_b}^2(1-\gamma)^2 K}  \sum_{k=0}^{K-1} \left( \Ex \| \grad f(\bar{x}^{k})  \|^2 + \Ex \|\overline{\grad \f}(\x^k) \|^2 \right).
\end{align}  
Inequality \eqref{bound_nonconv_cons_final} will be used in the upcoming bound \eqref{sum_grad_ineq_nonconvex_proof}, which will be established next.
Subtracting $f^\star$ from both sides of \eqref{descent_inquality}, letting $\upsilon=\sqrt{n} v_2$, using $1/2 \geq 1/4$, and rearranging  gives 
	\begin{align}
	  \Ex  \| \grad f(\bar{x}^{k})  \|^2 +   \Ex  \| \overline{\grad\f}(\x^k)\|^2 		  &\leq \tfrac{4}{\alpha} \left( \Ex  \tilde{f}(\bar{x}^{k}) -  \Ex   \tilde{f}(\bar{x}^{k+1}) \right)
	  +2  L^2 v_1^2 v_2^2    \Ex  \|\hat{\e}^k\|^2+	\tfrac{ 2\alpha L \sigma^2 }{ n},
	\end{align}	
	where $\tilde{f}(\bar{x}^{k})\define f(\bar{x}^{k})-f^\star$. Summing over $k=0,\ldots,K-1$, dividing by $K \geq 1$, and using $-\tilde{f}(\bar{x}^{k})\leq 0$, it holds that
\begin{align} \label{sum_grad_ineq_nonconvex_proof}
	\frac{1}{K}	 \sum_{k=0}^{K-1}  \Ex  \| \grad f(\bar{x}^{k})  \|^2 +   \Ex  \| \overline{\grad\f}(\x^k)\|^2 		  &\leq \frac{4   \tilde{f}(\bar{x}^{0})}{\alpha K}	  
	  +	\frac{ 2\alpha L \sigma^2 }{ n}
	  + \frac{2  L^2 v_1^2 v_2^2  }{K}	 \sum_{k=0}^{K-1}   \Ex  \|\hat{\e}^k\|^2.
	\end{align}		
Substituting inequality \eqref{bound_nonconv_cons_final} into \eqref{sum_grad_ineq_nonconvex_proof} and rearranging, we obtain 
\begin{align} 
	&\left(1-\frac{8\alpha^4 L^4 v_1^2 v_2^2 \lambda_a^2   }{\underline{\lambda_b}^2(1-\gamma)^2}  \right)	\frac{1}{K}  \sum_{k=0}^{K-1}  \Ex  \| \grad f(\bar{x}^{k})  \|^2 +   \Ex  \| \overline{\grad\f}(\x^k)\|^2  \nonumber \\
	& \leq \frac{4   \tilde{f}(\bar{x}^{0})}{\alpha K}	  
	+	\frac{ 2\alpha L \sigma^2 }{ n}
	  + 	\frac{6 L^2 v_1^2 v_2^2   \|\hat{\e}^{0}\|^2}{(1-\gamma)K} 
	 +  \frac{8\alpha^2   L^2 v_1^2 v_2^2   \lambda_a^2  \sigma^2}{1-\gamma}+\frac{8\alpha^4 L^4 v_1^2 v_2^2  \lambda_a^2  \sigma^2}{ \underline{\lambda_b}^2(1-\gamma)^2n} .
	\end{align}	
	 If we set 
\begin{align} \label{step_size_nonconv_prof_2}
\frac{1}{2} \leq 1-\frac{8\alpha^4 L^4 v_1^2 v_2^2 \lambda_a^2  }{\underline{\lambda_b}^2(1-\gamma)^2}  \Longrightarrow \alpha \leq \frac{\sqrt{\underline{\lambda_b}(1-\gamma)}}{2 L \sqrt{v_1 v_2 \lambda_a }}, 
\end{align}	
	then we find
	\begin{align} \label{non_proof_last0}
	&\frac{1}{K}	 \sum_{k=0}^{K-1}  \Ex  \| \grad f(\bar{x}^{k})  \|^2 +   \Ex  \| \overline{\grad\f}(\x^k)\|^2 		 \nonumber \\
	 &\leq 
	  	  \frac{8   \tilde{f}(\bar{x}^{0})}{\alpha K}	  
	 	   + 	\frac{ 4\alpha L \sigma^2 }{ n} 
	 	   +   \frac{12  L^2 v_1^2 v_2^2    \|\hat{\e}^{0}\|^2}{(1-\gamma)K} 
	 +  \frac{16\alpha^2   L^2 v_1^2 v_2^2   \lambda_a^2  \sigma^2}{1-\gamma}+\frac{16\alpha^4 L^4 v_1^2 v_2^2  \lambda_a^2  \sigma^2}{ \underline{\lambda_b}^2(1-\gamma)^2n}.
	\end{align}
 Using \eqref{e_0_bound} in the above inequality, we arrive at \eqref{eq:thm:nonconvex}.  	The step size conditions in the theorem follows from $\alpha \leq \frac{1}{2L}$ (from Lemma \ref{lemma_descent_inq}) and the conditions \eqref{step_size_nonconv_prof_1}  and \eqref{step_size_nonconv_prof_2}.

  \end{proof}

\subsection{Proof of Theorem \ref{thm_PL_linear_conv} (PL Condition Case)}
\label{app:thm_pl_lin_conv_proof}
To prove Theorem \ref{thm_PL_linear_conv}, we need the following result.
\begin{lemma}
Under Assumption \ref{assump:smoothness}, the following holds:
\begin{align} \label{polyak_bound}
\|\grad f(x)\|^2 &\leq 2 L \big(f(x)-f^\star\big), \quad \forall~ x \in \real^d \\
\frac{1}{n}\sum_{i=1}^n (f(x_i^{k})-f^\star) &\leq 2 \big(f(\bar{x}^k)-f^\star\big)+\frac{L \upsilon^2 v_1^2}{n} \|\hat{\e}^k \|, \quad \forall ~k.  \label{f_PL_bound}
\end{align}
\end{lemma}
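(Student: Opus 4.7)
The first inequality is the standard ``descent lemma'' consequence of $L$-smoothness. I would apply \eqref{bound:L_smooth_function} with $y = x - \tfrac{1}{L}\grad f(x)$ and $z = x$, obtaining
$f(y) \leq f(x) - \tfrac{1}{2L}\|\grad f(x)\|^2$.
Since $f(y) \geq f^\star$ by Assumption~\ref{assump:smoothness}, rearranging yields \eqref{polyak_bound}. This is a one-line argument and requires no further ingredients.

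For the second inequality, the plan is to expand each $f(x_i^k) - f^\star$ around $\bar{x}^k$ and then absorb the cross term using the first inequality. Specifically, I would start from
\begin{equation*}
f(x_i^k) - f(\bar{x}^k) \leq \langle \grad f(\bar{x}^k),\, x_i^k - \bar{x}^k\rangle + \tfrac{L}{2}\|x_i^k - \bar{x}^k\|^2,
\end{equation*}
which follows from \eqref{bound:L_smooth_function}. Rather than summing to kill the inner product via $\sum_i (x_i^k - \bar{x}^k) = 0$ (which gives a weaker constant), I would apply Young's inequality $\langle a,b\rangle \leq \tfrac{1}{2L}\|a\|^2 + \tfrac{L}{2}\|b\|^2$ to the cross term, yielding
\begin{equation*}
f(x_i^k) - f(\bar{x}^k) \leq \tfrac{1}{2L}\|\grad f(\bar{x}^k)\|^2 + L \|x_i^k - \bar{x}^k\|^2.
\end{equation*}
Then I would invoke the just-proved bound \eqref{polyak_bound} on $\|\grad f(\bar{x}^k)\|^2$ to replace the first term by $f(\bar{x}^k) - f^\star$.

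Summing over $i$, dividing by $n$, and adding $f(\bar{x}^k) - f^\star$ to both sides produces
\begin{equation*}
\tfrac{1}{n}\sum_{i=1}^n \big(f(x_i^k) - f^\star\big) \leq 2\big(f(\bar{x}^k) - f^\star\big) + \tfrac{L}{n} \|\x^k - \bar{\x}^k\|^2.
\end{equation*}
The final step is to convert the deviation-from-average term into the $\hat{\e}^k$ norm via \eqref{avg_e_hat_bound}, which gives $\|\x^k - \bar{\x}^k\|^2 \leq \upsilon^2 v_1^2 \|\hat{\e}^k\|^2$, matching \eqref{f_PL_bound}. There is no real obstacle here; the only subtle choice is using Young with the specific parameter $L$ so that the PL-style bound \eqref{polyak_bound} can be applied cleanly to recover a multiple of $f(\bar{x}^k) - f^\star$ rather than a raw $\|\grad f(\bar{x}^k)\|^2$ term.
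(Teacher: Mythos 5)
Your proposal is correct and follows essentially the same route as the paper: the first inequality via \eqref{bound:L_smooth_function} at $y=x-\tfrac{1}{L}\grad f(x)$, and the second via the smoothness expansion around $\bar{x}^k$, Young's inequality with parameter $L$ on the cross term, the just-proved bound on $\|\grad f(\bar{x}^k)\|^2$, and \eqref{avg_e_hat_bound}. Your final bound carries $\|\hat{\e}^k\|^2$ rather than the $\|\hat{\e}^k\|$ printed in \eqref{f_PL_bound}; the squared version is the one the argument actually yields, so the statement as printed appears to contain a typographical omission of the exponent.
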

\begin{proof}
 Inequality  \eqref{polyak_bound} is proven in \cite{polyak1987introduction}. We include the proof here for convenience. Since $f^\star \leq f(z)$ for any $z \in \real^d$, we have
 \begin{align}
f^\star &\leq f\big(x-\tfrac{1}{L} \grad f(x)\big)
\leq f(x)-\tfrac{1}{2L} \|\grad f(x)\|^2,
	 \end{align}
 where the last inequality holds due to \eqref{bound:L_smooth_function} with $y=x-\tfrac{1}{L} \grad f(x)$ and $z=x$. Rearranging the above we arrive at \eqref{polyak_bound}. We now establish inequality \eqref{f_PL_bound}. The argument adjust the proof  \cite[Lemma 13]{xin2021improved} to our case. Substituting $y=x^k_i$ and $z=\bar{x}^k$ in \eqref{bound:L_smooth_function}, we get
 \begin{align*}
	      f(x^k_i) &\leq f(\bar{x}^k)+ \big \langle \grad f(\bar{x}^k), x^k_i-\bar{x}^k \big \rangle+\tfrac{L}{2} \|x^k_i-\bar{x}^k\|^2.
	 \end{align*}
	Using Cauchy-Schwarz and Young's inequalities, we have
\begin{align*}
\big \langle \grad f(\bar{x}^k), x^k_i-\bar{x}^k \big \rangle \leq \| \grad f(\bar{x}^k)\| \|x^k_i-\bar{x}^k \| &\leq \tfrac{1}{2L} \| \grad f(\bar{x}^k)\|^2 + \tfrac{L}{2}\|x^k_i-\bar{x}^k \|^2 \\
& \overset{\eqref{polyak_bound}}{\leq} f(\bar{x}^k)-f^\star + \tfrac{L}{2}\|x^k_i -\bar{x}^k\|^2.
\end{align*}
Combining the last two bounds, then subtracting $f^\star$ from both sides and averaging over $i$, we get	 
\begin{align*}
\frac{1}{n}\sum_{i=1}^n \big(f(x_i^{k})-f^\star\big) &\leq 2 \big(f(\bar{x}^k)-f^\star\big)+\tfrac{L}{n} \|\x^{k}- \bar{\x}^{k} \|. 
\end{align*}	 
	 Using \eqref{avg_e_hat_bound}, we arrive at \eqref{f_PL_bound}.
\end{proof}
\begin{proof}[\bfseries Proof of Theorem \ref{thm_PL_linear_conv}]
 Setting $\upsilon=\sqrt{n} v_2$ in \eqref{descent_inquality}, we get
		\begin{align} \label{des_ineq_pl_proof}	
			  \Ex   f(\bar{x}^{k+1}) &\leq  \Ex  f(\bar{x}^{k}) - \tfrac{\alpha}{2}  \Ex  \| \grad f(\bar{x}^{k})  \|^2 - \tfrac{\alpha}{4}  \Ex  \| \overline{\grad\f}(\x^k)\|^2 
	  +\tfrac{\alpha L^2  v_1^2 v_2^2 }{2}  \Ex  \|\hat{\e}^k\|^2+	\tfrac{ \alpha^2 L \sigma^2 }{2 n}.
	\end{align}	
 Subtracting $f^\star$ from both sides of \eqref{des_ineq_pl_proof}, using the PL inequality \eqref{PL_cond} and the bound $- \tfrac{\alpha}{4}  \Ex  \| \overline{\grad\f}(\x^k)\|^2  \leq 0$, we get
	\begin{equation} \label{pl_average_error_inq}	
	\begin{aligned}
	  \Ex   \tilde{f}(\bar{x}^{k+1})  &\leq  (1-\mu \alpha ) \Ex \tilde{f}(\bar{x}^{k}) + \tfrac{\alpha  L^2 v_1^2 v_2^2  }{2}  \Ex \| 	\hat{\e}^{k} \|^2+		\tfrac{ \alpha^2 L \sigma^2 }{2 n},
	\end{aligned}	
	\end{equation}
	where $\tilde{f}(\bar{x}^k)=f(\bar{x}^{k})-f^\star$. To establish our result, we will combine the above bound with a later bound given in \eqref{pl_consensus_error_inq}, which we establish next.	Letting $\upsilon=\sqrt{n} v_2$ in  \eqref{cons_ineq_lemma}, gives	
\begin{align} \label{cons_ineq_pl_proof}
		\Ex \|\hat{\e}^{k+1}\|^2 &\leq  
	  \left(\gamma + \tfrac{2\alpha^2 L^2 v_1^2 v_2^2 \lambda_a^2 }{1-\gamma}  \right) \Ex \|\hat{\e}^{k}\|^2  + \tfrac{2\alpha^4 L^2 \lambda_a^2 }{\underline{\lambda_b}^2(1-\gamma)}  \Ex \|\overline{\grad \f}(\x^k) \|^2  +2\alpha^2 \lambda_a^2   \sigma^2+\tfrac{2\alpha^4 L^2 \lambda_a^2  \sigma^2}{\underline{\lambda_b}^2(1-\gamma)n}.
\end{align}
	Note that
	\begin{align*}
 \|\overline{\grad \f}(\x^k) \|^2 &=\|\overline{\grad \f}(\x^k)-\grad f(\bar{x}^k)+\grad f(\bar{x}^k) \|^2
	 \leq 2 \|\overline{\grad \f}(\x^k)-\grad f(\bar{x}^k) \|^2+2  \|\grad f(\bar{x}^k) \|^2  
	\\
	& \overset{\eqref{polyak_bound}}{\leq} 2 \|\overline{\grad \f}(\x^k)-\grad f(\bar{x}^k) \|^2+4 L  \tilde{f}(\bar{x}^{k})\overset{\eqref{lipschitz_bound_average}}{\leq}   2 v_1^2 v_2^2 L^2 \|\hat{\e}^k \|^2+4 L  \tilde{f}(\bar{x}^{k}).
	\end{align*}
Substituting the above into \eqref{cons_ineq_pl_proof}, we get
		\begin{align} \label{cons_ineq_pl_proof2}
		\Ex \|\hat{\e}^{k+1}\|^2 &\leq  
	   \tfrac{8\alpha^4 L^3 \lambda_a^2 }{\underline{\lambda_b}^2(1-\gamma)}  \Ex \tilde{f}(\bar{x}^k) 
	   + \left(\gamma + \tfrac{2\alpha^2 L^2 v_1^2 v_2^2 \lambda_a^2   }{1-\gamma}+\tfrac{ 4 \alpha^4 L^4    v_1^2 v_2^2 \lambda_a^2 }{\underline{\lambda_b}^2(1-\gamma)}  \right) \Ex \|\hat{\e}^{k}\|^2 
	   \nonumber \\
	   & \quad  +2\alpha^2 \lambda_a^2   \sigma^2+\tfrac{2\alpha^4 L^2 \lambda_a^2  \sigma^2}{\underline{\lambda_b}^2(1-\gamma)n}.
\end{align}
 To simplify later expressions, we choose $\alpha$ such that
\begin{subequations}
\begin{align}
\frac{2\alpha^2 L^2 v_1^2 v_2^2 \lambda_a^2   }{1-\gamma}+\frac{ 4 \alpha^4 L^4 v_1^2 v_2^2 \lambda_a^2   }{\underline{\lambda_b}^2(1-\gamma)}  &\leq  \frac{3\alpha^2 L^2 v_1^2 v_2^2 \lambda_a^2 }{1-\gamma}, \\
	\gamma+\frac{3\alpha^2 L^2 v_1^2 v_2^2 \lambda_a^2 }{1-\gamma}
	 & \leq  \frac{1+\gamma}{2}, \\
	  \frac{8\alpha^4 L^3 \lambda_a^2 }{\underline{\lambda_b}^2(1-\gamma)}    &\leq \frac{\alpha \mu \lambda_a^2}{L v_1^2 v_2^2}.
\end{align}
\end{subequations}
The above inequalities are satisfied if 
\begin{align} \label{stepsize_pl_proof_1}
\alpha \leq \min\left\{\frac{ \underline{\lambda_b}}{2  L },~
\frac{1-\gamma}{\sqrt{6} L v_1 v_2 \lambda_a},~
\left(\frac{\mu \underline{\lambda_b}^{2}(1-\gamma)}{8 L^{4} v_1^{2} v_2^{2} } \right)^{1/3}
\right\}.
\end{align}
Under the previous conditions on $\alpha$,  inequality \eqref{cons_ineq_pl_proof2} is upper bounded by
	\begin{align} \label{pl_consensus_error_inq}
		\Ex \|\hat{\e}^{k+1}\|^2 &\leq  
	   \tfrac{\alpha \mu \lambda_a^2}{L v_1^2 v_2^2} \Ex \tilde{f}(\bar{x}^k) 
	   + \left( \tfrac{1+\gamma}{2}  \right) \Ex \|\hat{\e}^{k}\|^2 
	    	    + 2\alpha^2 \lambda_a^2   \sigma^2+\tfrac{2\alpha^4 L^2 \lambda_a^2  \sigma^2}{\underline{\lambda_b}^2(1-\gamma)n}.
\end{align}
Note that from \eqref{f_PL_bound} with $\upsilon=\sqrt{n} v_2$, we have
\begin{align} \label{tran_relation}
 \frac{1}{n}\sum_{i=1}^n \Ex [f(x_i^{k})-f^\star] &\leq  2 \Ex [f(\bar{x}^k)-f^\star]+L  v_1^2 v_2^2 \Ex \|\hat{\e}^k \|. 
\end{align}
Multiplying inequality \eqref{pl_average_error_inq} by $2$ and inequality \eqref{pl_consensus_error_inq} by $L v_1^2 v_2^2$ and rewriting them in matrix form, we have
\begin{align} \label{linear_dynamical_error2}
\begin{bmatrix} 2 \Ex \tilde{f}(\bar{x}^{k+1}) \\
L v_1^2 v_2^2 \Ex \|\hat{\e}^{k+1}\|^2 
\end{bmatrix}
 \leq 
\underbrace{\begin{bmatrix}
1-\mu \alpha  \vspace{0.5mm} 		 
&
 \alpha  L   \vspace{0.5mm} \\
\tfrac{\alpha \mu \lambda_a^2}{2} 
& 
 \tfrac{1+\gamma}{2}
\end{bmatrix}}_{\define H}
\begin{bmatrix} 2\Ex \tilde{f}(\bar{x}^{k})  \\
L v_1^2 v_2^2 \Ex \|\hat{\e}^{k}\|^2 
\end{bmatrix} 
+ \underbrace{\begin{bmatrix}
\frac{ \alpha^2 L \sigma^2 }{ n} \\
 2\alpha^2 L v_1^2 v_2^2 \lambda_a^2   \sigma^2
 + \tfrac{2\alpha^4 L^3 v_1^2 v_2^2 \lambda_a^2  \sigma^2}{\underline{\lambda_b}^2(1-\gamma)n}
\end{bmatrix}}_{\define h}.
\end{align}
We will establish the convergence of $\frac{1}{n}\sum_{i=1}^n \Ex [f(x_i^{k})-f^\star]$ through the convergence of \eqref{linear_dynamical_error2}. Note that
\begin{align} \label{rho_H}
\rho(H) \leq \|H\|_1= \max \left\{
1-\mu \alpha
+ \tfrac{\alpha \mu \lambda_a^2}{2}
, ~
\tfrac{1+\gamma}{2}+\alpha L
\right\} \leq 1-\tfrac{\mu \alpha}2 .
\end{align}
where the  last inequality holds under the  step size condition:
\begin{align} \label{stepsize_pl_proof_2}
\alpha \leq 
\frac{1-\gamma}{3L}  \leq \frac{1-\gamma}{2L+\mu}.
\end{align}
Hence, $\rho(H) <1$ and iterating inequality \eqref{linear_dynamical_error2}, we get
\begin{align} 
\begin{bmatrix} 2 \Ex \tilde{f}(\bar{x}^{k}) \\
L v_1^2 v_2^2 \Ex \|\hat{\e}^{k}\|^2 
\end{bmatrix}
 & \leq  
H^k 
\begin{bmatrix} 2\Ex \tilde{f}(\bar{x}^{0})   \\
L v_1^2 v_2^2 \Ex \|\hat{\e}^{0}\|^2 
\end{bmatrix} 
+ \sum_{\ell=0}^{k-1} H^\ell h \nonumber \\ 
 & \leq  
H^k 
\begin{bmatrix} 2\Ex \tilde{f}(\bar{x}^{0})  \\
L v_1^2 v_2^2 \Ex \|\hat{\e}^{0}\|^2 
\end{bmatrix} 
+(I- H)^{-1}h.
\end{align}
Taking the $1$-norm, using \eqref{tran_relation}, and using  the submultiplicative and triangle inequality  properties of the 1-norm, it holds that
\begin{align} \label{tran_PL_0}
\frac{1}{n}\sum_{i=1}^n \Ex [f(x_i^{k})-f^\star]
& \leq  
\|H^k\|_1  c_0
+ \left\|(I- H)^{-1} h \right\|_1 \leq  
\|H\|^k_1 c_0
+ \left\|(I- H)^{-1}h \right\|_1.
\end{align}
where $c_0 =2\Ex \tilde{f}(\bar{x}^{0}) +
L v_1^2 v_2^2 \Ex \|\hat{\e}^{0}\|^2 $.  We now bound the last term on the right.  To do that, we  compute  the inverse:
\begin{align*}
(I-H)^{-1} &= \begin{bmatrix}
\mu \alpha  \vspace{0.5mm} 		 
&
 -\alpha L \vspace{0.5mm} \\
-\frac{\alpha \mu \lambda_a^2}2 
& 
 \tfrac{1-\gamma}{2}
\end{bmatrix}^{-1} 
=
\frac{1}{\det(I-H)}
 \begin{bmatrix}
 \tfrac{1-\gamma}{2} \vspace{0.5mm} 		 
&
 \alpha L \vspace{0.5mm} \\
\frac{\alpha \mu \lambda_a^2}2
& 
\mu \alpha 
\end{bmatrix},
\end{align*}
where $\det(I-H)$ is the determinant of $I-H$:
\begin{align*}
\det(I-H)= \tfrac{1}{2} \big(\alpha \mu(1-\gamma) -  \alpha^2 \mu L \lambda_a^2 \big)=\tfrac{\alpha \mu}{2} \big( 1-\gamma -  \alpha L \lambda_a^2 \big).
\end{align*}
Hence,
\begin{align}
(I- H)^{-1} h 
&=
\frac{2}{\alpha \mu (1-\gamma -  \alpha L \lambda_a^2) }
 \begin{bmatrix}
 \tfrac{1-\gamma}{2} \vspace{0.5mm} 		 
&
 \alpha L \vspace{0.5mm} \\
\frac{\alpha^2 \mu \lambda_a^2}{2}
& 
\mu \alpha 
\end{bmatrix} 
\begin{bmatrix}
\frac{ \alpha^2 L \sigma^2 }{ n} \\
 2\alpha^2 L v_1^2 v_2^2 \lambda_a^2   \sigma^2
 + \tfrac{2\alpha^4 L^3 v_1^2 v_2^2 \lambda_a^2  \sigma^2}{\underline{\lambda_b}^2(1-\gamma)n}
\end{bmatrix}
\nonumber \\
&=  
\begin{bmatrix}
 \dfrac{  \alpha L \sigma^2 (1-\gamma)/n +
 4\alpha^2 L^2 v_1^2 v_2^2  \lambda_a^2   \sigma^2 +4 \alpha^4 L^4 v_1^2 v_2^2 \lambda_a^2  \sigma^2/(\underline{\lambda_b}^2(1-\gamma)n) }{  \mu (1-\gamma -  \alpha L \lambda_a^2) } \vspace{2mm}
 \\
\dfrac{ \alpha^3 \mu L  \lambda_a^2 \sigma^2/n  +
 4 \alpha^2 \mu L v_1^2 v_2^2   \lambda_a^2   \sigma^2
  +4 \alpha^4 \mu L^3 v_1^2 v_2^2 \lambda_a^2  \sigma^2/(\underline{\lambda_b}^2(1-\gamma)n) }{ \mu (1-\gamma -  \alpha L \lambda_a^2) }
 \end{bmatrix}.
\end{align}
Note that since $\lambda_a \leq 1$; hence, under step size condition \eqref{stepsize_pl_proof_2}, we have
\begin{align} 
 1-\gamma -  \alpha L \lambda_a^2 \geq  (1-\gamma)/2.
\end{align}
Combining the last two equations and using $L \geq \mu$, we get
\begin{align*}
  & \left\|(I- H)^{-1} h \right\|_1 \nonumber \\
& \leq     \dfrac{ \alpha L \sigma^2 (1-\gamma) /n +
 8\alpha^2 L^2 v_1^2 v_2^2  \lambda_a^2   \sigma^2 +8 \alpha^4 L^4 v_1^2 v_2^2 \lambda_a^2  \sigma^2/(\underline{\lambda_b}^2(1-\gamma)n) }{  \mu (1-\gamma)/2} \vspace{2mm}
 +
\dfrac{ 2 \alpha^3  L  \lambda_a^2 \sigma^2
}{ n(1-\gamma) } \\
 &=     \dfrac{ 2\alpha L \sigma^2}{\mu n} +\dfrac{
 16 \alpha^2 L^2 v_1^2 v_2^2  \lambda_a^2   \sigma^2}{\mu (1-\gamma)} +\dfrac{16 \alpha^4 L^4 v_1^2 v_2^2 \lambda_a^2  \sigma^2}{  \mu \underline{\lambda_b}^2(1-\gamma)^2 n } \vspace{2mm}
 +
\dfrac{ 2 \alpha^3  L  \lambda_a^2 \sigma^2
}{ n(1-\gamma) } \\
 & = O\left(\frac{\alpha L \sigma^2}{\mu n}
 +\frac{\alpha^2 L^2 v_1^2 v_2^2 \lambda_a^2  \sigma^2}{\mu (1-\gamma)} +
\dfrac{ \alpha^4 L^4 v_1^2 v_2^2 \lambda_a^2  \sigma^2}{  \mu \underline{\lambda_b}^2(1-\gamma)^2 n } \right) .
\end{align*}
Substituting the above into \eqref{tran_PL_0} and using \eqref{rho_H}, we obtain
\begin{align} \label{tran_PL}
\tfrac{1}{n}\sum_{i=1}^n \Ex \tilde{f}(x_i^{k})
& \leq  
(1-\tfrac{\alpha \mu}{2})^k c_0
+O\left(\frac{\alpha L \sigma^2}{\mu n}
+ \frac{\alpha^2 L^2 v_1^2 v_2^2 \lambda_a^2  \sigma^2}{\mu (1-\gamma)}
+ \frac{ \alpha^4 L^4 v_1^2 v_2^2 \lambda_a^2  \sigma^2}{  \mu \underline{\lambda_b}^2(1-\gamma)^2 n } \right).
\end{align}
 Note that if $\x^0=\one \otimes x^0$, then
\begin{align}
    c_0 =2\Ex \tilde{f}(\bar{x}^{0}) +
L v_1^2 v_2^2 \Ex \|\hat{\e}^{0}\|^2 \overset{\eqref{e_0_bound}}{\leq} 2\Ex \tilde{f}(\bar{x}^{0}) +
\frac{\alpha^2 L v_1^2 v_2^2 \zeta_0^2}{ \underline{\lambda_b}^2}.
\end{align}
Combining the last two inequalities we get \eqref{eq:PL_thm_1}; moreover, combining the step size conditions \eqref{stepsize_pl_proof_1} and \eqref{stepsize_pl_proof_2}, we get the step size condition in Theorem \ref{thm_PL_linear_conv}.

\end{proof}

\end{document}